\newtheorem{theorem}{Theorem}[section]
\newtheorem{corollary}[theorem]{Corollary}
\newtheorem{lemma}[theorem]{Lemma}
\newtheorem{prop}[theorem]{Proposition}
\newtheorem{defn}[theorem]{Definition}
\newenvironment{manualtheorem}[1]{%
  \IfBlankTF{#1}
    {}
    {}%
  \manualtheoreminner
}{\endmanualtheoreminner}
\numberwithin{equation}{section}
\numberwithin{theorem}{section}
\let\Im\relax
\DeclareMathOperator{\Im}{Im}
\DeclareMathOperator{\sgn}{sgn}
\DeclareMathOperator{\sinc}{sinc}
\DeclareMathOperator{\supp}{supp}
\DeclareMathOperator{\Conv}{Conv}
\DeclareMathOperator*{\Li}{Li}
\def\Li{\qopname\relax m{Li}}
\DeclareMathOperator{\Cov}{Cov}
\DeclareMathOperator{\Var}{Var}
\DeclareMathOperator{\RD}{RD}
\newcommand{\Prob}[0]{\text{Prob}}
\newcommand{\boldvec}[2][]{\langle \boldsymbol{#2}_{#1}\rangle}
\newcommand{\Ic}{\mathcal{I}}
\DeclareMathOperator{\srsd}{srsd}
\newlength{\bibitemsep}\setlength{\bibitemsep}{.2\baselineskip plus .05\baselineskip minus .05\baselineskip}
\newlength{\bibparskip}\setlength{\bibparskip}{0pt}
\let\oldthebibliography\thebibliography
\renewcommand\thebibliography[1]{
  \oldthebibliography{#1}
  \setlength{\parskip}{\bibitemsep}
  \setlength{\itemsep}{\bibparskip}
}
\begin{document}

\title{Repeated quantum backflow and overflow}

\author[1,2]{Christopher J. Fewster\thanks{\tt chris.fewster@york.ac.uk}}
\author[1]{Harkan J. Kirk-Karakaya\thanks{\tt harkan.kirk-karakaya@york.ac.uk}}
\affil{\small Department of Mathematics,
	University of York, Heslington, York YO10 5DD, United Kingdom}
\affil[2]{\small York Centre for Quantum Technologies, University of York, Heslington, York YO10 5DD, United Kingdom}
\date{November 4, 2025}





\maketitle

\begin{abstract}
 Quantum particles moving in one dimension with rightwards momentum can exhibit the surprising phenomenon of quantum backflow: a net probability transfer to the left-hand half-line over a finite time interval. 
 We generalise this phenomenon by considering the sum of probability differences for $M$ disjoint time intervals. 
 In classical mechanics, the total backflow lies in the interval $[-1,0]$ for all $M$, indicating rightwards probability transfer. 
 By contrast, we show that the maximum $M$-fold quantum backflow is positive and unbounded from above as $M$ increases, demonstrating the existence of repeated backflow.  
     
 For $M\ge 2$, a new phenomenon of `quantum overflow' is discovered: there are states whose total backflow is below $-1$, which is impossible for classical particles. The extent of the backflow and overflow effects is described by a hierarchy of backflow and overflow functions and constants, of which the $M=1$ backflow constant was first studied by Bracken and Melloy. Limiting cases of the backflow and overflow functions are studied, including cases in which two disjoint intervals merge. 
    
 The analytical results are supported by detailed numerical investigations. Using numerical acceleration methods, we obtain a new estimate of the Bracken--Melloy constant of $0.0384506$, slightly lower than the previously accepted value of $0.038452$. 
\end{abstract}

\maketitle

\section{Introduction}
 
    Quantum backflow (QB) is a counterintuitive phenomenon in which a free quantum mechanical particle on the line, in a state with purely rightwards momentum, can exhibit a temporary probability transfer to the left-hand half-line. Here, we understand `purely rightwards momentum' or `purely positive momentum' to mean that the measured value of the momentum would be nonnegative with probability $1$, or equivalently that the support of the wavefunction in momentum space is contained in the positive half-line. The existence of 
    states exhibiting backflow has been known since at least 1969, when Allcock \cite{allcock1969time} showed that there exist states with purely positive momentum whose quantum mechanical current at the origin can be arbitrarily negative.  
    
    The first detailed study of QB was conducted in 1994 by Bracken and Melloy \cite{MR1280379}. They showed that for states that exhibit QB the probability backflow is bounded by a state independent dimensionless value $0<c_\textnormal{BM} < 1$, independent of $\hbar$, which they calculated to be $c_\textnormal{BM} \approx 0.04$~\cite{MR1280379}, by showing that $c_\textnormal{BM}$ is the largest value in the spectrum of a bounded self-adjoint operator. There have since been two further calculations of $c_\textnormal{BM}$ to date, first by Eveson, Fewster and Verch ($c_\textnormal{BM} \approx 0.03845(2)$)\cite{MR2119354} and shortly thereafter by Penz~\emph{et al.}  ($c_\textnormal{BM} \approx 0.0384517)$ \cite{MR2198971}. The first non-trivial upper bound on the backflow constant was calculated only recently by Trillo, Le and Navascu\'es to be $c_\textnormal{BM} < 0.072$ \cite{trillo2023quantum}. Due to the small value of $c_\textnormal{BM}$, experimental verification of QB is an ongoing effort. Goussev has shown that when considering an angular analogue of QB for a particle on a ring, the associated maximum possible backflow is larger than in the linear case with $c_\textnormal{ring} \approx 2.6 \: c_\textnormal{BM}$ \cite{MR4229055}. Angular QB has also been studied in the case of a charged massless fermion \cite{MR4576897}. Palmero~\emph{et al.} have shown how QB could be experimentally verified by measuring density fluctuations in a Bose-Einstein condensate \cite{palmero2013detecting}. Bracken has explored a whole family of QB related problems where the quantum states considered have their momentum restricted to $p > p_0$ for some fixed $p_0$ not necessarily positive \cite{Bracken_2021}. Other variations of QB that have been considered include but are not limited to: QB in two dimensions \cite{MR4572336}, QB in a relativistic setting \cite{ashfaque2019relativistic}, the relation between QB and the local wave number \cite{MR2726698} and QB across a black hole horizon \cite{MR4355013}. Finally, Bostelmann, Cadamuro and Lechner~\cite{MR3711363} studied the extent to which scattering states exhibit the backflow effect for general short-range potentials, while Yearsley and Halliwell showed that the dependency of the maximal QB on $\hbar$ reappears \cite{yearsley2013introduction} once a realistic measurement apparatus is considered.

     Previously, QB has only ever been considered over a single fixed time interval $[t_1,t_2]$. As first shown by Bracken and Melloy, the associated spectrum of backflow values is independent of $t_1$ and $t_2$ \cite{MR1280379}. In this paper (following a suggestion by Reinhard Werner) we will consider the question of quantum backflow over multiple disjoint time intervals. Given an integer $M\geq  1$ and a strictly increasing list of times 
     $\boldvec[M]{t}=\langle t_1,\ldots,t_{2M}\rangle$,
     we study the total backflow 
     \begin{equation}
       \Delta^{(M)}_{\boldvec[M]{t}}(\psi) =
       \sum_{j=1}^M \left(\Prob_\psi(X<0|t=t_{2j})-
       \Prob_\psi(X<0|t=t_{2j-1})   \right) 
     \end{equation}
     exhibited by quantum state $\psi$ over the union of the time intervals
     $[t_1,t_2],...,[t_{2M-1},t_{2M}]$. A basic question is whether or not the total backflow is also limited by the Bracken--Melloy constant, or whether increasing $M$ can increase the amount of backflow beyond $c_{\textnormal{BM}}$. In other words, are there states which exhibit 
     repeated periods of significant backflow? Further, as a sum of $M$ probability differences, $\Delta^{(M)}_{\boldvec[M]{t}}(\psi)$ is, \emph{a priori}, bounded between $\pm M$. Can the maximum total backflow grow unboundedly as $M\to\infty$? Are there states in which one finds
     $\Delta^{(M)}_{\boldvec[M]{t}}(\psi)<-1$ -- i.e., in which 
     more than unit probability is transferred to the right-hand half-line, in total, over the given intervals? If so, is this effect unbounded as $M\to\infty$? Are the answers to the previous questions different for classical particle mechanics?
     
     We will answer all these question affirmatively. 
     For $M$ time intervals of equal duration, separated by gaps of the same duration, we show that there are positive-momentum states $\psi_M$ so that
     $\Delta^{(M)}_{\boldvec[M]{t}}(\psi_M)\to \infty$, at least as fast 
     as $\mathcal{O}(M^{1/4})$. Moreover, we discover the new phenomenon that, for $M\ge 2$, there are indeed states with $\Delta^{(M)}_{\boldvec[M]{t}}(\psi)<-1$, an effect that we call \emph{quantum overflow}. For the intervals of equal duration separated by gaps of the same duration, we will prove that there are positive-momentum states $\varphi_M$
     with $\Delta^{(M)}_{\boldvec[M]{t}}(\varphi_M)\to -\infty$, at least as fast as $\mathcal{O}(M^{1/4})$. Thus, both quantum backflow and overflow are unbounded in the limit of large $M$. By contrast,
     the total backflow of a classical particle is constrained to $[-1,0]$, independent of $M$. Our analytical results are complemented by a 
     numerical investigation of the magnitude of the backflow and overflow effects for $M\leq 4$. As has been pointed out by one of the referees, overflow could be studied profitably without the restriction to positive momentum. Here, we keep the restriction because it turns out that there is some interplay between the backflow and overflow phenomena in the positive momentum case.  The exploration of overflow in greater generality is left for future work.

Before continuing, we give a preview of some numerical results that are discussed in greater depth below. In Section~\ref{sec:numerical_methodology_results}, we describe how sequences of states may be found that display significant backflow or overflow for a given set of intervals. Consider in particular $M$ equally spaced intervals of an equal duration that will be chosen as the unit of time in the following discussion. Fixing the unit of mass to be twice the mass of the quantum particle, the unit of length is then fixed so that $\hbar=1$. The states $\chi^{(M)}_\textnormal{back}$ are drawn from calculations reported in Section~\ref{sec:numerical_methodology_results} and exhibit $M$-fold backflow over the $M$ time intervals $[-M+2k+1/2, -M+2k+3/2]$ for $0\le k\le M-1$,
with total backflow $0.0350766$, 
$0.0543496$, $0.0684048$, and 
$0.0797003$ for $M=1,2,3,4$ respectively. Fig.~\ref{fig:backvecJM1_4}  plots the probability flux at $x=0$ against time, and shows that the flux is negative in the stated time intervals. The position probability density plots for these states (Figs.~\ref{fig:backvecsM13PR} and~\ref{fig:backvecsM24PR} in Section~\ref{sec:numerical_methodology_results}) display $M+1$ main peaks (some with substructure), symmetrically distributed about $x=0$. 

The time evolution of the position probability density in the $M=2$ case is illustrated in Fig.~\ref{fig:backvecM2heatmap} for times $-1.5\le t\le 1.5$.   
In broad outline, the three main peaks move to the right, while nonetheless
reshaping so that probability flows to the left-hand half-line during the time intervals $[-1.5,-0.5]$ and $[0.5,1.5]$. At $t=-1.5$ the leading peak lies in $x>0$ and is higher than the other two; by $t=-0.5$, the leading peak has diminished while the central peak and the trailing peak, still in $x<0$, have grown. By $t=0.5$, the central peak is dominant, lying in $x>0$, and over the time interval $[0.5,1.5]$ this peak diminishes while the trailing peak, still in $x<0$, grows. These reshapings are the result of constructive and destructive interference between the slow-moving main peaks and higher-frequency pulses that move through at a higher velocity. As it evolves, the wavepacket spreads on a characteristic timescale $t_S = m(\Delta X)/\Delta P$, where $\Delta X$ and $\Delta P$ are the dispersions of position and momentum at $t=0$. For $\chi^{(2)}_{\textnormal{back}}$, we computed $t_S\sim 1.7$, showing that backflow takes place on comparable timescales to $t_S$, and indeed Fig.~\ref{fig:backvecJM1_4} shows that the probability flux is positive for times $|t|\gtrsim t_S$.

\begin{figure}
    \centering
    \includegraphics[width=0.8\textwidth]{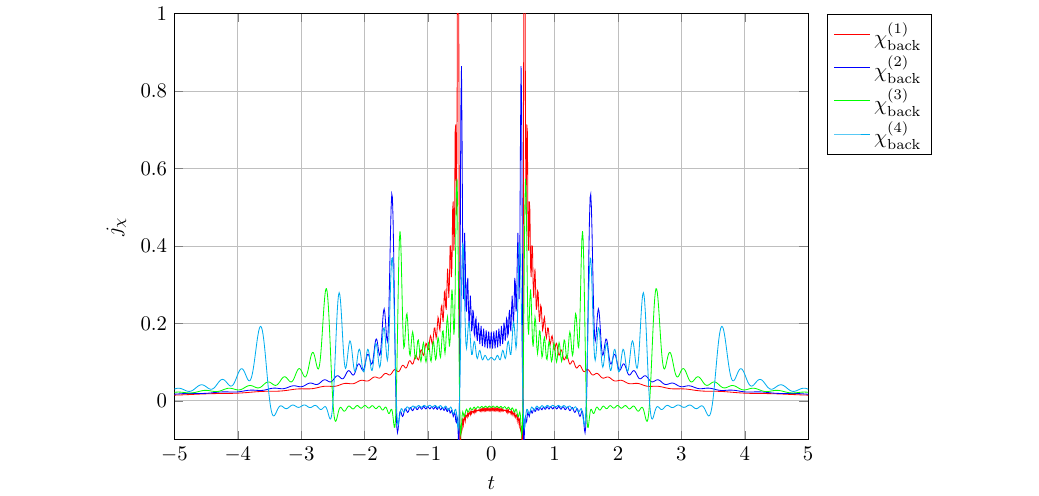}
    \caption{Probability flux at $x=0$ of the $M$-fold backflow states $\chi_{\textnormal{back}}^{(M)}$ computed in Section~\ref{sec:numerical_methodology_results}.}
    \label{fig:backvecJM1_4}
\end{figure}

\begin{figure}
    \centering
    \includegraphics[width=0.8\textwidth]{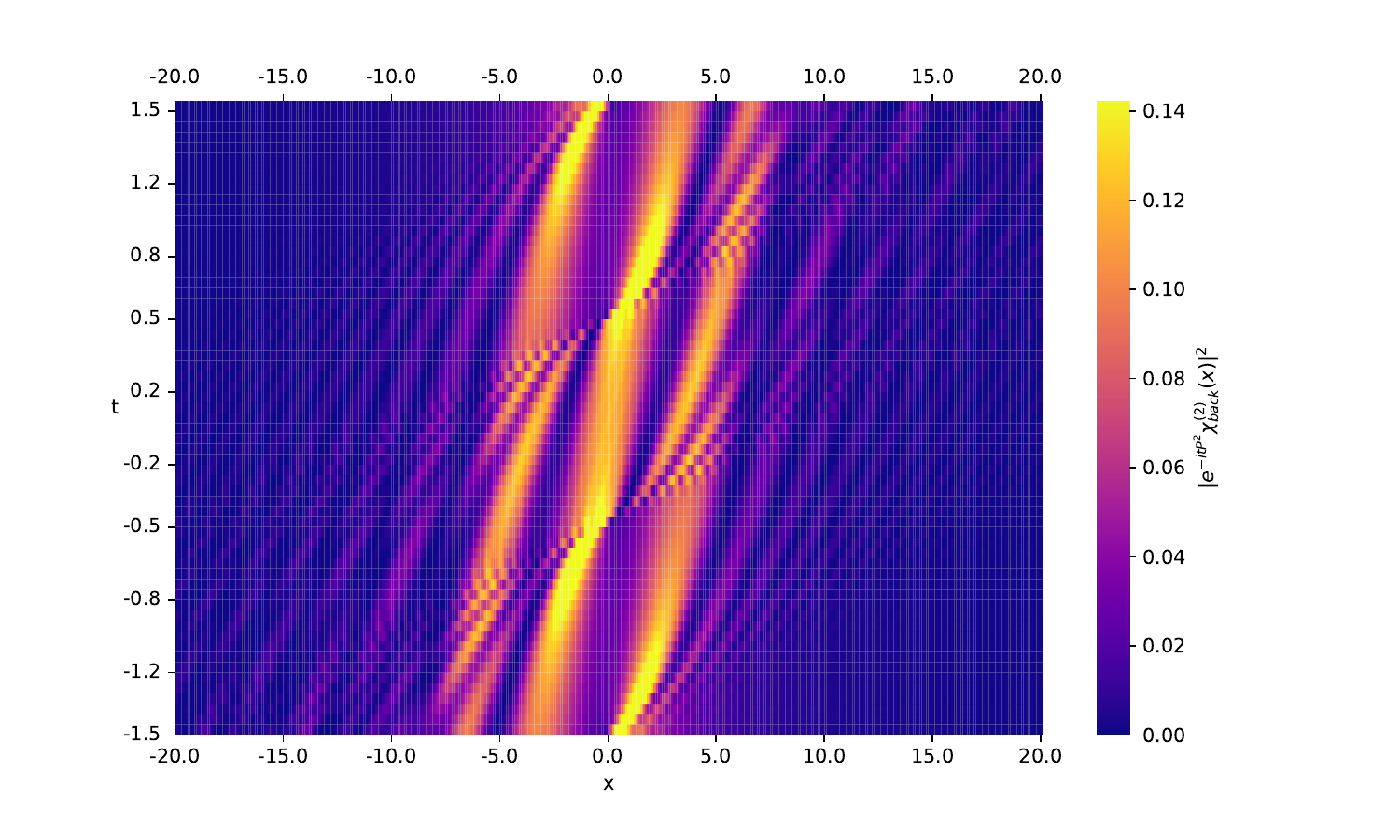}
    \caption{Time evolution of the position probability density for the $M=2$ backflow state $\chi^{(2)}_{\textnormal{back}}$.}
    \label{fig:backvecM2heatmap}
\end{figure}

The calculations in Section~\ref{sec:numerical_methodology_results} also provide states $\chi^{(M)}_\textnormal{over}$ exhibiting significant overflow on the $M$ equally spaced intervals of equal (unit) duration given above, with total backflow equal to $-1.00213$
$-1.00628$, and
$-1.01122$ for $M=2,3,4$ respectively.
Fig.~\ref{fig:overvecM2heatmap} shows the time evolution of the position probability density evolution for $\chi^{(2)}_\textnormal{over}$, which has spreading time  $t_S\sim 1.4$. At $t=-1.5$ the bulk of the probability distribution lies in $x<0$. During the initial time period $[-1.5,-0.5]$ the slow-moving main peak and faster high-frequency peaks move forwards and net probability flows into $x>0$. At $t=-0.5$ the distribution has two main peaks with the dominant one in $x>0$; during the time interval $[-0.5,0.5]$, these peaks reshape and probability flows backwards into the negative half-line, before again flowing forwards in the final period $[0.5,1.5]$. Plots of the flux (given in Supplementary Section~\ref{sec:pos_plots}) show that the overflow states for $M=2,3,4$ all exhibit backflow between the overflow intervals. This is in line with the following elementary argument. Note that 
\begin{equation}
\Delta^{(2)}_{\langle t_1,t_2,t_3,t_4\rangle}(\psi) = \Delta^{(1)}_{\langle t_1, t_4\rangle}(\psi)-\Delta^{(1)}_{\langle t_2, t_3\rangle}(\psi)\ge -1-\Delta^{(1)}_{\langle t_2, t_3\rangle}(\psi),
\end{equation}
because overflow does not occur for $M=1$. Thus, if $\psi$ exhibits $2$-fold overflow for intervals $[t_1,t_2]$ and $[t_3,t_4]$, then $\Delta^{(2)}_{\langle t_1,t_2,t_3,t_4\rangle}(\psi) <-1$ and consequently $\Delta^{(1)}_{\langle t_2, t_3\rangle}(\psi)>0$, which shows that $\psi$ also exhibits backflow over $[t_2,t_3]$. 

\begin{figure}
    \centering
    \includegraphics[width=0.8\textwidth]{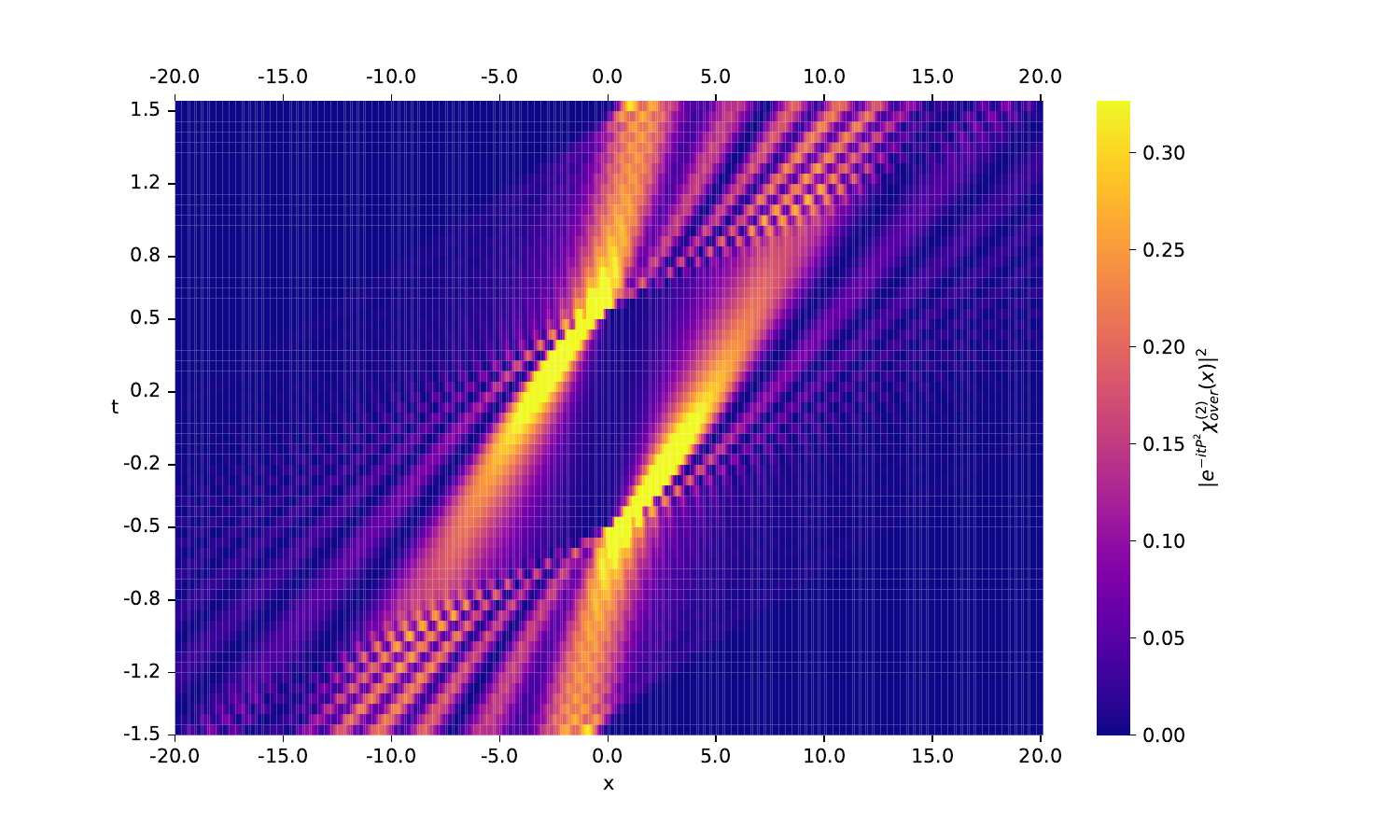}
    \caption{Time evolution of the position probability density for the $M=2$ overflow state $\chi^{(2)}_{\textnormal{over}}$.}
    \label{fig:overvecM2heatmap}
\end{figure}

To gain some understanding into why $M$-fold backflow can exceed the Bracken--Melloy limit, consider a normalised positive momentum state $\psi$ that exhibits backflow $\Delta = \Delta^{(1)}_{\langle t_1,t_2\rangle}(\psi)=(1-\epsilon)c_\textnormal{BM}>0$ in time interval $[t_1,t_2]$, where $0\le \epsilon\ll 1$. Then the time-evolved state $\psi_\tau$ will exhibit the same amount of backflow over the interval $[t_1-\tau,t_2-\tau]$. Assuming $\tau$ is sufficiently large that the two intervals are well-separated, a linear combination
$\alpha\psi+\beta\psi_\tau$ will exhibit a total backflow of approximately $(|\alpha|^2+|\beta|^2)\Delta^{(1)}_{\langle t_1,t_2\rangle}(\psi)$ across the two intervals, also assuming that cross terms between $\psi$ and $\psi_\tau$ can be neglected. Now $\alpha$ and $\beta$ must be chosen so that the linear combination is normalised. If $\psi$ and $\psi_\tau$ were orthogonal, this would require $|\alpha|^2+|\beta|^2=1$ and the total $2$-fold backflow would be no more than $\Delta^{(1)}_{\langle t_1,t_2\rangle}(\psi)$. However, linear combinations in which some destructive interference occurs will require a normalisation with $|\alpha|^2+|\beta|^2>1$ whereupon the $2$-fold backflow exceeds $\Delta^{(1)}_{\langle t_1,t_2\rangle}(\psi)$. Taking $\epsilon \to 0$, it is plausible that $c_\textnormal{BM}$ can be exceeded by $2$-fold backflow. Although this argument leaves much to be desired, it illustrates a point that will be proved rigorously later on. In Section~\ref{sec:QM}, we will also give a simple (and precise) argument why the existence of nontrivial backflow implies the existence of nontrivial $2$-fold overflow, approaching the limit $-1-c_\textnormal{BM}$.

     In more detail, we show in Section \ref{M-fold_Backflow} that, for all $M$, the classical analogue of $\Delta^{(M)}_{\boldvec[M]{t}}$ is constrained to the negative unit interval $[-1,0]$ for classical statistical ensembles with positive momenta. Passing to quantum mechanics, a straightforward adaptation of arguments in~\cite{MR1280379,MR2419566} show that for each strictly increasing list of times $\boldvec[M]{t}$, there is a self-adjoint bounded operator $B^{(M)}_{\boldvec[M]{t}}$ on $L^2(\mathbb{R}^+,dk)$ so that
     \begin{equation}
     \Delta^{(M)}_{\boldvec[M]{t}}(\psi)=
     \ip{\hat{\psi}}{B^{(M)}_{\boldvec[M]{t}}\hat{\psi}}
     \end{equation}
     for normalised positive-momentum state $\psi\in L^2(\mathbb{R},dx)$, where $\hat{\psi}\in L^2(\mathbb{R}^+,dk)$ is the Fourier transform of $\psi$, i.e., its momentum representation. It follows that the set of possible values for the total backflow is the numerical range of the operator $B^{(M)}_{\boldvec[M]{t}}$, and that the supremum and infimum of these values are given by the maximum and minimum points of the spectrum $\sigma(B^{(M)}_{\boldvec[M]{t}})$, generalising the original insight of~\cite{MR1280379}. The unitary equivalence of operators $B^{(M)}_{\boldvec[M]{t}}$ under simultaneous uniform translation of all times, and under simultaneous uniform dilation, implies that the maximum and minimum of $\sigma(B^{(M)}_{\boldvec[M]{t}})$ are functions of the successive ratios of successive differences of the $t_j$'s. That is, one has bounds
     \begin{equation}
         b_{\textnormal{over}}^{(M)}(\srsd\boldvec[M]{t})\leq
         \Delta^{(M)}_{\boldvec[M]{t}}(\psi)\leq
         b_{\textnormal{back}}^{(M)}(\srsd\boldvec[M]{t})
     \end{equation}
    for all positive-momentum states $\psi$,
    where 
    \begin{equation}
        \srsd \boldvec[M]{t}=((t_3-t_2)/(t_2-t_1),\ldots, (t_{2M}-t_{2M-1})/(t_{2M-1}-t_{2M-2}))
    \end{equation}
    and  $b_{\textnormal{back/over}}^{(M)}:\mathbb{R}_{>0}^{2M-2}\to \mathbb{R}$ are \emph{backflow/overflow functions} defined so that
    \begin{equation}
         b_{\textnormal{back}}^{(M)}(\srsd\boldvec[M]{t})
         =\max 
         \sigma(B^{(M)}_{\boldvec[M]{t}}), \qquad
         b_{\textnormal{over}}^{(M)}(\srsd\boldvec[M]{t})
         =\min
         \sigma(B^{(M)}_{\boldvec[M]{t}}). 
    \end{equation}
    The supremum and infimum of $b_{\textnormal{back}}^{(M)}$ and $b_{\textnormal{over}}^{(M)}$ over $\mathbb{R}_{>0}^{2M-2}$ define sequences of backflow 
     and overflow constants, $c^{(M)}_{\textnormal{back}}$ and $c^{(M)}_\textnormal{over}$. For $M=1$ one has $c^{(1)}_\textnormal{back}=c_{\textnormal{BM}}$ and $c^{(1)}_\textnormal{over}=-1$.
     
     Next, in Section~\ref{the_spectrum}, we prove two key results regarding backflow and overflow. First, we show that the backflow and overflow constants satisfy the bounds 
     \begin{equation}
         c^{(M)}_\textnormal{back}\leq M c_{\textnormal{BM}}, \qquad
         c^{(M)}_\textnormal{back}\geq -1-(M-1) c_{\textnormal{BM}}
     \end{equation}
     (see Theorem~\ref{thm: spectral_outer_bounds} for the precise statement and more detail). Second, by considering the example of equal duration intervals separated by gaps of the same duration and using simple sequences of trial positive-momentum states, we prove that
    \begin{equation}
        c^{(M)}_\textnormal{back}\to\infty,\qquad c^{(M)}_\textnormal{over}\to -\infty
    \end{equation}
    as $M\to\infty$, at least as fast as $\mathcal{O}(M^{1/4})$. See Theorem~\ref{thm: spectral_asymptotic_unboundedness} for the detailed statement.

     In Section~\ref{sec:limiting cases}, we study various limiting cases of the backflow functions as some of their parameters are taken to zero or infinity. 
     The main result here, Theorem~\ref{thm:blims}, provides an understanding of the relation between backflow and overflow functions for different $M$, and a proof that the backflow and overflow constants are monotonic in $M$, and obey
     \begin{equation}
     c^{(M+L)}_\textnormal{back}\geq -1 - c^{(M)}_\textnormal{over}, \qquad c^{(M+L)}_\textnormal{over}\leq -1 - c^{(M)}_\textnormal{back}
     \end{equation}
     for all $M,L\in\mathbb{N}$. 
     Various consequences are drawn. In particular it is shown that when two time intervals merge, the spectra of the corresponding backflow operators do not converge to the spectrum of the backflow operator for the merged intervals. 
     For example, as the limit is approached, the overflow function $b_{\textnormal{over}}^{(2)}$ tends to $-1-c_\textnormal{BM}$, while once the intervals have merged the appropriate overflow function is $b_{\textnormal{over}}^{(1)}\equiv -1$. Thus an arbitrarily small `recovery time' between the time intervals, during which the state exhibits backflow, is sufficient to accommodate states with maximal $2$-fold overflow. Technically, the failure of spectra to converge is related to the fact that the backflow operators only converge in the strong topology rather than in norm topology. However, if one restricts attention to positive-momentum states with a fixed momentum cutoff, the maximum backflow/overflow do converge to the values for a single interval as the intervals merge. This is significant because numerical tests will typically involve such a cutoff, and would yield misleading results unless the cutoff is increased  as the merger is approached.
     
     In Section~\ref{section: numerical_calculation}, we present a numerical investigation of
     the situation of $M$ equally spaced backflow periods of equal length. In particular, we aim to estimate the 
     values of $b^{(M)}_\textnormal{back/over}(1,1,\ldots,1)$ for $1 \leq M \leq 4$. Our method is to study the compression
     of the appropriate $M$-fold backflow operator to the $(N+1)$-dimensional subspace 
     $V_{N,a,\delta} = \textnormal{span}(\psi_{0,a,\delta},\dots,\psi_{N,a,\delta})$, where $\psi_{n,a,\delta}(p)\propto p^{n+\delta}e^{-ap}$ are normalised  
     $L^2(\mathbb{R}^+,dp)$ vectors depending on parameters 
     $a$, $\delta$. The maximum/minimum eigenvalue $\lambda_\textnormal{back/over}^{(M)}(a,\delta; N)$ of the compression can be obtained by solving a generalised eigenvalue problem relative to the Gram matrix of 
     $\psi_{0,a,\delta},\dots,\psi_{N,a,\delta}$; high precision is required in the calculation because 
     the minimum eigenvalue of the Gram matrix tends to $0$ geometrically as $N\to\infty$. Fortunately, we are able to derive closed form expressions for all the matrix elements required in terms of special functions implemented in the arbitrary precision library FLINT~\cite{FLINT,MR4000439}. The upshot is that the values
     $\lambda_\textnormal{back/over}^{(M)}(a,\delta; N)$ are 
     computed accurate to tens (and sometimes hundreds) of decimal places. 
     
     For choices of $a$ and $\delta$ that we motivate, we obtain sequences $\lambda_\textnormal{back/over}^{(M)}(N)$ for $1 \leq M \leq 4$ and $N\leq 500$. Up to numerical precision, these sequences converge monotonically to 
     $b^{(M)}_{\textnormal{back/over}}(1,1,\ldots,1)$ as $N\to\infty$, thus providing rigorous bounds on these quantities. The sequences converge quite slowly, but
     numerical acceleration techniques can be used to estimate the limiting values (albeit losing the guarantee of monotone convergence).  
     Our acceleration techniques assume that $\lambda_\textnormal{back/over}^{(M)}(N)$ has an asymptotic expansion in powers of $N^{-1/2}$ -- this is supported  by our results, except in the case of $\lambda_\textnormal{over}^{(1)}$, which converges rapidly to $c_\textnormal{over}^{(1)}=-1$.
     
     The acceleration methods lead to conjectured upper and lower bounds for $b^{(M)}_\textnormal{back/over}(1,1,\ldots,1)$ which can be found in Tables~\ref{tab: backflow_const_bounds} and~\ref{tab: overflow_const_bounds}.
     In particular, our conjectured value for $c_{\textnormal{BM}}$ is $0.0384506$ to 6 significant figures. We will discuss the relation to earlier calculations of~\cite{MR1280379,MR2119354} in Section~\ref{sec:numerical_acceleration}.
     
     We also compute the normalised eigenvectors $\psi_\textnormal{back/over}^{(M)}(N; \cdot) \in V_{N,a,\delta}\subset L^2(\mathbb{R}^+,dp)$ of the compressed backflow operators corresponding to eigenvalues $\lambda^{(M)}_\textnormal{back/over}(N)$.  For $M=1$ the overflow eigenvectors $\psi_\textnormal{over}^{(1)}(N; \cdot)$ resemble Gaussians roughly peaked at $1.5N$ and with broadening width -- see Figure~\ref{fig:psi_over_1}.
     We argue that this behaviour is in line with a result proved in~\cite{MR2419566} that $-1$ lies in the essential spectrum of the single backflow operator. The remaining eigenvectors have an oscillatory structure that becomes more complex as $M$ increases, and appear to decay as $\mathcal{O}(p^{-3/4})$ for $p\to\infty$. The backflow eigenvectors also diverge as $p\to 0$, apparently as $\mathcal{O}(p^{-1/4})$. We  present plots of $\psi_\textnormal{back/over}^{(M)}(N;p)$ for $N=500$, where in most cases the wavefunctions are multiplied by a factor $p^{3/4}$ to remove the decay at large momentum, thus more clearly showing the oscillatory structure. Our results also suggest that $b^{(M)}_{\textnormal{back}}(1,\ldots,1)$ for $1\le M\le 4$ and $b^{(M)}_{\textnormal{over}}(1,\ldots,1)$ for $2\le M\le 4$ are isolated eigenvalues of the $M$-fold backflow operator.
We conclude in Section~\ref{sec:conclusion} with a summary and outlook. The Supplementary Material comprises appendices giving technical details deferred from the main text and additional plots.

\section{$M$-fold backflow}
\label{M-fold_Backflow}
\subsection{Classical statistical mechanics}
Consider a classical ensemble of particles $\mathcal{C}$ of mass $\mu$ under free evolution on the line.  
Let the distribution of particles on phase space $\mathcal{P}=\mathbb{R}^2$ at time $t$ be given by a probability measure $\rho_t$ so that a randomly chosen member of the ensemble has phase space position in Borel subset $S\subseteq \mathcal{P}$ at time $t$ with probability
\begin{equation}
\Prob_\rho((X,P)\in S\mid t)=\rho_t(S),
\end{equation}
where we write $\rho=\rho_0$ for the state at time $t=0$. The ensemble average of an observable $f\in C(\mathcal{P})$ at time $t$ is given by 
$\int f\, d\rho_t = \int f\circ\tau_t \,d\rho$ by Liouville's theorem, where 
\begin{equation}
    \tau_t(x,p)=(x+pt/\mu,p)
\end{equation}
is the 
forwards Hamiltonian time evolution through time $t$. Correspondingly, we find that $\rho_t(S)=\rho(\tau_t^{-1}(S))$. If the ensemble only contains particles with non-negative momentum then one has 
\begin{equation}
    \rho(S)= \rho(S\cap(\mathbb{R}\times [0,\infty)))
\end{equation}
for any Borel subset $S\subseteq\mathcal{P}$, and consequently,
\begin{align}
    \Prob_\rho(X\in (-\infty,0)\mid t) &= \rho_t( (-\infty,0)\times \mathbb{R}) \nonumber \\
    &= 
    \rho(\tau^{-1}_t((-\infty,0)\times \mathbb{R}))\cap (\mathbb{R}\times [0,\infty)) \nonumber\\
     &= \Prob_\rho((X,P)\in  \{(x,p)\in\mathcal{P}: x<-pt/\mu,~p\ge 0\}).
\end{align}
The difference between the probabilities of finding a randomly chosen particle on the left-hand half line at time $t'$ and an earlier time $t$ is therefore
\begin{equation}
   \Prob_\rho(X\in (-\infty,0)\mid t' )- \Prob_\rho(X\in (-\infty,0)\mid t )=-
    \Prob_\rho((X,P)\in S_{t,t'}),
\end{equation}
where
\begin{equation}
S_{t,t'}=   \{(x,p)\in\mathcal{P}: -pt'/\mu\le x<-pt/\mu ,~p\ge 0\}
\end{equation}
for any $t'>t$. Note that we may replace $p\ge 0$ by $p>0$ in the formula $S_{t,t'}$ without loss
because $S_{t,t'}\cap (\mathbb{R}\times\{0\})=\emptyset$.
Accordingly, $-\rho(S_{t,t'})$ measures the probability backflow between times $t$ and $t'$; as it is clearly nonpositive, we see that there is no classical backflow. 

Now consider adding together the backflow for a set of disjoint time intervals. Specifically, 
given $M \in \mathbb{N}$ and times $t_1 < t_2 < \cdot \cdot \cdot < t_{2M-1}<t_{2M}$, the total amount of probability backflow is given by 
\begin{equation}
\begin{aligned}
\label{classical_backflow}
    \Delta^{(M)}_\textnormal{classical}(\rho)&:= \sum_{j=1}^M\left( \Prob_\rho(X \in (-\infty,0)\mid t_{2j})-\Prob_\rho(X \in (-\infty,0)\mid t_{2j-1})\right)\\
    &=-\sum_{j=1}^M\Prob_\rho(X\in S_{t_{2j-1},t_{2j}}) 
=-\Prob_\rho\left(X\in \bigcup_{j=1}^M S_{t_{2j-1},t_{2j}}\right)
\end{aligned}
\end{equation}
because the sets $S_{t_{2j-1},t_{2j}}$ are disjoint
for distinct $j$.

It follows that, for a classical ensemble with non-negative momentum, the total amount of probability backflow over multiple disjoint time periods is bounded by
\begin{equation}\label{eq:classicalbackflowbound}
    -1 \leq \Delta^{(M)}_\textnormal{classical}(\rho) \leq 0
\end{equation}
for every $M$. In particular, a classical ensemble cannot exhibit  positive probability backflow, and the probability transfer
in the forward direction is bounded by unity.  By contrast, it is well known that quantum particles can exhibit backflow over a single time interval, thus violating the upper bound in the analogue of~\eqref{eq:classicalbackflowbound}. One of the main results of this paper is to see that the violation increases with $M$, and that the lower bound of~\eqref{eq:classicalbackflowbound} is also violated in quantum theory for $M\ge 2$.

We remark that if one drops the restriction $p>0$ then a similar argument to the above shows that $\Delta^{(M)}_\textnormal{classical}(\rho)$ can be written as a difference of two probabilities and therefore lies $[-1,1]$. Thus, while backflow is certainly possible (but unsurprising) in this case, classical overflow is always forbidden. 

\subsection{Quantum mechanics}\label{sec:QM}

We turn to the quantum case, using units in which $\hbar = 1$. Consider the motion of a free quantum particle of mass $\mu$ with normalised state vector $\psi_t\in L^2(\mathbb{R}^+)$ at time $t$, whose dynamics
are governed by the free Schr\"{o}dinger equation
\begin{equation}
\label{SchrodingerEquation}
     i \partial_t \psi_t   = -\frac{1}{2\mu} \partial_x^2 \psi_t
\end{equation}
and initial condition $\psi_0=\psi$, so $\psi_t=e^{-ip^2t}\psi$, where $p=-i\partial_x$ and we have chosen units so that
$\mu=\tfrac{1}{2}$, as we shall do from now on for simplicity. The states with nonnegative momentum form the subspace 
\begin{equation}
    \mathscr{H}_+=\{\psi\in L^2(\mathbb{R},dx): \supp \hat{\psi}\subseteq [0,\infty)\},
\end{equation}
where 
\begin{equation}
\hat{\psi}(p) = \frac{1}{\sqrt{2\pi}}\int_{\mathbb{R}} dx\, e^{-ipx}\psi(x)
\end{equation}
is the Fourier transform of $\psi$. 

Let $t_1<t_2$. The probability backflow exhibited by the normalised $t=0$ state $\psi$ between times $t_1$ and $t_2$ is quantified by
\begin{equation}
\Delta_{\langle t_1,t_2\rangle}(\psi) := \Prob_\psi(X<0\mid t_2)-\Prob_\psi(X<0\mid t_1)  .
\end{equation}
where $\Prob_\psi(X\in S\mid t)=\int_S dx\,|\psi_t(x)|^2$ is
the probability of measuring the position of the particle to lie in $S$ at time $t$. There are simple examples of
states $\psi\in\mathscr{H}_+$ for which $\Delta_{\langle t_1,t_2\rangle}(\psi)>0$, showing that quantum mechanical particles can exhibit nontrivial backflow (see e.g.,~\cite{MR3128861}) 
whereas the classical backflow quantity for particles always lies in $[-1,0]$. 

We are interested in the range of values assumed by
$\Delta_{\langle t_1,t_2\rangle}(\psi)$ as $\psi$ varies over
normalised states of $\mathscr{H}_+$. As a difference of
probabilities, its range is contained in $[-1,1]$. Moreover,
because $\mathscr{H}_+$ is invariant under both the free time evolution and dilations, the range is invariant
under $\langle t_1,t_2\rangle\mapsto \langle t_1+\tau,t_2+\tau\rangle$ and 
$\langle t_1,t_2\rangle\mapsto \langle \lambda t_1,\lambda t_2\rangle$ for all $\tau\in\mathbb{R}$ and all $\lambda>0$.
It follows that the range is independent of both $t_1$ and $t_2$, provided $t_1<t_2$, as was observed by Bracken and Melloy~\cite{MR1280379}.
As noted in~\cite{MR2419566}, Dollard's lemma (Lemma 4 of \cite{dollard1969scattering}) shows that
$\Prob_\psi(X<0\mid t_1)\to 1$ as $t_1\to -\infty$ and
$\Prob_\psi(X<0\mid t_2)\to 0$ as $t_2\to +\infty$  
for any fixed $\psi\in \mathscr{H}_+$, thus giving
$\Delta_{\langle t_1,t_2\rangle}(\psi)\to -1$ when both limits
are taken. Thus $-1$ belongs to the closure of the 
range of $\Delta_{\langle t_1,t_2\rangle}$ for any $t_1<t_2$.

More detailed information can be found by reformulating backflow in terms of operators. For normalised $\psi\in\mathscr{H}_+$, a calculation due to Bracken and Melloy \cite{MR1280379} gives $\Delta_{\langle t_1,t_2\rangle}(\psi)$ as a quadratic form  
\begin{equation}
\Delta_{\langle t_1,t_2\rangle}(\psi) =  -\frac{1}{2\pi i}\int_0^\infty dp \: \int_0^\infty dq \:   \frac{e^{i(p^2-q^2)t_{2}}-e^{i(p^2-q^2)t_{1}}}{p-q}\hat{\psi}^\ast(p)\hat{\psi}(q),
\end{equation}
which can be written as
\begin{equation}\label{eq:Delta_from_B}
    \Delta_{\langle t_1,t_2\rangle}(\psi) = \ip{\hat{\psi}}{B_{\langle t_1,t_2\rangle} \hat{\psi}}.
\end{equation}
Here, the operator $B_{\langle t_1,t_2\rangle}$ on $L^2(\mathbb{R}^+)$ has the following properties, which are
established in Section~\ref{appendix:single_backflow} of the Supplementary Material, drawing on arguments in~\cite{MR2198971}. 
\begin{theorem}\label{thm:B_facts}
    For any $t_1<t_2$, $B_{\langle t_1,t_2\rangle}$ is
    a bounded self-adjoint operator with
    $\|B_{\langle t_1,t_2\rangle}\|=1$. There is a unitary equivalence between $B_{\langle t_1,t_2\rangle}$ and the operator $C \in\mathcal{B}(L^2(\mathbb{R}^+,dq))$ with action
   \begin{equation}\label{eq:J1def}
 (C\varphi)(p)=-\frac{1}{2\pi }\int_{0}^\infty dq \frac{
    \sin(p-q)}{p-q}\Big[\Big(\frac{p}{q}\Big)^{1/4}+\Big(\frac{p}{q}\Big)^{-1/4}\Big]\varphi(q)
\end{equation} 
on $\varphi\in L^2(\mathbb{R}^+,dq)$,
where~\eqref{eq:J1def} holds pointwise almost everywhere on $\mathbb{R}^+$. The map $(t_1,t_2)\mapsto B_{\langle t_1,t_2\rangle}$
is strongly continuous on $\{(t_1,t_2)\in\mathbb{R}^2: t_1<t_2\}$.
\end{theorem}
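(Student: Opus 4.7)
The plan is to express $B_{\langle t_1,t_2\rangle}$ as a compression of a difference of two orthogonal projections, deduce boundedness, self-adjointness and strong continuity from this, pin down the norm using the Dollard-lemma argument already sketched in the text, and finally derive the unitary equivalence to $C$ by a single change of variables in the momentum variable.

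First, working in the momentum representation, the position projection onto the negative half-line has Schwartz kernel $\langle p|P_-|q\rangle=\tfrac12\delta(p-q)+\tfrac{i}{2\pi}\,\mathrm{P.V.}\,(p-q)^{-1}$, and the free propagator is $\hat U_t=e^{-ip^2 t}$. Setting $Q_t:=\hat U_t^*\hat P_-\hat U_t$, which is a conjugate of a projection and hence itself an orthogonal projection, one has $\Delta_{\langle t_1,t_2\rangle}(\psi)=\langle\hat\psi,(Q_{t_2}-Q_{t_1})\hat\psi\rangle$. In the difference the pointwise $\delta$-contributions cancel, and the remaining principal-value part reproduces the Bracken--Melloy formula, identifying
\begin{equation*}
B_{\langle t_1,t_2\rangle}=\hat P_+(Q_{t_2}-Q_{t_1})\hat P_+\Big|_{L^2(\mathbb{R}^+)}.
\end{equation*}
From this representation, self-adjointness is immediate, the bound $\|B_{\langle t_1,t_2\rangle}\|\le 1$ follows from $-I\le Q_{t_2}-Q_{t_1}\le I$, and strong continuity of $(t_1,t_2)\mapsto B_{\langle t_1,t_2\rangle}$ follows from strong continuity of $t\mapsto\hat U_t$ together with uniform operator bounds on the factors.

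Next, the invariance of $\mathscr{H}_+$ under both time translation $\hat U_\tau$ and the momentum dilation $(\hat D_\lambda\hat\psi)(p)=\lambda^{1/4}\hat\psi(\sqrt\lambda\,p)$, together with the intertwining $\hat U_t\hat D_\lambda=\hat D_\lambda\hat U_{t/\lambda}$ and the commutation $[\hat D_\lambda,\hat P_-]=0$, yields a mutual unitary equivalence of all the operators $B_{\langle t_1,t_2\rangle}$ with $t_1<t_2$. I may therefore reduce to $(t_1,t_2)=(-1,1)$. The Dollard-lemma argument recalled in the paragraph before the theorem gives $\Delta_{\langle -n,n\rangle}(\psi_0)\to -1$ for any fixed unit $\psi_0\in\mathscr{H}_+$; transporting this by dilation produces a sequence of unit vectors in $\mathscr{H}_+$ on which $\Delta_{\langle -1,1\rangle}$ tends to $-1$, so $\|B_{\langle -1,1\rangle}\|\ge 1$ and hence equals $1$. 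For the unitary equivalence to $C$, the kernel at $(-1,1)$ collapses to $-\sin(p^2-q^2)/[\pi(p-q)]$, and the energy-variable unitary $V:L^2(\mathbb{R}^+,dp)\to L^2(\mathbb{R}^+,dk)$ defined by $(Vf)(k)=(2\sqrt k)^{-1/2}f(\sqrt k)$, implementing $k=p^2$, conjugates $B_{\langle -1,1\rangle}$ to the operator with the stated kernel, using the identities $\sin(k-\ell)/(\sqrt k-\sqrt\ell)=(\sqrt k+\sqrt\ell)\sin(k-\ell)/(k-\ell)$ and $(\sqrt k+\sqrt\ell)/(k\ell)^{1/4}=(k/\ell)^{1/4}+(k/\ell)^{-1/4}$.

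The main obstacle will be handling the distributional nature of the principal-value kernels rigorously: neither the kernel of $\hat P_-$ nor the sinc-type kernel of $C$ is absolutely integrable, so all the computations above are, strictly speaking, formal. The technical work that Appendix~\ref{appendix:single_backflow} must deliver is to verify the kernel identifications on a suitable dense subspace (for example, Schwartz functions supported away from the origin), extend by norm continuity using the bound $\|B_{\langle t_1,t_2\rangle}\|\le 1$, and finally upgrade to the pointwise-almost-everywhere statement for general $\varphi\in L^2(\mathbb{R}^+,dq)$. Once that regularisation step is in place, the remaining statements follow routinely.
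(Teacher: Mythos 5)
Your outline coincides in every structural respect with the paper's own argument in Appendix~\ref{appendix:single_backflow}: there $B_{\langle t_1,t_2\rangle}$ is written as $S_{t_2}^*\mathcal{R}^*\Pi^-\mathcal{R}S_{t_2}-S_{t_1}^*\mathcal{R}^*\Pi^-\mathcal{R}S_{t_1}$ (your compression of $Q_{t_2}-Q_{t_1}$), which gives self-adjointness, $\sigma\subseteq[-1,1]$ and strong continuity exactly as you say; the value $\|B_{\langle t_1,t_2\rangle}\|=1$ comes, as in your sketch, from the Dollard-lemma observation in the main text combined with translation/dilation covariance; and the equivalence to $C$ is obtained by reducing to a symmetric interval and applying the square-root change of variables, with the same algebraic identities you quote. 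The one place where you only gesture at an argument is precisely the part of the statement that needs one: the identification of the integral kernel and the claim that \eqref{eq:J1def} (and its $\langle t_1,t_2\rangle$ analogue) holds pointwise almost everywhere for arbitrary $L^2$ input. The paper's device here is to write $B_{\langle t_1,t_2\rangle}=-\tfrac{1}{2i}(S_{t_2}^*\mathcal{H}S_{t_2}-S_{t_1}^*\mathcal{H}S_{t_1})$ by recognising $i\mathcal{R}^*\Sigma\mathcal{R}$, $\Sigma=1-2\Pi^-$, as the restriction of the Hilbert transform, and then to invoke the classical fact that the truncated Hilbert transform of an $L^2$ function converges both in $L^2$ and pointwise a.e.; since the difference of phases cancels the diagonal singularity, the principal value disappears and the a.e.\ formula follows for every $\varphi$ at once. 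Your alternative (verify the kernel on a dense subspace, extend by norm continuity, then ``upgrade'' to a.e.) can indeed be completed — for fixed $k$ the difference kernel is bounded near the diagonal and $\mathcal{O}(|k-l|^{-1})$ at infinity, hence square integrable in $l$, so the integral expression is norm-controlled pointwise and one can pass to an a.e.-convergent subsequence of approximants — but as written this upgrade is merely asserted, so you should either spell out that subsequence argument or adopt the Hilbert-transform identification, which is what the paper actually does.
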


The unitary equivalence in Theorem~\ref{thm:B_facts} is a combination of translations, scale transformation and a change of variables.
It follows from~\eqref{eq:Delta_from_B} that the range of $\Delta_{\langle t_1,t_2\rangle}(\psi)$, as $\psi$ varies over the normalised 
elements of $\mathscr{H}_+$, is the \emph{numerical range} of the operator $B_{\langle t_1,t_2\rangle}$. We recall the definition and main properties of numerical range (see e.g., Section~9.3 of \cite{MR2359869}).
\begin{defn}
\label{defn: num_range}
    Let $A$ be a bounded self-adjoint operator on a Hilbert space $\mathscr{H}$. The \emph{numerical range} $\mathcal{N}(A) \subset \mathbb{R}$ of $A$ is given by
    \begin{equation}
        \mathcal{N}(A) = \left\{ \frac{\langle \phi | A \phi \rangle}{\langle \phi | \phi \rangle} \big| \phi \in \mathscr{H} \setminus \{0\}\right\}.
    \end{equation}
    One has $\mathcal{N}(A)=\mathcal{N}(UAU^\ast)$ for any unitary $U\in\mathcal{B}(\mathscr{H})$, and
    \begin{equation}
    \sigma(A) \subseteq \overline{\mathcal{N}(A)}=
    \Conv \sigma(A),
    \end{equation}
    where $\overline{\,\cdot\,}$ denotes topological closure and $\Conv$ the convex hull. In particular $\sigma(A)$ and
    $\mathcal{N}(A)$ have the same supremum and infimum. 
\end{defn} 
It follows immediately that, for all $t_1<t_2$,
\begin{equation} 
    \overline{\mathcal{N}(B_{\langle t_1,t_2\rangle})}
    =\Conv\sigma(C)\subseteq [-1,1],
\end{equation} 
using $\|C\|=1$. We have already noted that $-1\in \overline{\mathcal{N}(B_{\langle t_1,t_2\rangle})}$.   Defining
the \emph{Bracken--Melloy constant} as the largest spectral point of $C$,
\begin{equation}
    c_{\textnormal{BM}}=\sup \sigma (C)=\max\sigma(C),
\end{equation}
the possible values of the backflow $\Delta_{\langle t_1,t_2\rangle}(\psi)$ obey
\begin{equation}
    -1 \le \Delta_{\langle t_1,t_2\rangle}(\psi) \le c_{\textnormal{BM}}
\end{equation}
for all normalised $\psi \in \mathscr{H}_+$. 

These ideas are readily generalised to consider the total backflow over $M\ge 1$ disjoint time intervals $[t_{2j-1},t_{2j}]$ for $1\le j\le M$, which can be represented by a list $\boldvec[M]{t}=\langle t_1,...,t_{2M}\rangle$ in strictly increasing order.  
\begin{defn}
    For positive integer $M$, let
    \begin{equation}
        \mathcal{T}_M=\left\{ \boldvec[M]{t}=\langle t_1,...,t_{2M}\rangle \big| t_1<t_2<\cdot\cdot\cdot<t_{2M-1}<t_{2M}\right\}.
    \end{equation} 
\end{defn}
The total probability backflow exhibited by the time-zero state $\psi\in L^2(\mathbb{R},dx)$ over the intervals parameterised by $\boldvec[M]{t}\in\mathcal{T}_M$ is defined as
\begin{equation}
\label{deltaM_as_sum}
    \Delta^{(M)}_{\boldvec[M]{t}}(\psi) := \sum_{j=1}^M \left(\text{Prob}_\psi(X<0|t=t_{2j})-\text{Prob}_\psi(X<0|t=t_{2j-1})\right),
\end{equation}
so that, for example, $\Delta^{(1)}_{\boldvec[1]{t}}(\psi)=\Delta_{\langle t_1,t_2\rangle}(\psi)$. 

For normalised $\psi\in\mathscr{H}_+$, one has
\begin{equation}
\Delta^{(M)}_{\boldvec[M]{t}}(\psi) = \ip{\hat{\psi}}{B^{(M)}_{\boldvec[M]{t}} \hat{\psi}},
\end{equation}
where the \emph{$M$-fold backflow operator}
\begin{equation}
\label{eq: FM_as_sum}
    B^{(M)}_{\boldvec[M]{t}} = \sum_{j=1}^{M} B_{\langle t_{2j},t_{2j-1}\rangle}
\end{equation}
is a bounded self-adjoint operator on $L^2(\mathbb{R}^+, dq)$ acting as  
\begin{equation}
\label{defn:multiple_backflow_operatorF}
    \big(B^{(M)}_{\boldvec[M]{t}} \phi\big)(p) = -\frac{1}{2\pi i}\int_0^\infty dq \: \sum_{k=1}^M \frac{e^{i(p^2-q^2)t_{2k}}-e^{i(p^2-q^2)t_{2k-1}}}{p-q} \phi(q).
\end{equation} 
In the same way as before, the set of possible values taken by $\Delta^{(M)}_{\boldvec[M]{t}}(\psi)$ for normalised $\psi\in\mathscr{H}_+$ is the numerical range of $B^{(M)}_{\boldvec[M]{t}}$ and can be studied via its spectrum. As before, if $\Delta^{(M)}_{\boldvec[M]{t}}(\psi)>0$, we say that the state exhibits quantum backflow. We will also investigate states $\psi\in\mathscr{H}_+$ for which $\Delta^{(M)}_{\boldvec[M]{t}}(\psi)<-1$ for $M\geq 2$,
a phenomenon we will call \emph{quantum overflow}.  As noted in the introduction, overflow would also be interesting in the situation where the positive-momentum assumption is dropped, and perhaps exhibits a greater quantum advantage in that case, but this is not pursued here.

The following straightforward argument shows the existence of states exhibiting quantum overflow with $M=2$. Fix $t_1<t_2$ and suppose $\psi \in \mathscr{H}_+$ is such that $\Delta^{(1)}_{\langle t_1,t_2\rangle}(\psi) = \lambda >0$, i.e., there is nontrivial backflow between $t_1$ and $t_2$. Taking $T>\max(|t_1|,|t_2|)$, one can write the total backflow  over $[-T,t_1]\cup[t_2,T]$ as
\begin{equation}
    \Delta_{\langle -T,t_1,t_2,T\rangle}^{(2)}(\psi)=\Delta_{\langle -T,T\rangle}^{(1)}(\psi)-\Delta_{\langle t_1,t_2\rangle}^{(1)}(\psi).
\end{equation}
For any fixed $\epsilon\in (0,\lambda)$, one can find $T$ sufficiently large such that $\Delta^{(1)}_{\langle -T,T\rangle}(\psi)<-1+\epsilon$, because Dollard's lemma~\cite{dollard1969scattering} implies that $\Delta^{(1)}_{\langle -T,T\rangle}(\psi)\to -1$ as $T\to\infty$.
It follows that 
\begin{equation}
    \Delta_{\langle -T,t_1,t_2,T\rangle}^{(2)}(\psi)<-1-\lambda+\epsilon<-1,
\end{equation}
showing that the state $\psi$ exhibits quantum overflow on $[-T,t_1]\cup[t_2,T]$.
As $\psi$ can be chosen to make $\lambda$ arbitrarily close to $c_\textnormal{BM}$ and $\epsilon$ could be chosen arbitrarily small, this already shows that
$2$-fold overflow can approach $-1-c_\textnormal{BM}$ over suitable intervals.
In this example, we chose the time interval after fixing the state, without any control on the size of $T$.
However, in Section~\ref{spectrum}, we will consider $M$ equally spaced intervals of equal width, and show that the total $M$-fold overflow can be made unboundedly negative as $M$ increases, and similarly that the $M$-fold backflow can be made unboundedly positive as $M$ increases.

The invariance of $\mathscr{H}_+$ under time evolution and dilations does not completely fix $\mathcal{N}(B^{(M)}_{\boldvec[M]{t}})$ if $M>1$. However, time translation invariance shows that 
$\mathcal{N}(B^{(M)}_{\boldvec[M]{t}})$ is a function
of the successive differences $(t_2-t_1,t_3-t_2,\ldots, t_{2M}-t_{2M-1})$, and the scaling invariance shows that
it may be expressed in terms of successive ratios of these differences.
For each $M> 1$, then, there are \emph{backflow and overflow functions} $b^{(M)}_{\textnormal{back/over}}:\mathbb{R}_{>0}^{2M-2}\to \mathbb{R}$ (writing $\mathbb{R}_{>0}^{k}=(0,\infty)^{\times k}$) so that
\begin{align}
    \sup \mathcal{N}(B^{(M)}_{\boldvec[M]{t}}) &= 
    b^{(M)}_{\textnormal{back}}\left( \frac{t_3-t_2}{t_2-t_1},\ldots, \frac{t_{2M}-t_{2M-1}}{t_{2M-1}-t_{2M-2}}
    \right) =b^{(M)}_{\textnormal{back}}(\srsd \boldvec[M]{t}) \\
    \inf \mathcal{N}(B^{(M)}_{\boldvec[M]{t}}) &= 
    b^{(M)}_{\textnormal{over}}\left( \frac{t_3-t_2}{t_2-t_1},\ldots, \frac{t_{2M}-t_{2M-1}}{t_{2M-1}-t_{2M-2}}
    \right) = b^{(M)}_{\textnormal{over}}(\srsd \boldvec[M]{t}),
\end{align}
where $\srsd$ is the operation of forming the sequence of successive ratios of successive differences. 
By convention, we write $b^{(1)}_{\textnormal{back}}=c_{\textnormal{BM}}$, 
$b^{(1)}_{\textnormal{over}}=-1$, and $\srsd$ maps any element of $\mathcal{T}_1$ to the empty list. As we now describe, the backflow and overflow functions are semicontinuous.
\begin{lemma}\label{lem:semicontinuity}
 For each $M> 1$, the functions $b^{(M)}_{\textnormal{back/over}}$ are lower/upper semicontinuous on 
 $\mathbb{R}_{>0}^{2M-2}$. That is, for all $u_0\in\mathbb{R}_{>0}^{2M-2}$, one has
 \begin{equation}
     \liminf_{u\to u_0} b^{(M)}_{\textnormal{back}}(u)\geq b^{(M)}_{\textnormal{back}}(u_0),\qquad
     \limsup_{u\to u_0} b^{(M)}_{\textnormal{over}}(u) \le b^{(M)}_{\textnormal{over}}(u_0),
 \end{equation}
as $u\to u_0$ in $\mathbb{R}_{>0}^{2M-2}$. 
\end{lemma}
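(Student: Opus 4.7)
The plan is to realise $b^{(M)}_{\textnormal{back}}$ and $b^{(M)}_{\textnormal{over}}$ as a pointwise supremum and infimum, respectively, of a family of continuous real-valued functions on $\mathbb{R}_{>0}^{2M-2}$. Once this is set up, the semicontinuity assertions follow from the elementary fact that a pointwise supremum of continuous functions is lower semicontinuous and a pointwise infimum of continuous functions is upper semicontinuous.

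First, I would build an explicit continuous lift $\boldvec[M]{t}(\cdot):\mathbb{R}_{>0}^{2M-2}\to\mathcal{T}_M$ of the map $\srsd$. Given $u=(u_1,\ldots,u_{2M-2})$, set $d_1=1$ and $d_k=u_{k-1}d_{k-1}$ for $2\le k\le 2M-1$, then $t_1(u)=0$ and $t_{k+1}(u)=t_k(u)+d_k$ for $1\le k\le 2M-1$. The resulting $\boldvec[M]{t}(u)$ lies in $\mathcal{T}_M$, depends polynomially (hence continuously) on $u$, and satisfies $\srsd\,\boldvec[M]{t}(u)=u$. Because $\mathcal{N}(B^{(M)}_{\boldvec[M]{t}})$ depends on $\boldvec[M]{t}$ only through $\srsd\,\boldvec[M]{t}$, we have
\[ b^{(M)}_{\textnormal{back/over}}(u)=\sup/\inf\,\mathcal{N}\bigl(B^{(M)}_{\boldvec[M]{t}(u)}\bigr). \]

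Next, for each unit vector $\psi\in L^2(\mathbb{R}^+,dq)$, define $f_\psi(u)=\langle\psi,B^{(M)}_{\boldvec[M]{t}(u)}\psi\rangle$. By the decomposition $B^{(M)}_{\boldvec[M]{t}}=\sum_{j=1}^M B_{\langle t_{2j-1},t_{2j}\rangle}$ and the strong continuity of $(t_1,t_2)\mapsto B_{\langle t_1,t_2\rangle}$ supplied by Theorem~\ref{thm:B_facts}, the assignment $\boldvec[M]{t}\mapsto B^{(M)}_{\boldvec[M]{t}}$ is strongly continuous on $\mathcal{T}_M$. Precomposing with the continuous section, and using the bound $|\langle\psi,(B-B')\psi\rangle|\le\|(B-B')\psi\|$ for $\|\psi\|=1$, it follows that each $f_\psi$ is continuous on $\mathbb{R}_{>0}^{2M-2}$.

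Finally, the definition of numerical range gives $b^{(M)}_{\textnormal{back}}(u)=\sup_\psi f_\psi(u)$ and $b^{(M)}_{\textnormal{over}}(u)=\inf_\psi f_\psi(u)$, the extrema taken over unit vectors $\psi\in L^2(\mathbb{R}^+,dq)$. Hence the superlevel set $\{u:b^{(M)}_{\textnormal{back}}(u)>a\}=\bigcup_\psi\{u:f_\psi(u)>a\}$ is open for every $a\in\mathbb{R}$, so $b^{(M)}_{\textnormal{back}}$ is lower semicontinuous; dually, $b^{(M)}_{\textnormal{over}}$ is upper semicontinuous. The only ingredient requiring any real care is the strong continuity of $(t_1,t_2)\mapsto B_{\langle t_1,t_2\rangle}$, but this is exactly what Theorem~\ref{thm:B_facts} has already provided, so no genuine obstacle arises.
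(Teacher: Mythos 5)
Your proposal is correct and takes essentially the same route as the paper: both realise $b^{(M)}_{\textnormal{back/over}}$ as a pointwise supremum/infimum over normalised vectors of expectation values $\ip{\psi}{B^{(M)}_{\boldvec[M]{t}}\psi}$, which are continuous in the time parameters by the strong continuity from Theorem~\ref{thm:B_facts} and~\eqref{eq: FM_as_sum}, and then invoke the standard semicontinuity of pointwise suprema/infima of continuous functions. Your explicit continuous section of $\srsd$ (with $t_1=0$, first difference $1$) is just a concrete parametrisation of the slice $\{\langle 0,\tau,t_3,\ldots,t_{2M}\rangle\}$ that the paper maps homeomorphically onto $\mathbb{R}_{>0}^{2M-2}$, so no substantive difference arises.
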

\begin{proof} 
The function $\boldvec[M]{t}\mapsto B^{(M)}_{\boldvec[M]{t}}$ is strongly continuous on $\mathcal{T}_M$ due to~\eqref{eq: FM_as_sum} and Theorem~\ref{thm:B_facts},
so for each normalised $\phi\in L^2(\mathbb{R}^+,dq)$,
$\boldvec[M]{t}\mapsto \ip{\phi}{B^{(M)}_{\boldvec[M]{t}}\phi} $ is continuous on $\mathcal{T}_M$. As 
$\boldvec[M]{t}\mapsto \sup\mathcal{N}(B^{(M)}_{\boldvec[M]{t}})= b^{(M)}_{\textnormal{back}}(\srsd\boldvec[M]{t})$ is the pointwise supremum over $\phi$ of a family
of continuous functions, it is lower semicontinuous (see, e.g., Chapter~2 of~\cite{Rudin:RealandComplex}). Fixing $\tau>0$, it follows that
$b^{(M)}_{\textnormal{back}}\circ\srsd$
is lower semicontinuous on the subset $\{\langle 0,\tau,t_3,\ldots,t_{2M}\rangle\in\mathcal{T}_M\}$ (in the relative topology). As this subset is mapped homeomorphically to $\mathbb{R}_{>0}^{2M-2}$ by $\srsd$, we conclude that $b^{(M)}_{\textnormal{back}}$ is lower semicontinuous on $\mathbb{R}_{>0}^{2M-2}$.
The proof for the overflow functions is similar. 
\end{proof}

We also introduce sequences of backflow and overflow constants, by
\begin{equation}
\label{eq: backflow_overflow_constants}
c^{(M)}_{\textnormal{back}} = \sup_{\mathbb{R}_{>0}^{2M-2}} b^{(M)}_{\textnormal{back}}, 
\qquad
c^{(M)}_{\textnormal{over}} = \inf_{\mathbb{R}_{>0}^{2M-2}} b^{(M)}_{\textnormal{over}}
\end{equation}
for $M\ge 1$, in which the Bracken--Melloy constant appears as the first backflow constant:
$c_{\textnormal{BM}}=c^{(1)}_{\textnormal{back}}$,
while the first overflow constant is
$c^{(1)}_{\textnormal{over}}=-1$. 
Just as with $c_{\textnormal{BM}}$, the functions
$b^{(M)}_{\textnormal{back/over}}$ and constants
$c^{(M)}_{\textnormal{back/over}}$ are quantities that are fixed by the free quantum dynamics on the line. However, 
as the $b^{(M)}_{\textnormal{back/over}}$ 
are dimensionless functions of dimensionless variables,
they are manifestly independent of Planck's constant. 
In the remainder of this paper, we will initiate the study of
these functions and constants using both analytical and numerical methods, starting by considering the 
spectrum of the operators $B^{(M)}_{\boldvec[M]{t}}$ for
$\boldvec[M]{t}\in\mathcal{T}_M$.

\section{The spectrum of $M$-fold backflow operators}
\label{the_spectrum}
\subsection{Bounds on the spectrum}

Our aim in this section is to obtain estimates
on $\sigma(B^{(M)}_{\boldvec[M]{t}})$ for a variety of $\boldvec[M]{t}\in \mathcal{T}_M$. The simplest estimate arises from the triangle inequality:
since $\|B_{\langle t_1,t_2\rangle}\|= 1$ for all $t_1<t_2$,
we have from \eqref{eq: FM_as_sum} that $\|B^{(M)}_{\boldvec[M]{t}}\|\le M$ and consequently
$\sigma(B^{(M)}_{\boldvec[M]{t}})\subseteq [-M,M]$ for
all $\boldvec[M]{t}\in\mathcal{T}_M$ and $M\ge 1$.

To get a better estimate of the upper bound on the spectrum, we note that any sum of self-adjoint bounded operators obeys 
\begin{equation}\label{eq:sigmasubmultiplicative}
    \sup \sigma\left(\sum_{j=1}^M A_j\right)\le \sum_{j=1}^M \sup\sigma(A_j)
\end{equation}
and therefore
\begin{equation}\label{eq:sigmaupperbound}
    \sup \sigma\big(B^{(M)}_{\boldvec[M]{t}}\big) 
\le \sum_{j=1}^M \sup \sigma( B_{\langle t_{2j-1},t_{2j}\rangle} )=  Mc_\textnormal{BM}.
\end{equation}

To get a lower bound on the spectrum, it is convenient to rewrite $\Delta^{(M)}_{\boldvec[M]{t}}(\psi)$ as 
\begin{align}
    \Delta^{(M)}_{\boldvec[M]{t}}(\psi) &= \text{Prob}_\psi(X<0|t=t_{2M}) - \text{Prob}_\psi(X<0|t=t_{1}) \nonumber\\ &\qquad - 
    \sum_{j=1}^{M-1} \left(\text{Prob}_\psi(X<0|t=t_{2j+1})-\text{Prob}_\psi(X<0|t=t_{2j})\right),
\end{align}
from which it follows that
\begin{equation}
\label{eq: operator_difference_relation}
    B^{(M)}_{\boldvec[M]{t}} = B_{\langle t_1,t_{2M}\rangle} - B^{(M-1)}_{\langle t_2,\ldots,t_{2M-1}\rangle}.
\end{equation} 
Iterating, one obtains the formula
\begin{equation} 
\label{eq: FM_alternating_representation}
    B^{(M)}_{\boldvec[M]{t}} = 
    \sum_{j=1}^M (-1)^{j-1} B_{\langle t_j,t_{2M-j+1}\rangle},
\end{equation} 
which will be used later on.

Returning to~\eqref{eq: operator_difference_relation}, an analogue of~\eqref{eq:sigmasubmultiplicative} gives
\begin{align}
    \inf \sigma\left( B^{(M)}_{\boldvec[M]{t}}\right) &\ge 
    \inf \sigma\left(B_{\langle t_1,t_{2M}\rangle}\right) + \inf \sigma\left(-B^{(M-1)}_{\langle t_2,\ldots,t_{2M-1}\rangle}\right)
    \nonumber\\
    &= -1 - \sup \sigma\left(B^{(M-1)}_{\langle t_2,\ldots,t_{2M-1}\rangle}\right) \nonumber\\
    &\ge -1-(M-1) c_\textnormal{BM},
\end{align}
where we have used~\eqref{eq:sigmaupperbound}. Summarising, we have shown the following.
\begin{theorem}
\label{thm: spectral_outer_bounds}
For any $M\in \mathbb{N}$ and $\boldvec[M]{t}\in\mathcal{T}_M$, the spectrum of $B^{(M)}_{\boldvec[M]{t}}$ satisfies
\begin{equation}
    \sigma(B^{(M)}_{\boldvec[M]{t}})\subseteq [-1-(M-1)c_\textnormal{BM}, Mc_\textnormal{BM}].
\end{equation} 
Thus one has bounds 
\begin{align}
    b^{(M)}_{\textnormal{back}}(u_1,\ldots,u_{2M-2}) & \le c^{(M)}_{\textnormal{back}}  \le Mc_\textnormal{BM} \\
    \label{eq:b_over_bounds}
b^{(M)}_{\textnormal{over}}(u_1,\ldots,u_{2M-2}) & \ge c^{(M)}_{\textnormal{over}} \ge -1-(M-1) c_\textnormal{BM}
\end{align}
for all $(u_1,\ldots,u_{2M-2})\in\mathbb{R}_{>0}^{2M-2}$.
\end{theorem}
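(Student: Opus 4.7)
The argument splits naturally into an upper spectral bound and a lower spectral bound; both flow from the basic subadditivity~\eqref{eq:sigmasubmultiplicative}, which is immediate from $\sup\sigma(T) = \sup_{\|\phi\|=1}\ip{\phi}{T\phi}$ for bounded self-adjoint $T$, together with its $\inf$-counterpart $\inf\sigma(A+B) \geq \inf\sigma(A) + \inf\sigma(B)$.

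For the upper bound I would use the ``same-orientation'' decomposition~\eqref{eq: FM_as_sum}, writing $B^{(M)}_{\boldvec[M]{t}}$ as a sum of $M$ single-interval backflow operators $B_{\langle t_{2j-1},t_{2j}\rangle}$. Since $\sup\sigma$ of each such operator equals $c_\textnormal{BM}$, independent of its endpoints, applying the spectral subadditivity term-by-term gives $\sup\sigma(B^{(M)}_{\boldvec[M]{t}}) \leq Mc_\textnormal{BM}$, as recorded in~\eqref{eq:sigmaupperbound}.

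For the lower bound I would use the ``telescoping'' decomposition~\eqref{eq: operator_difference_relation}, which groups the outermost endpoints into a single interval $[t_1,t_{2M}]$ and combines the remaining $M-1$ intervals with a minus sign. Writing $B^{(M)}_{\boldvec[M]{t}} = B_{\langle t_1,t_{2M}\rangle} + \bigl(-B^{(M-1)}_{\langle t_2,\ldots,t_{2M-1}\rangle}\bigr)$ and applying the $\inf$-inequality gives $\inf\sigma(B^{(M)}_{\boldvec[M]{t}}) \geq \inf\sigma(B_{\langle t_1,t_{2M}\rangle}) - \sup\sigma(B^{(M-1)}_{\langle t_2,\ldots,t_{2M-1}\rangle})$. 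The first term equals $-1$: the observation already recorded that $-1 \in \overline{\mathcal{N}(B_{\langle t_1,t_{2M}\rangle})} = \Conv\sigma(B_{\langle t_1,t_{2M}\rangle})$ (via Dollard's lemma), combined with the containment $\sigma \subseteq [-1,1]$ implied by $\|B_{\langle t_1,t_{2M}\rangle}\| = 1$ from Theorem~\ref{thm:B_facts}, forces $\inf\sigma(B_{\langle t_1,t_{2M}\rangle}) = -1$, and this is a genuine spectral value because $\sigma$ is closed. The second term is bounded above by $(M-1)c_\textnormal{BM}$ via the upper bound already established (with $M$ replaced by $M-1$), yielding the spectral containment.

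The stated bounds on $b^{(M)}_{\textnormal{back/over}}$ and $c^{(M)}_{\textnormal{back/over}}$ then drop out from Definition~\ref{defn: num_range} (via $\sup\mathcal{N} = \sup\sigma$, $\inf\mathcal{N} = \inf\sigma$) and the definitions~\eqref{eq: backflow_overflow_constants} by taking the sup/inf over $\boldvec[M]{t}\in\mathcal{T}_M$. The only subtle step is verifying $\inf\sigma(B_{\langle t_1,t_2\rangle}) = -1$ rather than merely $\geq -1$; this crucially uses both Dollard's lemma (putting $-1$ into the closed numerical range) and the identification $\overline{\mathcal{N}(A)} = \Conv\sigma(A)$ for self-adjoint $A$, both already established in the text.
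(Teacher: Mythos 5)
Your proposal is correct and follows essentially the same route as the paper: subadditivity of $\sup\sigma$ applied to the decomposition~\eqref{eq: FM_as_sum} for the upper bound, and the telescoped form~\eqref{eq: operator_difference_relation} together with $\inf\sigma(B_{\langle t_1,t_{2M}\rangle})=-1$ and the $M-1$ upper bound for the lower bound. The only (harmless) excess is your emphasis on establishing the exact value $\inf\sigma(B_{\langle t_1,t_{2M}\rangle})=-1$: for this direction of the estimate the inequality $\inf\sigma\ge -1$, which is immediate from $\|B_{\langle t_1,t_{2M}\rangle}\|=1$, already suffices.
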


Theorem~\ref{thm: spectral_outer_bounds} shows that 
$c^{(M)}_{\textnormal{back/over}}$ do not grow faster than linearly in $M$, but does not answer the question of whether they grow at all. 
In the following subsections we will prove rigorously that for a particular class of backflow operators, both the infimum and supremum of $\sigma(B^{(M)}_{\boldvec[M]{t}})$ tend to $\pm \infty$ as $M \rightarrow \infty$ at least as fast as $\mathcal{O}(M^{1/4})$. This is in stark contrast to the classical case where, irrespective of the number of disjoint backflow intervals, the total backflow always lies in $[-1,0]$. The rigorous results presented will be supplemented by a numerical investigation for values of $1 \leq M\leq 4$ which clearly indicate monotonicity of the infimum and supremum of $\sigma(B^{(M)}_{\boldvec[M]{t}})$. 

\subsection{Unbounded $M$-fold backflow and overflow as $M\to\infty$}
\label{spectrum}

In this subsection we will show that
$c^{(M)}_\textnormal{back}$ and $-c^{(M)}_\textnormal{over}$ grow 
unboundedly with $M$, by considering the $M$-fold backflow operator corresponding to $M$ backflow periods of equal duration, separated by $M-1$ intervals of the same duration. Specifically, for $M \in \mathbb{N}$ and fixed $T>0$ consider the $M$-fold backflow problem for the times
\begin{equation}
\label{eq: B_times}
    \langle \boldsymbol{t}_M\rangle=\left\langle -T\frac{2M-1}{2},-T\frac{2M-3}{2},...,-\frac{T}{2},\frac{T}{2},...,T\frac{2M-1}{2}\right\rangle,
\end{equation}
for which the total backflow is bounded between $b^{(M)}_{\textnormal{over}}(1,\ldots,1)$ and $b^{(M)}_{\textnormal{back}}(1,\ldots,1)$. Note that $b_\textnormal{over}^{(M)}(1,...,1)$ and $b_\textnormal{back}^{(M)}(1,...,1)$ are invariant under the choice of $T>0$.
One can write the associated $M$-fold backflow operator $B^{(M)}_{\langle \boldsymbol{t}_M\rangle}$ as
\begin{equation}
    B^{(M)}_{\langle \boldsymbol{t}_M\rangle}=\sum_{j=1}^M (-1)^{j-1} B^{(1)}_{\langle -T(j-1/2),T(j-1/2)\rangle}
\end{equation}
using  \eqref{eq: FM_alternating_representation}. Each of the $B^{(1)}_{\langle -T(j-1/2),T(j-1/2)\rangle}$ operators has the closed form 
\begin{equation}
    (B^{(1)}_{\langle -T(j-1/2),T(j-1/2)\rangle} \phi)(k)=-\frac{1}{\pi}\int_0^\infty dl \: \frac{\sin\left(T(j-1/2)(k^2-l^2)\right)}{k-l}\phi(l).
\end{equation}
Let $U: L^2(\mathbb{R}^+, dk) \rightarrow L^2(\mathbb{R}^+, dp)$ be the unitary implementing the change of variables $p=k^2T/2$,
$(U\phi)(p) = (2pT)^{-1/4} \phi(\sqrt{2p/T})$,
so that $U^\ast B^{(1)}_{\langle -T/2,T/2\rangle} U=C$ is the operator defined in~\eqref{eq:J1def}. 
Then it is easily seen, by use of the identity 
\begin{equation}
    \sum_{j=1}^M (-1)^{M-j}\sin(2(j-1/2)(p-q))=\frac{\sin 2M(p-q)}{2 \cos(p-q)},
\end{equation}
that the operator $C^{(M)} = U^\ast B^{(M)}_{\langle \boldsymbol{t}_M\rangle} U\in \mathcal{B}(L^2(\mathbb{R}^+,dq))$ is independent of $T$ and has the action
\begin{equation}\label{eq:CMdefn}
    \left(C^{(M)}\varphi\right)(p)=-\frac{1}{4\pi}\int_0^\infty dq \: \frac{\sin(2M[p-q])}{(p-q)\cos(p-q)}\left[\left(\frac{p}{q}\right)^{1/4}+\left(\frac{p}{q}\right)^{-1/4}\right] \varphi(q).
\end{equation}
The following theorem shows that the suprema and infima of $\mathcal{N}(B^{(M)}_{\langle \boldsymbol{t}_M\rangle})=\mathcal{N}(C^{(M)})$ grow arbitrarily large in magnitude with $M$
at least as fast as $\mathcal{O}(M^{1/4})$.
\begin{theorem}
\label{thm: spectral_asymptotic_unboundedness}
For $M \in \mathbb{N}$, let $C^{(M)}\in \mathcal{B}(L^2(\mathbb{R}^+,dq))$ be the backflow operator defined in~\eqref{eq:CMdefn} describing $M$ backflow periods of unit duration separated by $M-1$ periods of unit duration. Then
there is a constant $k>0$ so that    
\begin{equation}\label{eq:supbound}
    \sup \sigma\left(C^{(M)}\right) \geq k M^{1/4}
\end{equation}
for all positive even $M$, and furthermore,
\begin{equation}\label{eq:spectral_asymptotics}
   \liminf_{M\to\infty} M^{-1/4} \sup \sigma\left(C^{(M)}\right) \geq k>0, \qquad 
   \limsup_{M\to\infty} M^{-1/4} \inf \sigma\left(C^{(M)}\right) \leq -k<0,
\end{equation}
where the limits are taken over all $M\in\mathbb{N}$.
\end{theorem}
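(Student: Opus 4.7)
The plan is to prove both assertions by constructing explicit trial vectors and invoking the variational characterizations
$\sup\sigma(C^{(M)}) = \sup_{\|\varphi\|=1}\langle\varphi,C^{(M)}\varphi\rangle$ and $\inf\sigma(C^{(M)}) = \inf_{\|\varphi\|=1}\langle\varphi,C^{(M)}\varphi\rangle$. A clean starting point is the alternating decomposition
\begin{equation*}
    C^{(M)} = \sum_{j=1}^M(-1)^{M-j} C_{2j-1}, \qquad (C_a \varphi)(p) = -\frac{1}{2\pi}\int_0^\infty \frac{\sin(a(p-q))}{p-q}\Bigl[(p/q)^{1/4}+(p/q)^{-1/4}\Bigr]\varphi(q)\,dq,
\end{equation*}
which follows from $\sin(2My)/(2\cos y) = \sum_{j=1}^M (-1)^{M-j}\sin((2j-1)y)$ together with~\eqref{eq:CMdefn}. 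Each $C_a$ is unitarily equivalent to $C$ via the dilation $(U_a\phi)(p) = a^{1/2}\phi(ap)$, so $C_a = U_a C U_a^*$, and in particular $\|C_a\|=1$ with $\sup\sigma(C_a) = c_\textnormal{BM}$. This reduces the problem to controlling a signed superposition of scaled copies of a single operator.

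The key issue is that the naive choice $\varphi = U_{2M-1}\psi_\textnormal{BM}$ (a scaled approximate BM maximizer) only recovers $\sup \sigma(C^{(M)}) \geq c_\textnormal{BM}$, since the cross-scale terms $\langle \psi_\textnormal{BM}, C_{(2j-1)/(2M-1)}\psi_\textnormal{BM}\rangle$ for $j<M$ are bounded but do not add constructively. To obtain growth with $M$ one needs a trial vector that populates many scales simultaneously so that the $(p/q)^{\pm 1/4}$ factor, which couples momentum scales, produces constructive interference. A natural Ansatz is a superposition
\begin{equation*}
\Phi = \sum_{j=1}^M b_j\,U_{2j-1}\psi
\end{equation*}
for fixed $\psi$ and coefficients $b_j$ matched to the alternating signs, or equivalently a power-law test function of the form $\varphi(p) = p^{-\alpha}\chi_{[1,L(M)]}(p)$ with $\alpha\in(0,1/2)$ chosen to exploit the $(p/q)^{\pm 1/4}$ symmetry. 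The quadratic form $\langle\varphi, C^{(M)}\varphi\rangle$ can then be reduced, under the substitution $(p,q)\mapsto(r,s)=(\sqrt{pq},\log(p/q))$, to an explicit double integral involving $\sin(4Mr\sinh(s/2))/((2r\sinh(s/2))\cos(2r\sinh(s/2)))$ times $\cosh(s/4)$, whose asymptotics in $M$ can be computed by stationary phase/residue contributions near the zeros of $\cos(p-q)$, at which the integrand has removable singularities of size $\mathcal{O}(M)$ (for integer $M$).

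The main obstacle is producing the precise $M^{1/4}$ rate: a typical variational bound of this form arises when one simultaneously optimizes two competing effects (the coherent sum of $\mathcal{O}(M)$ scaled single-backflow contributions giving up to $\mathcal{O}(M)$ in the quadratic form, against the normalization of $\Phi$ which scales with the number $K$ of included modes like $\|\Phi\|^2\sim K$, together with the harmonic-type off-diagonal decay in cross-scale overlaps). Choosing $K\sim M^{1/2}$ should optimize this tradeoff. The restriction to positive even $M$ in~\eqref{eq:supbound} reflects that the identity for $\sin(2My)/\cos y$ and the phase alignment across the $M$ intervals behave most cleanly when $M$ is even, avoiding residual boundary contributions that could spoil the bound for a single fixed trial vector; the $\liminf$/$\limsup$ statements in~\eqref{eq:spectral_asymptotics} then follow either by restricting to the even sub-sequence or by adapting the trial vector to $M$ and $M-1$ separately.

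For the overflow bound $\limsup_{M\to\infty} M^{-1/4}\inf\sigma(C^{(M)})\leq -k$, I would use the relation~\eqref{eq: operator_difference_relation} between $M$-fold and $(M-1)$-fold backflow operators, $B^{(M)}_{\boldvec[M]{t}}=B_{\langle t_1, t_{2M}\rangle} - B^{(M-1)}_{\langle t_2,\ldots,t_{2M-1}\rangle}$, and the heuristic from the $M=2$ overflow construction earlier in the paper: substituting a state that realizes backflow close to $+kM^{1/4}$ for a $(2M+1)$-fold configuration yields overflow of magnitude at least $-1 - kM^{1/4}$ for a related $(2M+2)$-fold configuration via a trial state supported on a long symmetric outer interval. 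This reduces the lower bound on $-\inf\sigma$ to the already-established bound on $\sup\sigma$, up to a fixed additive constant that is absorbed into the liminf/limsup.
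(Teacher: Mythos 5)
Your overall architecture is reasonable, and one piece of it does match the paper: for the infimum you propose to feed a near-maximizer of the $M$-fold problem into the difference relation and accept an $O(1)$ loss, which is essentially how the paper transfers \eqref{eq:supbound} to the second inequality in \eqref{eq:spectral_asymptotics} (via $C^{(M+1)}=\widetilde{C}^{(1)}-C^{(M)}$ within the same equally spaced family; no Dollard-type limit or ``long symmetric outer interval'' is needed or wanted here, since sending the outer times to infinity would take you out of the configuration defining $C^{(M)}$). However, the heart of the theorem is the quantitative lower bound $\langle\psi,C^{(M)}\psi\rangle\gtrsim M^{1/4}$ for an explicit normalised trial state, and this is exactly the step your proposal does not carry out. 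You correctly note that naive choices fail, but your candidate constructions (a power law $p^{-\alpha}\chi_{[1,L(M)]}$, or a superposition of $K\sim M^{1/2}$ dilated copies of a Bracken--Melloy near-maximizer) are left at the level of ``should optimize'' and ``can be computed by stationary phase,'' with no estimate showing that the cross-scale terms interfere constructively or that the exponent $1/4$ actually emerges; the tradeoff heuristic (coherent $\mathcal{O}(M)$ sum against $K$-mode normalisation) is not substantiated and is not the mechanism that works. The paper's proof (Lemma~\ref{lemma: lower_bound_supremum}) instead takes $\psi_M$ to be the normalised indicator of two windows of width $\mathcal{O}(1/M)$, one at momentum $0$ and one centred at $\pi/2$: for even $M$ the kernel of \eqref{eq:CMdefn} is resonantly large, of size $\mathcal{O}(M)$ and of the favourable sign, throughout $|p-q|\approx\pi/2$, and the factor $(q/p)^{1/4}$ with $p=\mathcal{O}(\epsilon/M)$ supplies the fractional power, giving $\frac{M^2}{\epsilon}\cdot\mathcal{O}\big((\epsilon/M)^{7/4}\big)=\mathcal{O}(\epsilon^{3/4}M^{1/4})$ after careful control of the diagonal window terms. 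Without some such concrete computation your argument establishes nothing beyond $\sup\sigma(C^{(M)})\geq c_\textnormal{BM}$.

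A second, smaller gap concerns passing from even $M$ to the $\liminf$ over all $M\in\mathbb{N}$ in \eqref{eq:spectral_asymptotics}. Restricting to the even subsequence does not prove a $\liminf$ over all $M$, and ``adapting the trial vector to $M$ and $M-1$ separately'' is not worked out; indeed the parity matters, because at $|p-q|=\pi/2$ the resonance flips sign for odd $M$, so the same two-window state does not directly give a large positive expectation value. The paper closes this by pure operator algebra: writing the $(M+1)$-fold equally spaced operator as a sum of operators unitarily equivalent to $C^{(M)}$ and $C^{(1)}$ gives $\sup\sigma(C^{(M+1)})\leq\sup\sigma(C^{(M)})+c_\textnormal{BM}$, hence $\sup\sigma(C^{(M)})\geq k(M+1)^{1/4}-c_\textnormal{BM}$ for odd $M$. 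You would need either this subadditivity step or a genuinely parity-adapted trial state to complete the first half of \eqref{eq:spectral_asymptotics}.
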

\begin{proof} 
For even $M\in\mathbb{N}$ and $\epsilon \in (0,\tfrac{\pi}{6}]$, define the intervals $\Ic_0, \Ic_1\subseteq\mathbb{R}$ by
    \begin{equation}
        \Ic_0 = \left[0, \frac{\epsilon}{M}\right], \qquad \Ic_1 = \left[\frac{\pi}{2}-\frac{\epsilon}{2M}, \frac{\pi}{2}+\frac{\epsilon}{2M}\right]
    \end{equation}
    and associated normalised $\psi_M\in L^2(\mathbb{R}^+, dq)$ by
    \begin{equation}\label{eq:psiM}
        \psi_M(q) = \begin{cases}
            \sqrt{\frac{M}{2\epsilon}} \qquad& q \in \Ic_0 \cup \Ic_1\\
            0 \qquad& \text{otherwise}
        \end{cases}.
    \end{equation}
   The supremum of the spectrum is estimated using the bound $\sup \sigma(C^{(M)})\ge \left\langle \psi_M | C^{(M)} \psi_M \right\rangle$. This is computed in Supplementary Section~\ref{appendix: Lemma3.3Proof}, which also details the proofs of~\eqref{eq:supbound} and~\eqref{eq:spectral_asymptotics}.
\end{proof}
Note that \eqref{eq:spectral_asymptotics} are poor bounds on the spectra of $C^{(M)}$. However, they show that simple states can deliver arbitrarily large amounts of backflow and overflow. Theorem~\ref{thm: spectral_asymptotic_unboundedness} has 
an immediate corollary relating to the asymptotics of the backflow and overflow constants.
\begin{corollary}
\label{cor: unbounded_backflow_overflow_consts}
Let $c^{(M)}_\textnormal{back}$ and $c^{(M)}_\textnormal{over}$ be the $M$-fold backflow and overflow constants defined in \eqref{eq: backflow_overflow_constants}. Then
\begin{equation}
    \liminf_{M \to \infty} M^{-1/4} c_\textnormal{back}^{(M)} > 0, \qquad \limsup_{M \to \infty} M^{-1/4} c^{(M)}_\textnormal{over} < 0.
\end{equation}
\end{corollary}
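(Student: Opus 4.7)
The plan is to observe that this corollary is an essentially immediate consequence of Theorem~\ref{thm: spectral_asymptotic_unboundedness}, since the operator $C^{(M)}$ was built precisely from the time list $\langle \boldsymbol{t}_M\rangle$ corresponding to $M$ backflow periods of unit duration separated by gaps of unit duration, so that $\srsd\langle \boldsymbol{t}_M\rangle=(1,1,\ldots,1)\in\mathbb{R}_{>0}^{2M-2}$. First I would invoke the unitary equivalence $C^{(M)}=U^\ast B^{(M)}_{\langle\boldsymbol{t}_M\rangle}U$ used in deriving~\eqref{eq:CMdefn} to identify
\begin{equation}
\sup\sigma(C^{(M)})=\sup\sigma(B^{(M)}_{\langle\boldsymbol{t}_M\rangle})=\sup\mathcal{N}(B^{(M)}_{\langle\boldsymbol{t}_M\rangle})=b^{(M)}_{\textnormal{back}}(1,\ldots,1),
\end{equation}
and similarly $\inf\sigma(C^{(M)})=b^{(M)}_{\textnormal{over}}(1,\ldots,1)$, using that spectra are invariant under unitary equivalence and that for bounded self-adjoint operators the supremum/infimum of the spectrum agrees with that of the numerical range (Definition~\ref{defn: num_range}).

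Next I would apply the definitions~\eqref{eq: backflow_overflow_constants} of $c^{(M)}_{\textnormal{back/over}}$ as the supremum/infimum of $b^{(M)}_{\textnormal{back/over}}$ over $\mathbb{R}_{>0}^{2M-2}$, evaluating at the single point $(1,\ldots,1)$ to obtain
\begin{equation}
c^{(M)}_{\textnormal{back}}\ge b^{(M)}_{\textnormal{back}}(1,\ldots,1)=\sup\sigma(C^{(M)}),\qquad c^{(M)}_{\textnormal{over}}\le b^{(M)}_{\textnormal{over}}(1,\ldots,1)=\inf\sigma(C^{(M)}).
\end{equation}
Multiplying by $M^{-1/4}>0$ preserves these inequalities, so taking $\liminf$ and $\limsup$ as $M\to\infty$ and invoking~\eqref{eq:spectral_asymptotics} yields
\begin{equation}
\liminf_{M\to\infty} M^{-1/4} c^{(M)}_{\textnormal{back}} \ge \liminf_{M\to\infty} M^{-1/4}\sup\sigma(C^{(M)})\ge k>0,
\end{equation}
and analogously $\limsup_{M\to\infty} M^{-1/4} c^{(M)}_{\textnormal{over}}\le -k<0$, which is precisely the required conclusion.

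There is essentially no obstacle here: the whole work has been absorbed into Theorem~\ref{thm: spectral_asymptotic_unboundedness}, and the corollary is a one-line bookkeeping step translating from the particular choice of equal intervals and gaps (for which the lower/upper bounds were established using the trial functions $\psi_M$ from~\eqref{eq:psiM}) to the backflow/overflow constants that optimise over all admissible time configurations. The only mildly subtle point, worth flagging explicitly to the reader, is the direction of the inequalities for the overflow case: because $c^{(M)}_{\textnormal{over}}$ is defined as an \emph{infimum}, evaluating $b^{(M)}_{\textnormal{over}}$ at any specific argument produces an upper bound on $c^{(M)}_{\textnormal{over}}$, which is what is needed to control its $\limsup$ from above.
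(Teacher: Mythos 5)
Your proposal is correct and follows exactly the paper's own argument: identify $\sup\sigma(C^{(M)})=b^{(M)}_{\textnormal{back}}(1,\ldots,1)\le c^{(M)}_{\textnormal{back}}$ and $\inf\sigma(C^{(M)})=b^{(M)}_{\textnormal{over}}(1,\ldots,1)\ge c^{(M)}_{\textnormal{over}}$ via unitary equivalence, then multiply by $M^{-1/4}$ and apply Theorem~\ref{thm: spectral_asymptotic_unboundedness} with elementary properties of $\liminf$ and $\limsup$. Your explicit note on the direction of the overflow inequality is a fine clarification but does not change the substance.
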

\begin{proof}
    Since $C^{(M)}$ is unitarily equivalent to $B^{(M)}_{\langle\boldsymbol{t}_M(T)\rangle}$, we find that
    \begin{equation}
        \sup \sigma \left(C^{(M)}\right) = b_\textnormal{back}^{(M)}(1,...,1)\leq c^{(M)}_\textnormal{back}, \qquad \inf \sigma \left(C^{(M)}\right) = b_\textnormal{over}^{(M)}(1,...,1)\geq c^{(M)}_\textnormal{over}.
    \end{equation}
    The result follows on multiplying each inequality by $M^{-1/4}$ and employing 
    lemma~\ref{thm: spectral_asymptotic_unboundedness}, 
    together with elementary properties of $\limsup$ and $\liminf$.
\end{proof}

Theorem~\ref{thm: spectral_asymptotic_unboundedness} and Corollary~\ref{cor: unbounded_backflow_overflow_consts} demonstrate the unboundedness of the quantum backflow and overflow effects as $M\to\infty$. In Section~\ref{section: numerical_calculation}, we will find numerical estimates of the spectral extrema for $M$ equally spaced intervals of equal width, with $2 \leq M\leq 4$, showing in particular that
\begin{equation}
    \sup \left(C^{(M)}\right)>c_\textnormal{BM}, \qquad \inf \left(C^{(M)}\right)<-1
\end{equation}
and further give numerical evidence for lower bounds on $c_\textnormal{back}^{(M)}$ as well as plots of vectors in $L^2(\mathbb{R}^+,dp)$ exhibiting
backflow and overflow.

\section{Limiting cases and monotonicity of the backflow constants}
\label{sec:limiting cases}

In Lemma~\ref{lem:semicontinuity} we have already seen that 
the $M$-fold backflow and overflow functions are semicontinuous for limits taken within $\mathbb{R}^{2M-2}_>$. We now consider certain limiting cases, which show how backflow and overflow functions with different numbers of parameters are related and establish monotonicity properties of the backflow constants.
\begin{theorem}\label{thm:blims}
    Let $M\in\mathbb{N}$ and $j,k\in\mathbb{N}_0$ so that $j+k=2L$ is even and positive. 
    Consider any $u\in\mathbb{R}^{(2M-2)}_>$ and any
    sequences $(v_n)_{n\in\mathbb{N}}$  in $\mathbb{R}_>^j$ and $(w_n)_{n\in\mathbb{N}}$ in $\mathbb{R}_>^k$ whose last and first components obey $v_{n,j}\to 0$ and $w_{n,1}\to \infty$. (In the case $M=1$ one omits $u$; similarly, in the cases $j=0$ resp., $k=0$ one omits $v$ resp., $w$.)
    
    If $j$ (and hence also $k$) is even, then 
    \begin{equation}\label{eq:blimseven}
        \liminf_n b^{(M+L)}_{\textnormal{back}}(v_n,u,w_n) \geq b^{(M)}_{\textnormal{back}}(u), \qquad
        \limsup_n b^{(M+L)}_{\textnormal{over}}(v_n,u,w_n) \leq b^{(M)}_{\textnormal{over}}(u)
    \end{equation}
    where we use the shorthand notation
    $(v,u,w)=(v_1,\ldots,v_j,u_1,\ldots, u_{2M-2},w_1,\ldots,w_k)$.
     
    On the other hand, if $j$ (and hence also $k$) is odd, then 
    \begin{equation}\label{eq:blimsodd}
        \liminf_n b^{(M+L)}_{\textnormal{back}}(v_n,u,w_n) \geq -1- b^{(M)}_{\textnormal{over}}(u), \qquad
        \limsup_n b^{(M+L)}_{\textnormal{over}}(v_n,u,w_n) \leq -1-b^{(M)}_{\textnormal{back}}(u).
    \end{equation}
    Consequently, the sequence of backflow constants $c^{(M)}_{\textnormal{back}}$ 
    is nondecreasing, and the sequence of overflow constants $c^{(M)}_{\textnormal{over}}$ is nonincreasing; moreover,
    $c^{(M+L)}_\textnormal{back}\geq -1 - c^{(M)}_\textnormal{over}$ and
    $c^{(M+L)}_\textnormal{over}\leq -1 - c^{(M)}_\textnormal{back}$ for all $L\in\mathbb{N}$.    
\end{theorem}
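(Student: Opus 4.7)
My approach is to reduce the semicontinuity claims to a quadratic-form computation on a well-chosen representative of each time configuration, with Dollard's lemma providing the pointwise limit and the numerical-range characterization of the spectral extrema converting this into the claimed bounds. Given $u\in\mathbb{R}^{2M-2}_{>0}$, I first fix times $\tau_1<\cdots<\tau_{2M}$ with $\srsd\boldvec[M]{\tau}=u$, writing $B^{(M)}_{\boldvec[M]{\tau}}$ for the corresponding $M$-fold operator. For each $n$, translation and dilation invariance of the spectrum allow me to choose a representative sequence $t^{(n)}_1<\cdots<t^{(n)}_{2(M+L)}$ with $t^{(n)}_{j+i}=\tau_i$ for $i=1,\ldots,2M$ and whose srsd equals $(v_n,u,w_n)$; denote the resulting $(M+L)$-fold operator by $B^{(M+L)}_n$. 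The hypothesis $v_{n,j}\to 0$ gives $\tau_1-t^{(n)}_j=(\tau_2-\tau_1)/v_{n,j}\to\infty$, so $t^{(n)}_j\to-\infty$, and by the strict ordering $t^{(n)}_i\to-\infty$ for every $i\le j$; symmetrically $w_{n,1}\to\infty$ forces $t^{(n)}_i\to+\infty$ for every $i\ge j+2M+1$. The cases $j=0$ or $k=0$ collapse harmlessly since the corresponding constraint is then vacuous.

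For each fixed normalised $\psi\in\mathscr{H}_+$ I expand $\ip{\hat\psi}{B^{(M+L)}_n\hat\psi}=\Delta^{(M+L)}_n(\psi)$ via~\eqref{deltaM_as_sum} as a telescoping sum of probability differences over the intervals $[t^{(n)}_{2i-1},t^{(n)}_{2i}]$. By Dollard's lemma, $\Prob_\psi(X<0|t^{(n)}_i)\to 1$ for $i\le j$ and $\to 0$ for $i\ge j+2M+1$, so every ``clean'' interval lying entirely in the outer region contributes two probabilities with the same limit and drops out. The combinatorial heart of the proof is to identify the intervals that involve the middle-block times. When $j$ is even, there are no straddling intervals and these are exactly $[\tau_1,\tau_2],\ldots,[\tau_{2M-1},\tau_{2M}]$, so the pointwise limit of $\ip{\hat\psi}{B^{(M+L)}_n\hat\psi}$ is $\ip{\hat\psi}{B^{(M)}_{\boldvec[M]{\tau}}\hat\psi}$. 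When $j$ is odd, the relevant intervals are instead $[t^{(n)}_j,\tau_1],[\tau_2,\tau_3],\ldots,[\tau_{2M-2},\tau_{2M-1}],[\tau_{2M},t^{(n)}_{j+2M+1}]$, i.e., the middle times are partitioned into the gaps of the standalone $M$-fold problem; the two straddling terms contribute $\Prob_\psi(X<0|\tau_1)-1$ and $-\Prob_\psi(X<0|\tau_{2M})$, and a short telescoping rearrangement of the surviving terms produces the pointwise limit $-1-\ip{\hat\psi}{B^{(M)}_{\boldvec[M]{\tau}}\hat\psi}$.

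The semicontinuity bounds now follow from $\sup\sigma(A)=\sup_{\|\psi\|=1}\ip{\psi}{A\psi}$ and its infimal analogue: in the even case, for each normalised $\psi$, $\liminf_n b^{(M+L)}_{\textnormal{back}}(v_n,u,w_n)\ge\liminf_n\ip{\hat\psi}{B^{(M+L)}_n\hat\psi}=\ip{\hat\psi}{B^{(M)}_{\boldvec[M]{\tau}}\hat\psi}$, and taking $\sup_\psi$ yields~\eqref{eq:blimseven}; the overflow bound is symmetric, and the odd-case bounds~\eqref{eq:blimsodd} follow from the same scheme together with the identity $\sup_\psi(-1-\ip{\hat\psi}{B^{(M)}_{\boldvec[M]{\tau}}\hat\psi})=-1-b^{(M)}_{\textnormal{over}}(u)$. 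Monotonicity and the cross inequalities then specialise easily: applying the even case with $(j,k)=(0,2)$ (in which $v_n$ is omitted) gives $c^{(M+1)}_{\textnormal{back}}\ge c^{(M)}_{\textnormal{back}}$, and iterating establishes monotonicity; applying the odd case with $(j,k)=(1,2L-1)$ and taking $\sup_u$ yields $c^{(M+L)}_{\textnormal{back}}\ge-1-c^{(M)}_{\textnormal{over}}$, with the complementary inequality proved by the symmetric application to overflow. The main obstacle I anticipate is purely combinatorial---carefully tracking the index shift in the odd case so that the intervals of the standalone middle-block problem become the gaps of the full problem, producing the characteristic $-1-B^{(M)}_{\boldvec[M]{\tau}}$ rather than $B^{(M)}_{\boldvec[M]{\tau}}$ in the limit---while the analytic content reduces to Dollard's lemma and the numerical-range identity for bounded self-adjoint operators.
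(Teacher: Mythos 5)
Your proposal is correct and follows essentially the same route as the paper: fix a representative configuration with the middle block held at times realising $u$, send the outer times to $\mp\infty$ via the ratio hypotheses, apply Dollard's lemma to obtain the pointwise limit $\Delta^{(M)}(\psi)$ (even $j$) or $-1-\Delta^{(M)}(\psi)$ (odd $j$), and pass to the spectral extrema through the numerical range, with the same specialisations $(j,k)=(0,2)$ and $(j,k)=(1,2L-1)$ for the monotonicity and cross inequalities. The only difference from the paper's write-up is bookkeeping (sup/liminf interchange instead of explicit $\epsilon$-arguments), which is immaterial.
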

We remark that the result holds also, with the same proof, if the sequences $v_n$ and $w_n$ are replaced by nets. The proof can be found in Supplementary Section~\ref{blims_appendix}.

Theorem~\ref{thm:blims} has a number of consequences. First, the bound
$c^{(2)}_\textnormal{over}\leq -1-c^{(1)}_\textnormal{back}$ can be combined with~\eqref{eq:b_over_bounds} to give the sharp value
\begin{equation}\label{eq:c2over}
    c^{(2)}_\textnormal{over} = -1-c_{\textnormal{BM}}.
\end{equation}
Second, as the overflow constants are nonincreasing, we see that
for every $M\ge 2$ and $\epsilon>0$ there exists
$\boldvec[M]{t}\in\mathcal{T}_M$ and $\psi\in\mathscr{H}_+$ so that 
\begin{equation}
\Delta^{(M)}_{\boldvec[M]{t}}(\psi) \le -1-c_{\textnormal{BM}}+\epsilon
\end{equation}
which (taking $0<\epsilon<c_{\textnormal{BM}}$) provides another proof of the existence of overflow states for any $M\ge 2$. Similarly, 
we may deduce that there are backflow states for every $M\in\mathbb{N}$.

Third, the proof of Theorem~\ref{thm:blims} can be modified slightly
to give the following result (many other variations are possible) which may be useful for further studies. The notation $\Li$ will be explained after the statement, which is also proved in Supplementary Section~\ref{blims_appendix}.
\begin{theorem}\label{thm:PKlimits}
    For $M\in\mathbb{N}$ and $\boldvec[M]{t} \in \mathcal{T}_M$, one has 
    \begin{equation}
         -1-\mathcal{N}\left(B^{(M)}_{\boldvec[M]{t}}\right) \subseteq \Li_{T_\pm \rightarrow \pm\infty } \mathcal{N}\left(B^{(M+1)}_{\langle T_-,\boldsymbol{t}_M,T_+\rangle}\right)
    \end{equation}
    and 
    \begin{equation}
        \mathcal{N}\left(B^{(M)}_{\boldvec[M]{t}}\right) \subseteq 
        \Li_{\substack{T,T'\rightarrow -\infty\\ T<T'} } \mathcal{N}\left(B^{(M+1)}_{\langle T,T',\boldsymbol{t}_M\rangle}\right).
    \end{equation} 
\end{theorem}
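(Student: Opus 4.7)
The plan is to unpack the Kuratowski $\liminf$ directly: a point $\mu$ lies in $\Li_{\alpha} A_\alpha$ precisely when, for every net of parameters tending to the limit, there exists a sequence of points in the $A_\alpha$ converging to $\mu$. It therefore suffices, for each inclusion, to fix a single normalised $\psi\in\mathscr{H}_+$ realising the given $\mu\in\mathcal{N}(B^{(M)}_{\boldvec[M]{t}})$ as $\mu=\Delta^{(M)}_{\boldvec[M]{t}}(\psi)$ and then to track the enlarged backflow quantity $\Delta^{(M+1)}_{\langle\ldots\rangle}(\psi)$ evaluated in the \emph{same} state $\psi$ as the newly inserted times tend to their limits. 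Since each such value is automatically an expectation of the enlarged backflow operator and hence an element of its numerical range, producing convergence of these numbers will directly witness membership in $\Li$. No operator-level convergence (strong or norm) will be required.

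\textbf{First inclusion.} For this part I would exploit the fact that inserting $T_-$ before $t_1$ and $T_+$ after $t_{2M}$ interchanges the roles of ``intervals'' and ``gaps'' relative to the original pairing. A short rearrangement of the telescoping sum defining $\Delta^{(M+1)}_{\langle T_-,\boldvec[M]{t},T_+\rangle}(\psi)$ then collapses it to $\Prob_\psi(X<0|t=T_+)-\Prob_\psi(X<0|t=T_-)-\Delta^{(M)}_{\boldvec[M]{t}}(\psi)$. Dollard's lemma sends the first probability to $0$ and the second to $1$, so along any sequence $(T^n_-,T^n_+)\to(-\infty,+\infty)$ this quantity converges to $-1-\mu$. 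Each term of that converging sequence lies in $\mathcal{N}(B^{(M+1)}_{\langle T^n_-,\boldvec[M]{t},T^n_+\rangle})$, witnessing $-1-\mu$ as an element of the Kuratowski $\liminf$, as required.

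\textbf{Second inclusion.} For the second case the decomposition of $\Delta^{(M+1)}$ is purely additive: $\Delta^{(M+1)}_{\langle T,T',\boldvec[M]{t}\rangle}(\psi)$ equals the new interval's contribution $\Prob_\psi(X<0|t=T')-\Prob_\psi(X<0|t=T)$ plus the original $\Delta^{(M)}_{\boldvec[M]{t}}(\psi)$. As $T<T'$ both tend to $-\infty$, Dollard's lemma drives both probabilities to $1$, so the bracketed difference vanishes and the whole expression tends to $\mu$. Again each intermediate value lies in the appropriate numerical range, so this furnishes the required convergent sequence and places $\mu$ in the $\Li$ on the right.

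\textbf{Main obstacle.} I do not expect a real obstacle: both parts reuse exactly the two ingredients already deployed in Theorem~\ref{thm:blims}, namely Dollard's lemma and the identification of state expectation values with numerical-range elements. The only step requiring genuine care is the sign-and-pairing bookkeeping for the first rearrangement, where the odd parity of the insertions is what produces $-1-\mu$ rather than $\mu$; this is the same parity phenomenon underlying the $j$-odd case of Theorem~\ref{thm:blims}, so the argument is essentially a numerical-range refinement of results already established.
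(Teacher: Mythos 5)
Your proposal is correct and follows essentially the same route as the paper: fix a state $\psi$ realising the given numerical-range value, use the telescoping rearrangement together with Dollard's lemma to show that $\Delta^{(M+1)}$ evaluated in the \emph{same} state converges to $-1-\mu$ (first inclusion) or $\mu$ (second inclusion) as the inserted times tend to their limits, and conclude via Definition~\ref{defn: Li_defn}. The only cosmetic point is that membership in $\Li$ is defined by eventual intersection of every neighbourhood along the net of parameters rather than by convergence along a chosen sequence, but your net-wise convergence of $\Delta^{(M+1)}_{\langle\cdot\rangle}(\psi)$ supplies exactly that, so the argument stands as the paper's does.
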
 
Here, $\Li$ denotes the \emph{Painlev\'{e}-Kuratowski lower closed limit}, see e.g.~chapter~29 of \cite{MR217751} and chapter~5 of \cite{MR1269778}, defined as follows.
\begin{defn}
\label{defn: Li_defn}
    Given a net of sets $(A_\alpha)_{\alpha \in I}$ in a topological space $X$, where $I$ is a directed set, $\Li A_\alpha \subseteq X$ is the set of points $p\in X$ with the property that, for every neighbourhood $N$ of $p$, there exists $\alpha_0 \in I$ with $A_\alpha \cap N \neq \emptyset$ for all $\alpha > \alpha_0$,
    i.e., every neighbourhood of $p$ is eventually intersected by $A_\alpha$.
\end{defn}

A fourth consequence of Theorem~\ref{thm:blims} is that it provides an understanding of how the spectrum of an two-fold backflow operator 
$B^{(2)}_{\langle t_1,t_2,t_3,t_4\rangle}$  behaves in a limit 
in which the two backflow intervals merge together, i.e., $t_2,t_3\to \tau$ for some
$\tau\in (t_1,t_4)$, with $t_1,t_4$ held fixed and maintaining $t_2<t_3$. An obvious question is whether the spectrum converges in some sense to
that of $B^{(1)}_{\langle t_1,t_4\rangle}$. As we now show, this is not the case. 
Writing $\srsd \langle t_1,t_2,t_3,t_4\rangle=(v,w)$, we have
$v\to 0+$ and $w\to +\infty$ in the limit.
Using the $M=j=k=1$ case of Theorem~\ref{thm:blims}, we find
\begin{equation}
    \limsup  b^{(2)}_{\textnormal{over}}(v ,w ) \le -1- b^{(1)}_{\textnormal{back}} = -1-c_{\textnormal{BM}}
\end{equation}
from~\eqref{eq:blimsodd} in the limit of interest.
Hence $b^{(2)}_{\textnormal{over}}(v ,w )\to -1-c_{\textnormal{BM}}$ because we also have $b^{(2)}_{\textnormal{over}}\ge c^{(2)}_{\textnormal{over}}=-1-c_\textnormal{BM}$. 
Thus, $\sigma(B^{(2)}_{\langle t_1,t_2,t_3,t_4\rangle})$ contains points 
approaching $-1-c_{\textnormal{BM}}$ arbitrarily closely in the limit, while
$\sigma(B^{(1)}_{\langle t_1,t_2\rangle})\subseteq [-1,c_\textnormal{BM}]$.
Therefore the spectra do not coincide in the limit; in fact, we have  
\begin{equation}\label{eq:limiting_Haus_dist}
   \liminf_{\substack{t_2,t_3\to \tau\\t_2<t_{3}}}   d_\textnormal{Haus}\left(\sigma(B^{(2)}_{\langle t_1,t_2,t_3,t_4\rangle}), \sigma(B^{(1)}_{\langle t_1,t_4\rangle})\right)\geq c_{\textnormal{BM}},
\end{equation}
where the Hausdorff distance between compact $A,B \subset \mathbb{R}$ is
\begin{equation}
    d_\textnormal{Haus}(A,B)=\max \{\sup_{a \in A} \inf_{b \in B} |a-b|, \sup_{b \in B} \inf_{a \in A} |a-b| \},
\end{equation} 
and gives the collection of non-empty compact subsets of $\mathbb{R}$ the structure of a complete metric space (see, e.g., theorem 3.2.4 of \cite{MR1269778}).
As one has the inequality $d_\textnormal{Haus}(\sigma(A),\sigma(B)) \leq \|A-B\|$ 
for self-adjoint bounded operators on Hilbert spaces (see e.g., Theorem V.4.10 in \cite{Kato:1980}),  we may infer that the backflow operators $B^{(2)}_{\langle t_1,t_2,t_3,t_4\rangle}$ do not converge in norm to $B^{(1)}_{\langle t_1,t_4\rangle}$ in the limit considered. This can also be seen directly as follows, noting that 
\begin{equation}
\label{eq: double_backflow_decomp}
    B^{(2)}_{\langle t_1,t_2,t_3,t_4\rangle} - B^{(1)}_{\langle t_1,t_4\rangle} = -B^{(1)}_{\langle t_2,t_3\rangle}
\end{equation}
and therefore 
\begin{equation}\label{eq:B2vsB1}
    \|B^{(2)}_{\langle t_1,t_2,t_3,t_4\rangle} - B^{(1)}_{\langle t_1,t_4\rangle}\| = 1
\end{equation}
by recalling that all single backflow operators have unit norm. 
It would be interesting to study other mergers of multiple backflow intervals in a similar way. 

We end this discussion with a cautionary note. For $\Lambda>0$, let $L^2([0,\Lambda],dk)$ be the space of positive momentum states with momentum cutoff $\Lambda$, in momentum representation. Let $\iota_\Lambda:L^2([0,\Lambda],dk)\to L^2(\mathbb{R}^+,dk)$ be the subspace inclusion map, 
whereupon $\iota_\Lambda^*$ is the subspace projection. 
Lemma~\ref{lem:cutoff_norm_continuity} in Supplementary Section~\ref{appendix:single_backflow} shows that the operators 
$\iota_\Lambda^* B^{(1)}_{\langle s,t\rangle}\iota_\Lambda$ are 
norm continuous in $\langle s,t\rangle\in\mathcal{T}_2$ with respect to 
the operator norm $\|\cdot\|_\Lambda$ on $L^2([0,\Lambda],dk)$.
Considering~\eqref{eq: double_backflow_decomp} again, it follows that
\begin{equation}
     \|\iota_\Lambda^*(B^{(2)}_{\langle t_1,t_2,t_3,t_4\rangle} - B^{(1)}_{\langle t_1,t_4\rangle})\iota_\Lambda\|_\Lambda \longrightarrow 0
\end{equation}
as the backflow intervals merge, so
\begin{equation}
    \lim_{t_3-t_2 \rightarrow 0^+} d_\textnormal{Haus}\Big(\sigma\Big(\iota_\Lambda^*  B^{(2)}_{\langle t_1,t_2,t_3,t_4\rangle }\iota_\Lambda\Big),\sigma\Big(\iota_\Lambda^*  B^{(1)}_{\langle t_1,t_4\rangle}\iota_\Lambda\Big)\Big) \longrightarrow 0 
\end{equation}
in this limit. Accordingly, any numerical scheme implementing a fixed momentum cutoff will give the erroneous impression of a convergence of spectra as backflow intervals merge. To see the true behaviour of the backflow operators as in~\eqref{eq:limiting_Haus_dist}, it would therefore be necessary to increase the momentum cutoff as $t_3-t_2$ decreases.

\section{Numerical calculation}
\label{section: numerical_calculation}

The goal of this section is to numerically investigate the multiple quantum backflow and quantum overflow effect over $M$ disjoint intervals of equal duration separated by gaps of the same length, corresponding to the bounded operator $C^{(M)}$ given in \eqref{eq:CMdefn}. Although our method would in principle apply to arbitrary $M$, the computational effort rises quickly with $M$ and we have chosen to restrict to $M\leq 4$.
The numerical results give lower bounds on the magnitudes of $b^{(M)}_{\textnormal{back/over}}(1,\ldots,1)$ for $1\leq M\leq 4$. 
In particular, we will find for $2\leq M\leq 4$ that $\max\sigma \left(C^{(M)}\right) > c_\textnormal{BM}$ and $\min\sigma\left(C^{(M)}\right) < -1$. This complements the analytic results in theorem~\ref{thm: spectral_asymptotic_unboundedness} and Corollary~\ref{cor: unbounded_backflow_overflow_consts} on the large $M$ asymptotics of $\sigma(C^{(M)})$ and the mononotonicity results in Theorem~\ref{thm:blims}. By numerical acceleration methods, we will give improved estimates for $b^{(M)}_{\textnormal{back/over}}(1,\ldots,1)$ for $1\le M\le 4$. In addition, we will investigate the properties of states that come close to maximising the backflow or overflow for these operators.

\subsection{Background theory} 
\label{exact_matrix_elements}

Our numerical calculations are based on the following basic observation. 
(See Section XIII.1 of~\cite{ReedSimon:vol4} for a discussion of other min-max results
and Theorem~VII.12 of~\cite{ReedSimon:vol1} for Weyl's criterion.) Here, $\sigma(B,Q)=\{\lambda\in\mathbb{R}:\det(B-\lambda Q)=0\}$ is the set of generalised eigenvalues for Hermitian matrix $B$ with respect to a positive definite matrix $Q$ of the same dimension. 
\begin{prop}
\label{prop: spectral_recipe}
Let $A$ be a bounded self-adjoint operator on Hilbert space $\mathscr{H}$ and let $(\chi_n)_{n \in \mathbb{N}} \subset \mathscr{H}$ be a sequence of linearly independent vectors with dense span. Define sequences of self-adjoint $N\times N$ matrices 
$(A^{[N]})_{N\in\mathbb{N}}$ and $(P^{[N]})_{N\in\mathbb{N}}$
with matrix elements
\begin{equation}
    A^{[N]}_{mn}=\ip{\chi_m}{A\chi_n} , \qquad P^{[N]}_{mn}=\ip{\chi_m}{\chi_n}
\end{equation} 
for $1\leq m,n\leq N$. Then $\sigma(A^{[N]},P^{[N]})\subseteq \mathcal{N}(A)$, and 
$\max  \sigma(A^{[N]},P^{[N]})$ (resp., $\min  \sigma(A^{[N]},P^{[N]})$) is a bounded nondecreasing (resp., nonincreasing) sequence with
\begin{align}
        \max \sigma(A) &= \lim_{N \rightarrow \infty} \max \sigma(A^{[N]},P^{[N]}) \nonumber\\ 
        \min \sigma(A) &= \lim_{N \rightarrow \infty} \min \sigma(A^{[N]},P^{[N]}).
    \end{align}
For each $N \in \mathbb{N}$, suppose $v^{(N)} \in \mathbb{C}^N$ is a generalised eigenvector obeying
\begin{equation}
    A^{[N]}v^{(N)} = \lambda_N P^{[N]}v^{(N)}, \qquad v^{(N)\dagger} P^{[N]} v^{(N)}=1
\end{equation}
and define $\psi^{(N)} \in \mathcal{H}$ by
\begin{equation}
    \psi^{(N)}=\sum_{n=1}^N v^{(N)}_n \chi_n.
\end{equation}
If $\psi^{(N)}\to \psi \in \mathcal{H}$ in norm then
\begin{enumerate}
        \item \label{st: gen_eigvals}the sequence of generalised eigenvalues $(\lambda_N)_{N \in \mathbb{N}}$ converges;
        \item \label{st: gen_eigvecs}the sequence of vectors $(\psi^{(N)})_{N \in \mathbb{N}}$ is a Weyl sequence for $\lambda = \lim_{N \to \infty} \lambda_N$, i.e., $\|\psi^{(N)}\|=1$ and $\|(A-\lambda I)\psi^{(N)}\|\to 0$;
        \item \label{st: eigenvector}the limiting vector $\psi$ is an eigenvector for $A$ with eigenvalue $\lambda$.
    \end{enumerate}
\end{prop}
The proof of Proposition~\ref{prop: spectral_recipe} may be found in Supplementary section~\ref{spectral_recipe_appendix}. The generalised eigenvalue problems in Proposition~\ref{prop: spectral_recipe} reduce to standard eigenvalue problems if the trial vectors $\chi_n$ are orthonormal. 
 In principle one could always orthogonalise to express the matrix elements in a Gram-Schmidt basis, but this introduces a nontrivial computational overhead and
 the generalised eigenproblem may be preferred.

\subsection{Numerical methodology and results}\label{sec:numerical_methodology_results}

We will apply Proposition~\ref{prop: spectral_recipe} to the operators $C^{(M)}$, but our methodology also applies to general backflow operators. 
Given $\delta>-1/2$ and $a>0$, define the sequence of normalized $L^2(\mathbb{R}^+, dq)$ vectors $(\psi_{n,a,\delta})_{n=0}^\infty$ by
\begin{equation}
\label{eq: almost_polynomial_vectors}
    \psi_{n,a,\delta}(q) = E_n(a,\delta) q^{n+\delta}e^{-aq},
\end{equation}
with the normalization constant $E_n(a,\delta)$ given by
\begin{equation}
\label{eq: normalisation_const}
    E_n(a,\delta)=\frac{(2a)^{n+\delta+1/2}}{\sqrt{\Gamma(2n+2\delta+1)}}.
\end{equation}
The following density result is found in Theorem 5.7.1 of \cite{MR0372517} in the case $a=1/2$, and follows for general $a>0$ follows by a unitary scale change. 
\begin{lemma}
\label{lemma: density_lemma}
    For $a>0$ and $-\tfrac{1}{2}<\delta<\tfrac{1}{2}$, the sequence $(\psi_{n,a,\delta})_{n =0}^\infty$ has a dense span in $L^2(\mathbb{R}^+, dq)$.
\end{lemma}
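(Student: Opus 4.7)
The plan is to reduce the general-$a$ statement to the cited $a = 1/2$ case (Theorem 5.7.1 of \cite{MR0372517}) using the scale invariance of the family $\{\psi_{n,a,\delta}\}_{n \ge 0}$; the whole argument is essentially a bookkeeping exercise with normalisation constants, with no fresh analytic input required.

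Concretely, for each $\lambda > 0$ I would introduce the unitary dilation $U_\lambda \in \mathcal{B}(L^2(\mathbb{R}^+,dq))$ defined by $(U_\lambda f)(q) = \sqrt{\lambda}\, f(\lambda q)$. Direct substitution of \eqref{eq: almost_polynomial_vectors} gives
\begin{equation}
(U_\lambda \psi_{n,a_0,\delta})(q) = \sqrt{\lambda}\,\lambda^{n+\delta}\,E_n(a_0,\delta)\,q^{n+\delta}\,e^{-a_0\lambda q} = \kappa_n(\lambda,a_0,\delta)\,\psi_{n,\lambda a_0,\delta}(q),
\end{equation}
where $\kappa_n(\lambda,a_0,\delta) = \sqrt{\lambda}\,\lambda^{n+\delta}\,E_n(a_0,\delta)/E_n(\lambda a_0,\delta)$. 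This scalar is strictly positive for every $n \ge 0$: the hypothesis $\delta > -1/2$ makes all gamma factors in \eqref{eq: normalisation_const} finite and positive, and the explicit prefactor $\sqrt{\lambda}\,\lambda^{n+\delta}$ is clearly positive. Setting $a_0 = 1/2$ and $\lambda = 2a$ therefore yields that $U_{2a}$ maps each $\psi_{n,1/2,\delta}$ to a nonzero multiple of $\psi_{n,a,\delta}$, and hence carries $\textnormal{span}(\psi_{n,1/2,\delta})_{n\ge 0}$ bijectively onto $\textnormal{span}(\psi_{n,a,\delta})_{n\ge 0}$. As $U_{2a}$ is unitary, it preserves norm closures, so density of the former span is equivalent to density of the latter.

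With this reduction, I would invoke Theorem 5.7.1 of \cite{MR0372517} to conclude density for all $a > 0$ and $-1/2 < \delta < 1/2$. There is no real obstacle; the only step requiring attention is verifying the positivity of the $\kappa_n$ and stating the scaling calculation clearly. If, for whatever reason, one wanted to avoid appealing to the cited reference, the $a = 1/2$ case can be obtained by a standard Laplace-transform argument: for $f \in L^2(\mathbb{R}^+)$ with $\langle \psi_{n,1/2,\delta}, f\rangle = 0$ for all $n$, Cauchy--Schwarz with $\delta > -1/2$ ensures that $F(z) = \int_0^\infty q^\delta f(q)\,e^{-zq}\,dq$ is holomorphic on $\{\Re z > 0\}$, the orthogonality conditions read $F^{(n)}(1/2) = 0$ for all $n \ge 0$, whence $F \equiv 0$ by analytic continuation, and injectivity of the Laplace transform forces $f = 0$ almost everywhere.
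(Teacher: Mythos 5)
Your proposal is correct and follows essentially the same route as the paper, which likewise cites Theorem 5.7.1 of Szeg\H{o} for $a=1/2$ and extends to general $a>0$ by a unitary scale change; your explicit computation of the dilation and the positive scalars $\kappa_n$ simply fills in the bookkeeping the paper leaves implicit. The supplementary Laplace-transform argument is a valid self-contained alternative, but it is not needed for agreement with the paper's proof.
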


To apply Proposition~\ref{prop: spectral_recipe} to $C^{(M)}$, we require the matrix elements 
\begin{equation}
\label{M_fold_finite_matrices}
    C^{(M)}(a,\delta)_{mn}=\ip{ \psi_{m,a,\delta}}{C^{(M)} \psi_{n,a,\delta} }, \qquad P(a,\delta)_{mn}=\ip{ \psi_{m,a,\delta} }{\psi_{n,a,\delta}}
\end{equation}
for $m,n\in\mathbb{N}_0$.
The components $P(a,\delta)_{mn}$ admit the closed form
\begin{equation}\label{eq:Pdef}
    P(a,\delta)_{mn}=\frac{\sqrt{B_\textnormal{diag}(m+\delta+1/2)B_\textnormal{diag}(n+\delta+1/2)}}{B(m+\delta+1/2,n+\delta+1/2)},
\end{equation}
where $B(x,y)=\Gamma(x)\Gamma(y)/\Gamma(x+y)$ is the beta function and $B_\textnormal{diag}(x)=B(x,x)=\Gamma(x)^2/\Gamma(2x)$.
The components $C^{(M)}(a,\delta)_{mn}$ (or indeed, those of general multiple backflow operators) also have a closed form expression in terms of incomplete Beta functions, for which the reader is referred to Supplementary Section~\ref{appendix:exactmatrixelements}. For each $M$, and for fixed $a>0$, $\delta\in (-1/2,1/2)$, the matrix elements $P(a,\delta)_{mn}$ and
$C^{(M)}(a,\delta)_{mn}$ can be computed for $0\leq m,n\leq N$, giving  $(N+1)$-dimensional square matrices. The matrix elements must be computed to high precision, for which purpose we use the package FLINT~\cite{FLINT,MR4000439}. For a discussion of the error analysis of the generalised eigenvalues, the reader is guided to Supplementary Section~\ref{appendix: error analysis}.

Further experimentation was used to find the values of $a_M,\delta_M$ that appear to produce the best, i.e., approximately largest magnitude, values of $\lambda^{(M)}_{\textnormal{back/over}}(a,\delta;N)$ for given $M$ across the range of $N$ studied, resulting in choices $a_M=2M/\pi$ and $\delta_M=-1/4$. (We do not claim that these values are precisely optimal.) Our choice relies on the following observations. Each $\psi_{n,a,\delta}(q)$ has a unique global maximum at $q=a^{-1}(n+\delta)$ and hence for each number of backflow periods $M$, we expect that a good choice of $a_M^{-1}$ would be comparable with the gap between consecutive peaks of the top approximate eigenvector $\psi^{(M)}_{\textnormal{back}}(a,\delta;N)$. Early numerical results suggested that for $M=1$, this gap tends to $\pi$ as $N$ increases and the best numerical results were found when using $a_1=2/\pi$. For larger values of $M$, the top approximate eigenvector has $M$ times as many stationary points (see Figures~\ref{fig: psiMAXM2},~\ref{fig: psiMAXM3} and~\ref{fig: psiMAXM4} below) and so we selected $a_M=Ma_1=2M/\pi$. To motivate the choice $\delta_M=-1/4$, consider the action of a vector of the form $\psi(q)=q^{-1/4}f(q)$ under $C^{(M)}$, we find
\begin{equation}
    (C^{(M)}\psi)(p)=-\frac{p^{-1/4}}{4\pi}\int_0^\infty dq\: \frac{\sin(2M(p-q))}{(p-q)\cos(p-q)}f(q)-\frac{p^{1/4}}{4\pi}\int_0^\infty dq\: \frac{\sin(2M(p-q))}{(p-q)\cos(p-q)}q^{-1/2}f(q),
\end{equation}
suggesting that any eigenfunction of $C^{(M)}$ is likely to diverge as $\mathcal{O}(p^{-1/4})$ as $p\to 0^+$. We define $\lambda_\textnormal{back/over}^{(M)}(N)=\lambda_\textnormal{back/over}^{(M)}(a_M,-\tfrac{1}{4};N)$.

We present numerical results for $M \in \{1,2,3,4\}$ in graphical form in Figures~\ref{fig:back} and~\ref{fig:over} for $100 \leq N \leq 500$ and tabulate the values of $\lambda^{(M)}_{\text{back/over}}(500)$ in Table~\ref{tab:N500values}. The values obtained for all $N$ stored at $150$ digits may be found in the research data repository of the University of York~\cite{backflow:dataset}. Note that the $N^\text{th}$ eigenvalue should only be trusted to around $503-0.91N$ decimal places. The plots display the values for $N\geq 100$, over which range they increasingly resemble smooth curves. Each separate eigenvalue sequence is monotone and would tend to the maximum or minimum of the appropriate spectrum $\sigma(C^{(M)})$ as $N\to\infty$. We will discuss the convergence rates and estimated limits with more detail in the next subsection, but it is already clear that the $M=1$ values stay below the previously estimated value of the Bracken--Melloy constant $c_\textnormal{BM}\approx 0.03845$~\cite{MR2119354,MR2419566}, that one has $c^{(M)}_\textnormal{back}\geq b_\textnormal{back}^{(M)}(1,...,1)>c_\textnormal{BM}$ and 
$c^{(M)}_\textnormal{over}\leq b_\textnormal{over}^{(M)}(1,...,1)<-1$
for $2\leq M\leq 4$, and that the value of
$b^{(2)}_\textnormal{over}(1,1)$ appears to be much closer to $-1$ than to the $2$-fold overflow constant $c^{(2)}_{\textnormal{over}}=-1-c_\textnormal{BM}$. 

\begin{figure}
\centering
    \includegraphics[width=\textwidth]{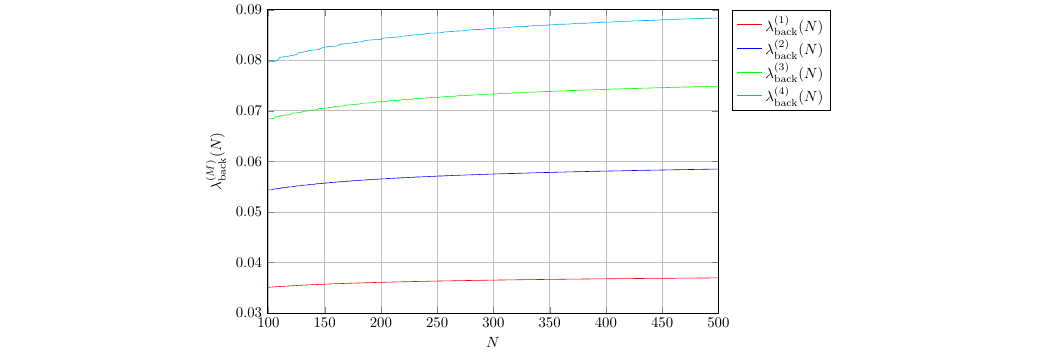}
    \caption{Plot of $\lambda^{(M)}_\textnormal{back}(N)$ for $100 \leq N \leq 500$.}
    \label{fig:back}
\end{figure}

\begin{figure}
\centering
    \includegraphics[width=\textwidth]{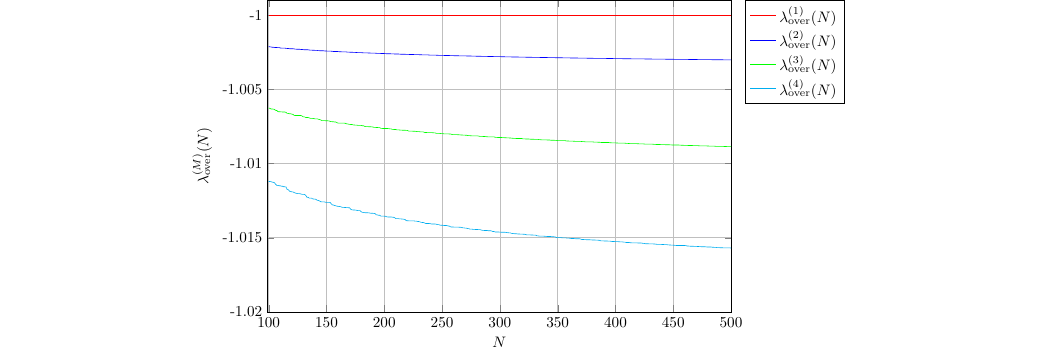}
    \caption{Plot of $\lambda_\textnormal{over}^{(M)}$ for $100 \leq N \leq 500$.}
    \label{fig:over}
\end{figure}

As well as estimates for the backflow and overflow values of $C^{(M)}$, we also computed the associated approximate eigenvectors as follows. As in Proposition~\ref{prop: spectral_recipe}, each vector $v \in \mathbb{R}^{N+1}$ with $v^\dagger P^{[N]}(a,\delta)v=1$ defines an
approximate eigenvector $\psi\in L^2(\mathbb{R}^+)$ for
$C^{(M)}$,
\begin{equation}
\label{eq: eigenvector_generation}
    \psi = \sum_{n=0}^N v_n \psi_n(a,\delta; \cdot).
\end{equation}

Define $\psi_\textnormal{back/over}^{(M)}(a,\delta;N;\cdot)$ as the $L^2(\mathbb{R}^+)$ vector given by applying \eqref{eq: eigenvector_generation} to the generalised eigenvectors associated with the largest/smallest generalised eigenvalues of $(C^{(M)[N]}(a,\delta),P^{[N]}(a,\delta))$. Figure~\ref{fig: psiMAXM1_4p0_10} shows the vectors $\psi_\textnormal{back}^{(M)}(a_M,-1/4;500;p)$ for $p \in [0,10]$. As can be seen, all the $M$-fold backflow maximizing vectors appear to diverge as $p\to 0$. In line with $\delta=-1/4$ giving the best approximate eigenvalue results, we conjecture that all of the $M$-fold backflow maximizing vectors behave like $p^{-1/4}$ for $p\sim 0$.  Recall from Section~\ref{spectrum} that the momentum space wavefunction is obtained by applying the unitary $U^*:L^2(\mathbb{R}^+,dp)\to L^2(\mathbb{R}^+,dk)$ implementing a change of variables $p=k^2/2$, where we set the unit of time equal to the duration and spacing of the backflow intervals and use units of mass and length so that $m=1/2$ and $\hbar=1$. As $(U^*\psi)(k) =\sqrt{k} \psi(k^2/2)$, our conjecture is that the momentum wavefunctions of $M$-fold backflow maximizing vectors have a finite limits as $k\to 0+$.

Figures~\ref{fig: psiMAXM1}--\ref{fig:psi_over_1} show the approximate backflow and overflow eigenvectors for $1 \leq M \leq 4$  as functions of $p$. A common feature is that the envelope of the wavefunction decays as $p^{-3/4}$ for large $p$; in all cases except for
$\psi_\textnormal{over}^{(1)}(a_1,-1/4;500); p)$, we plot the eigenfunction 
multiplied by a factor of $p^{3/4}$ to better illustrate the oscillatory structure. 
From Figures~\ref{fig: psiMAXM2},~\ref{fig: psiMAXM3} and~\ref{fig: psiMAXM4}, one sees that all $M\geq 2$ backflow vectors have higher frequency contributions when compared with the $M=1$ backflow vector. This is not a surprise when considering the integral kernel of the $C^{(M)}$ for $M \geq 2$ in \eqref{eq:CMdefn}. 

At time $t$, the position wavefunction of a state with time-zero momentum wavefunction $\phi(k)$ is 
\begin{equation}
    (\mathcal{R}_t\phi)(x)= \frac{1}{\sqrt{2\pi}} \int_0^\infty dk \, e^{ixk-ik^2t} \phi(k).
\end{equation}
Using a special function representation of $\mathcal{R}_tU^\ast\psi_{m,a,\delta}$, the time evolved position probability densities were computed for backflow/overflow-maximizing states with time-zero position wavefunctions 
\begin{equation}
\chi_\textnormal{back/over}^{(M)} = \mathcal{R}_0 U^\ast \psi^{(M)}_\textnormal{back/over}(a_M,1/4;100;\cdot).
\end{equation}
Here, $N=100$ was used to reduce computational overheads. The corresponding time-zero position probability densities are plotted in Figs.~\ref{fig:backvecsM13PR},~\ref{fig:backvecsM24PR} and~\ref{fig:overvecsM2_4PR}, while the time-evolved probability densities appear in Figs.~\ref{fig:backvecM2heatmap} and~\ref{fig:overvecM2heatmap}
for $M=2$ and the probability fluxes of $\chi_\textnormal{back}^{(M)}$ at $x=0$ are shown in Fig.~\ref{fig:backvecJM1_4}. Further plots may be found in Supplement~\ref{sec:pos_plots}.

\begin{figure}
    \centering
    \includegraphics[width=0.8\textwidth]{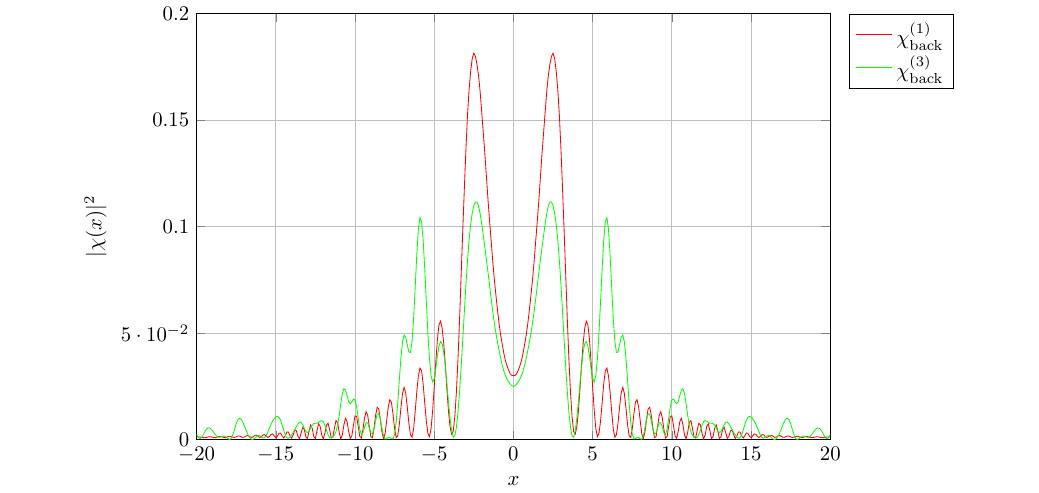}
    \caption{Position probability distribution of the $M=1,3$ backflow states $\chi^{(M)}_\textnormal{back}$ at time $t=0$.}
    \label{fig:backvecsM13PR}
\end{figure}

\begin{figure}
    \centering
    \includegraphics[width=0.8\textwidth]{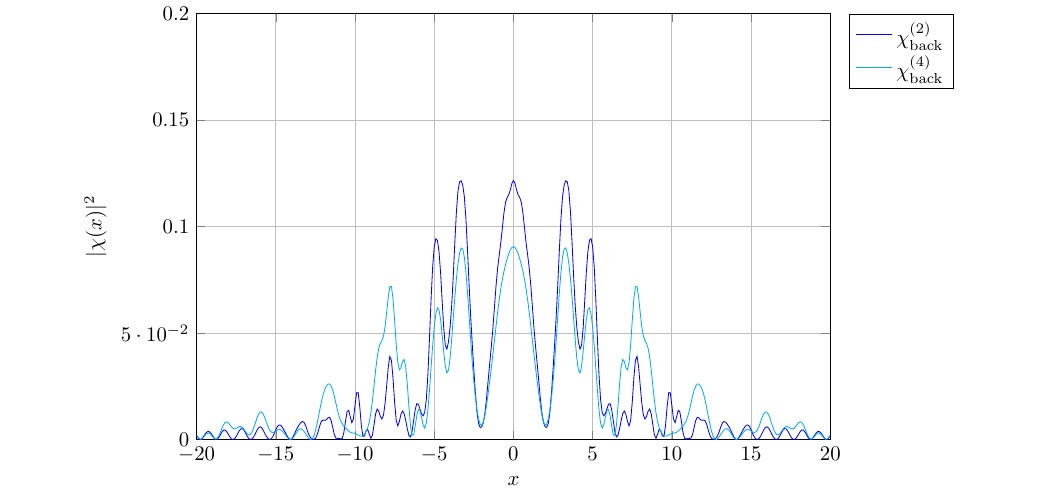}
    \caption{Position probability distribution of the $M=2,4$ backflow states $\chi^{(M)}_\textnormal{back}$ at time $t=0$.}
    \label{fig:backvecsM24PR}
\end{figure}

\begin{figure}
    \centering
    \includegraphics[width=0.8\textwidth]{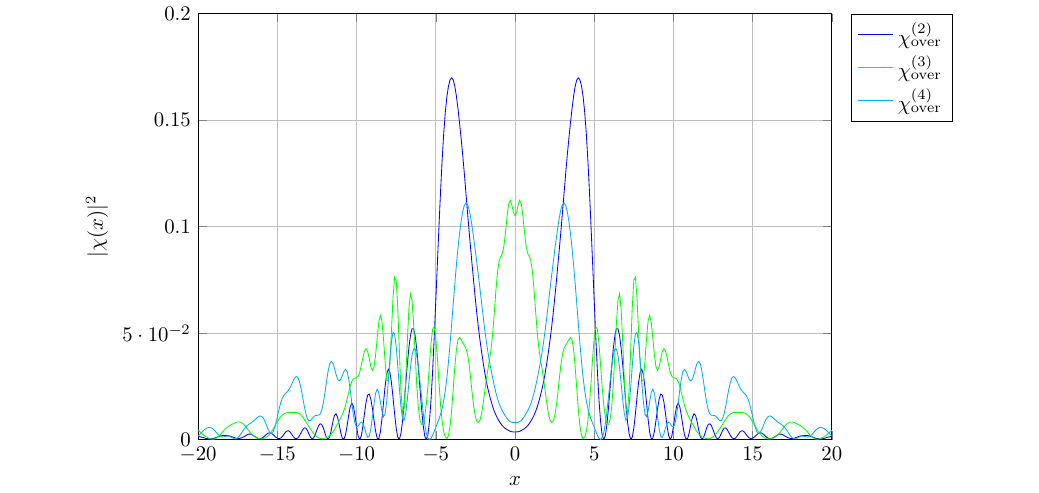}
    \caption{Position probability distribution of the $M=2,3,4$ overflow states $\chi^{(M)}_\textnormal{over}$ at time $t=0$.}
    \label{fig:overvecsM2_4PR}
\end{figure}

As shown in Proposition~\ref{prop: spectral_recipe}, if the sequences   $(\psi^{(M)}_\textnormal{back/over}(a_M,-1/4;N;\cdot))_{N \in \mathbb{N}}$ converge in norm, then they are Weyl sequences and the associated limiting vector and spectral point constitute an eigenpair. For integers $n,m$, let $d^{(M)}_{\textnormal{back/over}}(n,m)$ be defined as
\begin{equation}
    d^{(M)}_{\textnormal{back/over}}(n,m)=\|\psi^{(M)}_\text{back/over}(a_M,-1/4;n; \cdot)-\psi^{(M)}_\text{back/over}(a_M,-1/4;m;\cdot)\|_{L^2(\mathbb{R}^+)}.
\end{equation}
As a crude test, we have computed $d^{(M)}_\textnormal{back}(N,500)$ for $1 \le N \le 499$. For each $M=1,2,3,4$ the values form a decaying nonincreasing sequence with final values tabulated in Table~\ref{tab:N500values} alongside the final value for each of the sequences $\lambda_\textnormal{back/over}^{(M)}$. 
 
\begin{table}
\renewcommand{\arraystretch}{1.2}
\begin{center}
    \begin{tabular}{ | m{1em} | m{8em} | m{8em} | m{8em} |}
        \hline
        & & &\\[-1em]
        $M$ & $\lambda^{(M)}_\textnormal{back}(500)$ & $\lambda^{(M)}_\textnormal{over}(500)$ & $d^{(M)}_\text{back}(499,500)$\\
        \hline
        $1$ & $0.036933$ & $-1$ & $0.0058$\\
        \hline
        $2$ & $ 0.058464$ & $-1.0030$ & $0.0017$\\
        \hline
        $3$ & $0.074860$ & $-1.0089$ & $0.0041$\\
        \hline
        $4$ & $0.088378$ & $-1.0157$ & $0.0084$\\
        \hline
    \end{tabular}
\end{center}
\caption{\label{tab:N500values}Values of $\lambda^{(M)}_{\textnormal{back/over}}(500)$ and $d^{(M)}_\text{back}(499,500)$ for $1 \leq M \leq 4$.}
\end{table}
Similar results are obtained for the norm differences of overflow vectors $d^{(M)}_\text{over}(N,500)$ for $1\le N\le 499$ and $M=2,3,4$. Though far from a rigorous proof, our numerical results suggest that $b^{(M)}_{\textnormal{back}}(1,\ldots,1)=\max\sigma(C^{(M)})$ is an eigenvalue of $C^{(M)}$ for $1\le M\le 4$, while  
$b^{(M)}_{\textnormal{over}}(1,\ldots,1)=\min\sigma(C^{(M)})$ is an eigenvalue of $C^{(M)}$ for $2\le M\le 4$.

\begin{figure}
\centering
\includegraphics[scale=0.9]{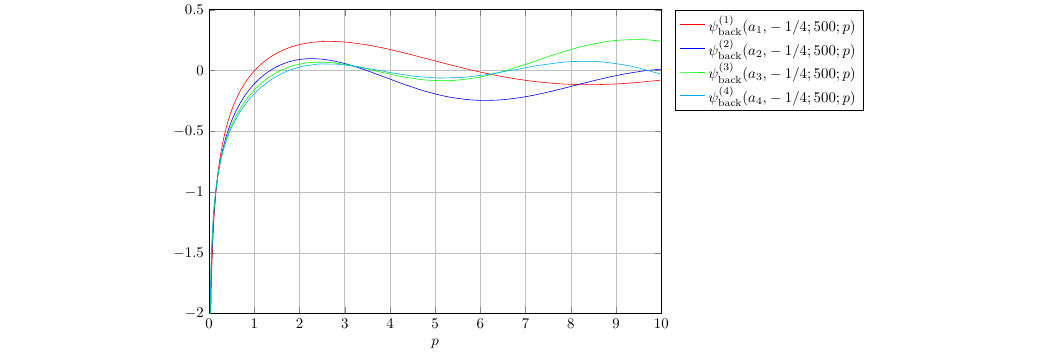}
\caption{Plot of $\psi_\textnormal{back}^{(M)}(a_M,-1/4;500;p)$ for $p \in [0,10]$}
\label{fig: psiMAXM1_4p0_10}
\end{figure}

\begin{figure}
\centering
\includegraphics[width=\textwidth]{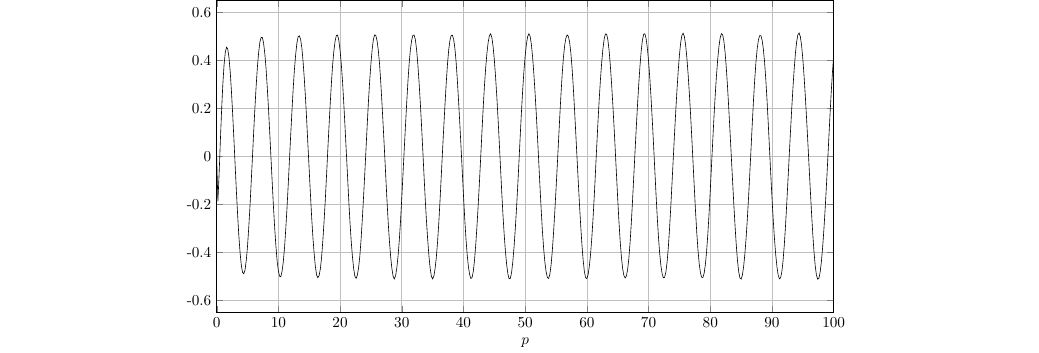}
\caption{Plot of $p^{3/4}\psi_\textnormal{back}^{(1)}(a_1,-1/4;500;p)$ for $p \in [0,100]$}
\label{fig: psiMAXM1}
\end{figure}

\begin{figure}
\centering
\includegraphics[width=\textwidth]{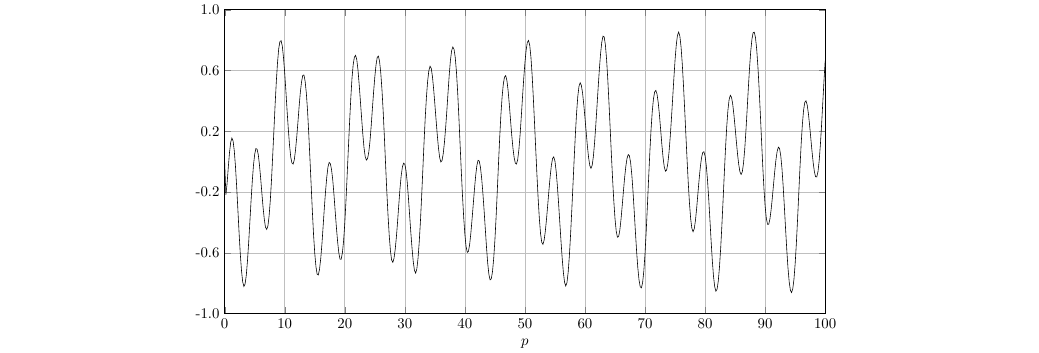}
\caption{Plot of $p^{3/4}\psi_\textnormal{back}^{(2)}(a_2,-1/4;500;p)$ for $p \in [0,100]$}
\label{fig: psiMAXM2}
\end{figure}

\begin{figure}
\centering
\includegraphics[width=\textwidth]{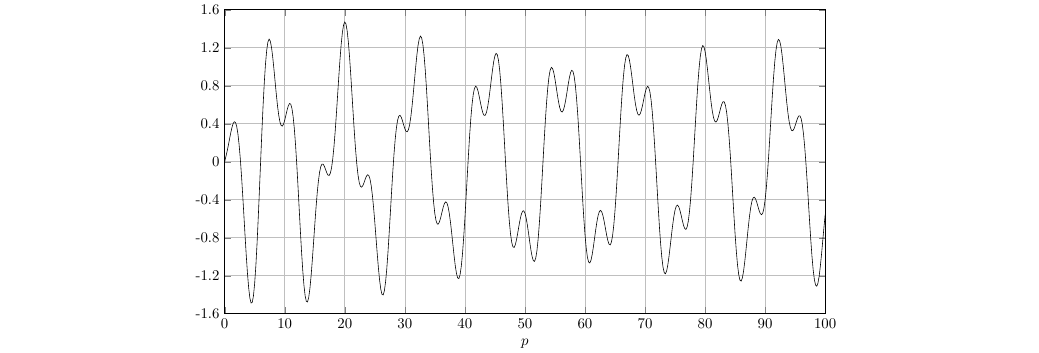}
\caption{Plot of $p^{3/4}\psi_\textnormal{over}^{(2)}(a_2,-1/4;500;p)$ for $p \in [0,100]$}
\label{fig: psiMINM2}
\end{figure}

\begin{figure}
\centering
\includegraphics[width=\textwidth]{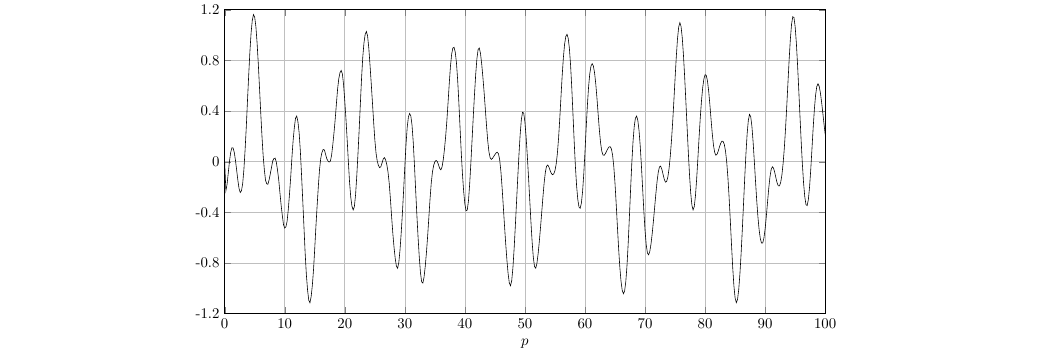}
\caption{Plot of $p^{3/4}\psi_\textnormal{back}^{(3)}(a_3,-1/4;500;p)$ for $p \in [0,100]$}
\label{fig: psiMAXM3}
\end{figure}

\begin{figure}
\centering
\includegraphics[width=\textwidth]{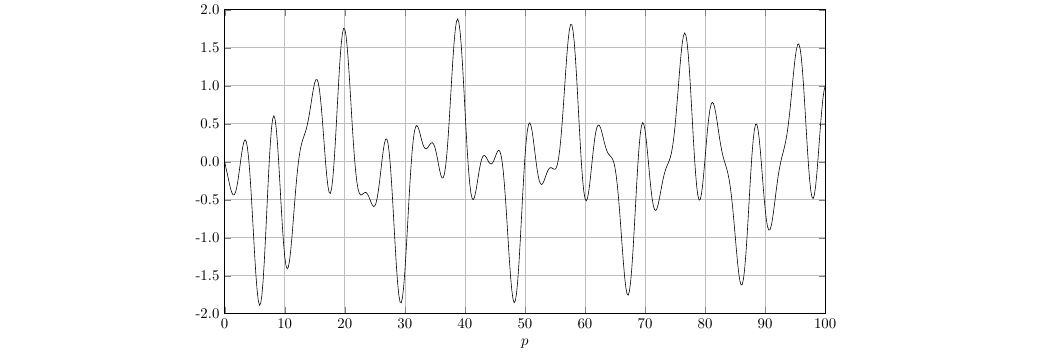}
\caption{Plot of $p^{3/4}\psi_\textnormal{over}^{(3)}(a_3,-1/4;500;p)$ for $p \in [0,100]$}
\label{fig: psiMINM3}
\end{figure}

\begin{figure}
\centering
\includegraphics[width=\textwidth]{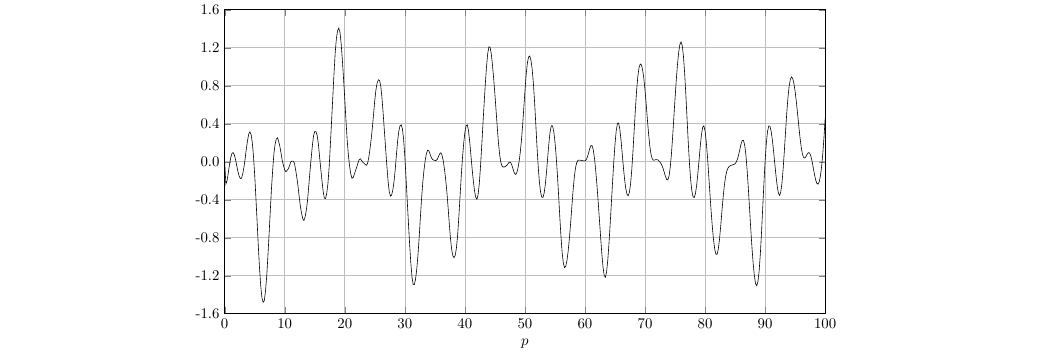}
\caption{Plot of $p^{3/4}\psi_\textnormal{back}^{(4)}(a_4,-1/4;500;p)$ for $p \in [0,100]$}
\label{fig: psiMAXM4}
\end{figure}

\begin{figure}
\centering
\includegraphics[width=\textwidth]{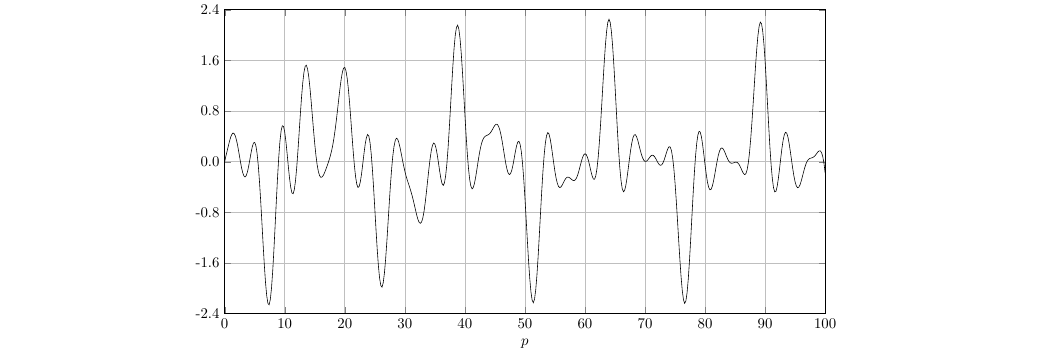}
\caption{Plot of $p^{3/4}\psi_\textnormal{over}^{(4)}(a_4,-1/4;500;p)$ for $p \in [0,100]$}
\label{fig: psiMINM4}
\end{figure}

The infimum of the spectrum of the $M=1$ operator $C^{(1)}$ requires particular discussion. Fig.~\ref{fig:psi_over_1} shows the vectors $\psi_\textnormal{over}^{(1)}(a_1,-1/4;N)$ for $N$ running from $50$ to $450$ inclusive in multiples of $50$, generated from $N_{\textnormal{max}}=500$ data. Each plot consists of a single peak centred roughly at $1.5N$, which broadens as $N$ increases. This suggests a sequence of vectors weakly converging to the zero vector in $L^2(\mathbb{R}^+,dp)$ as $N$ increases, whose $C^{(1)}$-expectation values tend to $-1$. Such a sequence would constitute a singular Weyl sequence for $C^{(1)}$ at $-1$, from which one could deduce that $-1\in\sigma_\textnormal{ess}(C^{(1)})$
(see e.g., theorem 2 in of \cite{MR1192782}) -- a fact that was proved rigorously in~\cite{MR2198971}. Our numerical results are therefore in line with this result.

\begin{figure}
    \includegraphics[width=\textwidth]{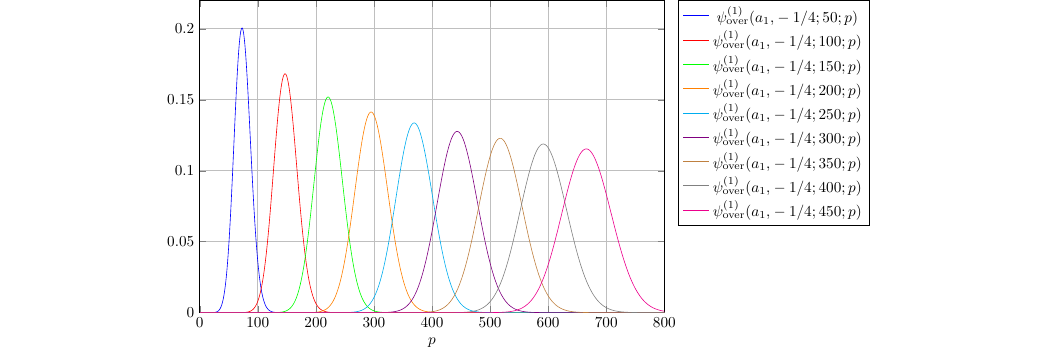}
    \caption{Plots of $\psi_\textnormal{over}^{(1)}(a_1,-1/4;N;p)$ for $N$ in multiples of $50$ and $p \in [0,800]$.}
    \label{fig:psi_over_1}
\end{figure}

\subsection{Numerical acceleration}
\label{sec:numerical_acceleration}

The sequences $\lambda^{(M)}_\textnormal{back/over}$ are guaranteed to converge to the backflow/overflow values $b_\textnormal{back/over}^{(M)}(1,\ldots,1)$, but the rate of convergence is slow. For example, $\lambda^{(1)}_\textnormal{back}(500)=0.03693$ can be compared with the expected value $b_\textnormal{back}^{(1)}=0.03845$ to 4 significant figures. In this section we consider numerical acceleration techniques that modify a convergent sequence, while preserving its limit, but increasing the speed of convergence, at least for sequences of interest. The aim is to provide a better estimate of the eventual limit from the available data. This is done with using sequence acceleration. 

Our aim is to apply (generalised) Richardson accelerators, detailed in Supplementary Section~\ref{appendix: Richardson Accelerator}, to the sequences $\lambda^{(M)}_\textnormal{back/over}$ for $1\le M\le 4$. Briefly, suppose that a sequence $(x_n)_{n\in\mathbb{N}}$ has asymptotic form
\begin{equation}
  x_n =  \sum_{j=1}^k\frac{a_j}{n^{\gamma_j}}  + \epsilon_n, \qquad |\epsilon_n|\le \frac{C}{n^{\gamma_{k+1}}}
\end{equation}
where $0=\gamma_1<\gamma_2<\cdots <\gamma_k<\gamma_{k+1}$, and $a_1,\ldots,a_k$ and $C$ are constant real numbers,
so $x_n\to a_1$ as $n\to\infty$. Writing $\boldsymbol{\gamma}=(\gamma_1,\ldots,\gamma_{k+1})$, the generalised Richardson accelerator $R_{\boldsymbol{\gamma}}$ 
is a linear map on sequences with the property 
that
\begin{equation}
(R_{\boldsymbol{\gamma}}x)_n 
= a_1 + (R_{\boldsymbol{\gamma}}\epsilon)_n,  \qquad 
    |(R_{\boldsymbol{\gamma}}\epsilon)_n|\le \frac{C_{\boldsymbol{\gamma}}}{n^{\gamma_{k+1}}} ,
\end{equation}
where the constant $C_{\boldsymbol{\gamma}}$ can be computed explicitly. Thus the rate of convergence is improved.

As the asymptotics of the sequences $\lambda^{(M)}_\textnormal{back/over}$ are unknown we performed tests that suggest strongly that they have an asymptotic form like the above with half-integer exponents. Another issue is that the sequences $\lambda^{(M)}_\textnormal{back}$ contain oscillatory components with amplitude and period increasing with $M$. These oscillations become apparent in the accelerated sequences, once the dominant power law terms are removed, obstructing attempts to estimate the limit. Details of the oscillatory damping and evidence for the ansatz can be found in Supplementary Section~\ref{appendix: ansatz analysis and damping}.

Using generalised Richardson accelerators, we obtain conjectural bounds on $c_\textnormal{BM}$. Application of $R_{(0.5,1,1.5,2,2.5,3)}$ and $R_{(0.5,1,1.5,2,2.5,3,3.5)}$ to $\lambda^{(1)}_\textnormal{back}$ result in descending and ascending sequences respectively as shown in Figure~\ref{fig: RichmaxestsM1_4}. By selecting the final values of each sequence, we obtain conjectured upper and lower bounds to the Bracken-Melloy constant. To $8$ significant figures, we find
\begin{equation}
    0.038450556 \leq c_\textnormal{BM} \leq 0.038450568
\end{equation}
and thus $c_\textnormal{BM} = 0.0384506$ correct to the first $6$ significant figures. Of particular note is that the upper bound we obtain is strictly below the previously accepted figure of $0.038452$. Possible reasons for this are discussed below. Note that while the accelerators we use result in sequences whose eventual limit is guaranteed to be equal to $c_\textnormal{BM}$, it is conceivable that the limited number of terms available may give a misleading impression. For example, we cannot be sure that the two sequences arising from $R_{(0.5,1,1.5,2,2.5,3)}$ and $R_{(0.5,1,1.5,2,2.5,3,3.5)}$ remain monotonic and provide upper and lower bounds to $c_\textnormal{BM}$. If more terms were computed, these sequences might cross, perhaps many times, before approaching their common limit. However, problems of this sort are common to all numerical computation that is unsupported by rigorous bounds. We are currently conducting an independent calculation of $c_{\textnormal{BM}}$ which it is hoped will serve as a cross-check on these values.

Our estimate for $c_\textnormal{BM}$ agrees with those of~\cite{MR2119354,MR2198971} to 4 significant figures but differs beyond that. In fact the 5th significant figure in~\cite{MR2119354} was stated tentatively, though it was then apparently confirmed and refined in~\cite{MR2198971}. We briefly compare the methods used to argue that our new estimate is likely to be the more accurate of the three,
based on the following two observations. First, in~\cite{MR2119354,MR2198971} the backflow operator (or an operator unitarily equivalent to it) on $L^2(\mathbb{R}^+)$ was numerically diagonalised by truncating to a subspace $L^2([0,\Lambda])$ and then discretising the resulting operator by setting a mesh size $\delta\ll \Lambda$ resulting in a matrix problem of dimension $\mathcal{O}(\Lambda/\delta)$. Although the techniques in~\cite{MR2119354,MR2198971} were different in detail, they share the feature that $c_\textnormal{BM}$ is obtained in a double limit $\delta\to 0$ and $\Lambda\to\infty$.
In practice, a mesh size $\delta(\Lambda)>0$ is selected that appears to give reasonable estimate of the $\delta\to 0$ limit at fixed $\Lambda$, and a sequence of results is obtained by increasing $\Lambda$ through some sequence of cutoff sizes $\Lambda_N$. Consequently, the sequence of results can depend on the sequence $\Lambda_N$ as well as the dependence of $\delta$ on $\Lambda$. By contrast, 
our method is based on closed form expressions for the matrix elements in terms of special functions, avoiding cutoffs and discretisation.  
Once we have chosen our sequence of basis functions, the sequence of eigenvalue estimates is labelled by the single parameter $N$ that indicates how many basis vectors have been used; we are also able to compute these estimates to high precision based on the error analysis in Section~\ref{exact_matrix_elements}.
Second,~\cite{MR2119354,MR2198971} used relatively straightforward fitting or extrapolation methods to estimate the limiting value $c_\textnormal{BM}$, whereas we have used sequence acceleration to systematically remove terms in the asymptotic series resulting in sequences that converge significantly faster to $c_\textnormal{BM}$ than the raw sequence $\lambda_\textnormal{back}^{(1)}$. 

Plots of the accelerated sequences are presented in Figure~\ref{fig: RichmaxestsM1_4}, with the resulting bounds tabulated in Table~\ref{tab: backflow_const_bounds}. As with the $M=1$ case, these `bounds' are not rigorous, but indicate that accelerated sequences appear to sandwich the limiting value. 

\begin{figure}
\centering
    \includegraphics[width=\textwidth]{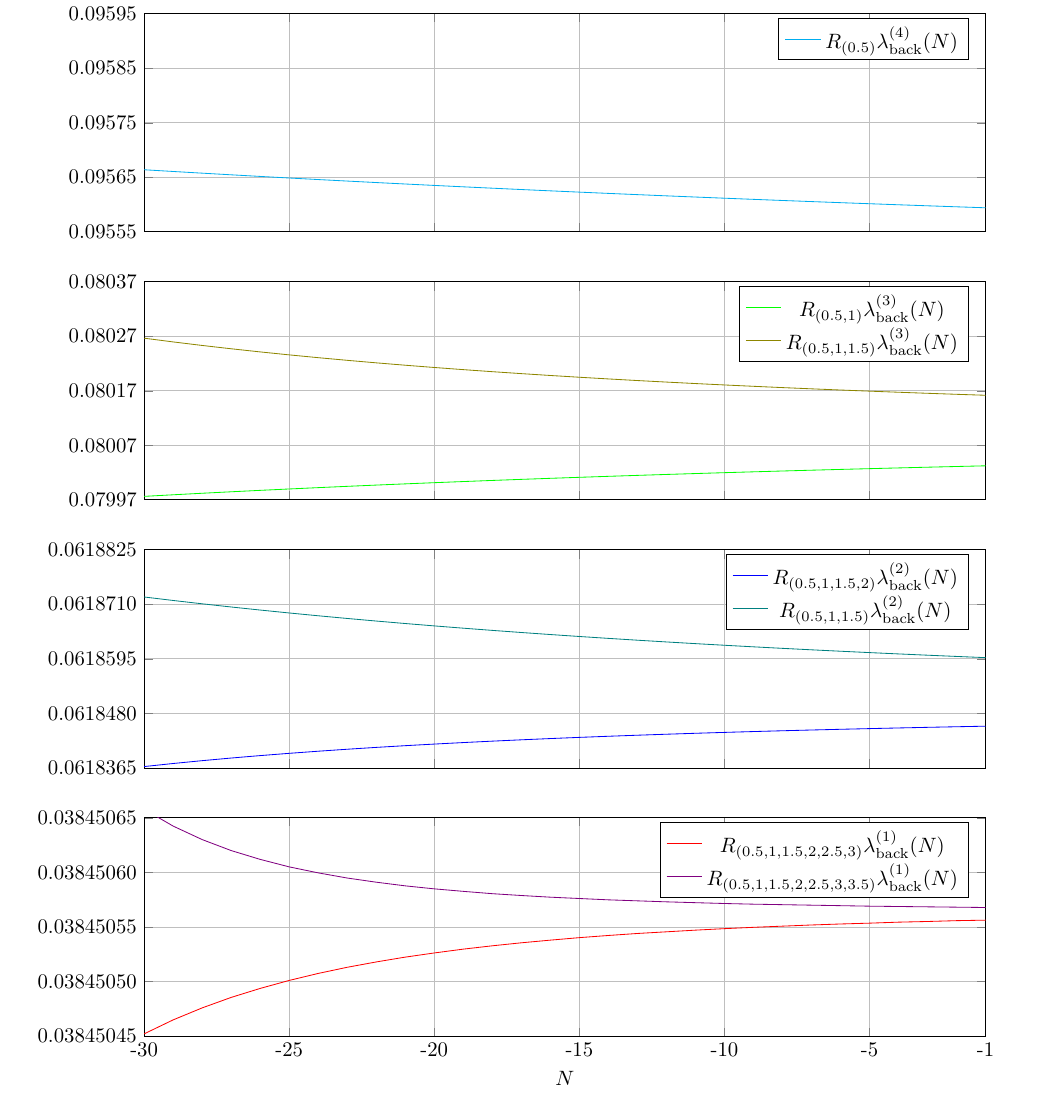}
    \caption{Plots of $R_{\boldsymbol{\gamma}_M} \lambda^{(M)}_\textnormal{back}$ for selected powers $\boldsymbol{\gamma}_M$ and $1 \leq M \leq 4$}
    \label{fig: RichmaxestsM1_4}
\end{figure}

\begin{table}
\begin{center}
    \begin{tabular}{ | m{1em} | p{13em} | m{8em} | m{9em} | }
        \hline
        $M$ & Acceleration parameters & Lower bound (LB)& Upper bound (UB)\\
        \hline
        $1$ & LB: $(0.5,1,1.5,2,2.5,3)$,\newline UB: $ (0.5,1,1.5,2,2.5,3,3.5)$ & $0.0384505563$ & $0.0384505678$\\
        \hline
        $2$ & LB: $(0.5,1,1.5,2)$,\newline UB: $(0.5,1,1.5)$ & $0.0618453024$ & $0.0618597497$\\
        \hline
        $3$ & LB: $(0.5,1)$, UB: $(0.5,1,1.5)$ & $0.0800324618$ & $0.0801617473$\\
        \hline
        $4$ & UB: $(0.5)$ & $-$ & $0.0955940854$\\
        \hline
    \end{tabular}
\end{center}
\caption{\label{tab: backflow_const_bounds} Conjectured upper and lower bounds for $c_\textnormal{BM}$ and $b^{(M)}_\textnormal{back}(1,\dots,1)$ for $1\le M\le 4$.}
\end{table}
 
As a check on our attempts to numerically accelerate $\lambda_\textnormal{back}^{(1)}$, we also made use of theorem~5 from \cite{MR1075161}, which gives a nonlinear acceleration method based on the Raabe-Duhamel (RD) convergence test.
\begin{theorem} 
If a sequence $x$ obeys 
\begin{equation}
\label{eq: RD_condition}
    \lim_{n \to \infty} \left[ (n+1)\frac{x_{n+2}-x_{n+1}}{x_{n+1}-x_{n}}-n\right]<0
\end{equation}
then $x$ converges and the nonlinear sequence transformation 
\begin{equation}
    \RD[x]_n = x_n - \frac{n(x_{n+1}-x_n)^2}{(n+1)(x_{n+2}-x_{n+1})-n(x_{n+1}-x_n)}
\end{equation}
accelerates the convergence of $x$.
\label{thm: RD_acceleration}
\end{theorem}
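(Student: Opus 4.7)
The plan is to reduce Theorem~\ref{thm: RD_acceleration} to an asymptotic analysis of the differences $a_n := x_{n+1}-x_n$ and then to verify algebraically that the $\RD$ correction cancels the leading error in $x_n$. Denoting the limit in the hypothesis by $-\rho$ with $\rho>0$, the ratio of successive differences satisfies
$$\frac{a_{n+1}}{a_n} = \frac{n-\rho+\epsilon_n}{n+1}, \qquad \epsilon_n\to 0,$$
so that $a_n$ eventually has constant sign and $n(1-|a_{n+1}|/|a_n|) \to 1+\rho > 1$. Raabe's test then gives absolute convergence of $\sum a_n$, whence $x_n\to L$ for some $L\in\mathbb{R}$; this disposes of the convergence claim.

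Next I would establish the sharper asymptotic $|a_n|\sim c/n^{1+\rho}$ for some $c>0$, and hence $y_n:=L-x_n = \sum_{k\ge n}a_k \sim \pm c/(\rho n^\rho)$. The key consequence needed is the ratio relation $na_n/y_n \to \rho$, which can be obtained either by a direct Bernoulli-type comparison, controlling the partial sums of $\log|a_{n+1}/a_n| + (1+\rho)\log(1+1/n)$, or via a Stolz--Ces\`aro type argument applied to the pair $(a_n, y_n)$ combined with the recursion $y_{n+1}/y_n = 1-a_n/y_n$.

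Finally, I would use the hypothesis itself to simplify the denominator of $\RD[x]_n$: multiplying the relation $a_{n+1}/a_n = (n-\rho+\epsilon_n)/(n+1)$ through by $(n+1)a_n$ yields $(n+1)a_{n+1}-na_n = (-\rho+\epsilon_n)a_n$, so
$$\RD[x]_n - x_n = -\frac{na_n^2}{(-\rho+\epsilon_n)a_n} = \frac{na_n}{\rho-\epsilon_n},$$
and consequently
$$\frac{\RD[x]_n - L}{x_n - L} = 1 - \frac{1}{\rho-\epsilon_n}\cdot\frac{na_n}{y_n} \longrightarrow 1 - \frac{\rho}{\rho} = 0,$$
which is exactly the statement that $\RD$ accelerates the convergence of $x$.

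The main obstacle is the middle step: upgrading the qualitative ratio hypothesis into the limit $na_n/y_n\to\rho$. The $o(1/n)$ error term in the ratio hypothesis is accumulated over roughly $n$ terms in any such estimate, so care is required to ensure the accumulated error does not overwhelm the leading behaviour. The cleanest route may be to bypass producing an explicit constant $c$ and instead derive $na_n/y_n\to\rho$ directly by combining the recursion $y_{n+1}/y_n = 1-a_n/y_n$ with the hypothesis-controlled ratio $a_{n+1}/a_n$; once that is in hand, the algebraic reduction in the third step is entirely mechanical.
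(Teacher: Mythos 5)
The paper does not actually prove this statement: it is imported verbatim as Theorem~5 of Matos~\cite{MR1075161}, so there is no in-paper argument to compare against; your proposal therefore has to stand on its own, and in essence it does. Two caveats. First, the hypothesis must be read as saying that the limit exists \emph{finitely} and is negative (write it as $-\rho$ with $0<\rho<\infty$, as you do): if the value $-\infty$ were admitted the statement would be false, e.g.\ $x_{n+1}-x_n=(-1)^n$ satisfies the displayed condition with limit $-\infty$ but $x$ diverges. Second, your intermediate claim $|a_n|\sim c/n^{1+\rho}$ is strictly stronger than what the hypothesis gives: from $a_{n+1}/a_n=(n-\rho+\epsilon_n)/(n+1)$ with $\epsilon_n\to0$ one only gets $\log|a_n|=-(1+\rho)\log n+\sum_k \epsilon_k/k+O(1)$, and the sum may diverge (take $\epsilon_k=1/\log k$), so $a_n$ is merely regularly varying of index $-(1+\rho)$ and no constant $c$ need exist. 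Fortunately this claim is not needed, and the escape route you yourself flag is exactly the right one: with $y_n=L-x_n$ and $a_n$ of eventually constant sign, $n a_n$ is eventually strictly monotone and tends to $0$, so the $0/0$ Stolz--Ces\`aro theorem applied to $y_n/(na_n)$ gives
\begin{equation}
    \lim_{n\to\infty}\frac{y_n}{na_n}=\lim_{n\to\infty}\frac{y_{n+1}-y_n}{(n+1)a_{n+1}-na_n}=\lim_{n\to\infty}\frac{-a_n}{(-\rho+\epsilon_n)a_n}=\frac{1}{\rho},
\end{equation}
which is your key relation $na_n/y_n\to\rho$ without any appeal to an asymptotic constant. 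Together with the Raabe argument for convergence and the purely algebraic identity $(n+1)a_{n+1}-na_n=(-\rho+\epsilon_n)a_n$, your final computation $(\RD[x]_n-L)/(x_n-L)=1-\tfrac{1}{\rho-\epsilon_n}\cdot\tfrac{na_n}{y_n}\to0$ is correct (note also that the denominator of $\RD[x]_n$ is eventually nonzero for the same reason), so the proposal, with the Stolz--Ces\`aro route made explicit and the spurious $\sim c/n^{1+\rho}$ step dropped, is a complete proof.
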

The sequence $\lambda_\textnormal{back}^{(1)}$ appears to satisfy equation~\eqref{eq: RD_condition} for large $N$, as does $\RD^k[\lambda_\textnormal{back}^{(1)}]$ for $k=1,2,3$. After four applications of theorem~\ref{thm: RD_acceleration} (with some intermediate smoothing to damp oscillations) to $\lambda^{(1)}_\textnormal{back}$, we find $c_\textnormal{BM}\approx 0.0384506$, backing up the values we find in table~\ref{tab: backflow_const_bounds} from the generalised Richardson accelerator. We were only able to apply the accelerator once to each of the sequences $\lambda_\textnormal{back}^{(M)}$ for $M=2,3,4$ and the Richardson results of Table~\ref{tab: backflow_const_bounds} appear to be more stable estimates on the basis of the data for $N=500$, though they rely on our assumption that the asymptotic behaviour mirrors that in the $M=1$ case.

We now turn to the analysis of the minimum spectral estimates $\lambda_\textnormal{over}^{(M)}$ for $2 \leq M\leq 4$. Note that we do not include any analysis of $M=1$ since it is known that $\lambda_\textnormal{over}^{(1)}(N) \rightarrow \min \sigma(C^{(1)})=-1$ as $N \rightarrow \infty$. Our analysis follows that of the maximum spectral estimates and we assume that each $\lambda_\textnormal{over}^{(M)}$ has asymptotics of the same type as $\lambda_\textnormal{back}^{(M)}$. We 
damped oscillations in $\lambda_\textnormal{over}^{(M)}$ by the same methodology as before and then applied the generalised Richardson accelerators to accelerate the convergence of the minimum spectral estimates. 



The final values of the generalised Richardson accelerated overflow estimates are tabulated in Table~\ref{tab: overflow_const_bounds}. The values are written as upper and lower bounds on the overflow values $b^{(M)}_\textnormal{over}(1,\dots,1)$ for $M$ intervals of equal length and spacing, in line with the assumption that the monotonicity of the sequences remain fixed. 
We note in particular that the overflow value $b^{(2)}_{\textnormal{over}}(1,1)$ for two intervals of equal length and spacing is far from the $M=2$ overflow constant $c_\textnormal{over}^{(2)}=-1-c_\textnormal{BM}$ arising from all possible pairs of intervals.
\begin{table}
\begin{center}
    \begin{tabular}{ | m{1em} | p{13em} | m{8em} | m{9em} | }
        \hline
        $M$ & Acceleration parameters & Lower bound (LB)& Upper bound (UB)\\
        \hline
        $2$ & LB: $(0.5,1,1.5)$, \newline UB: $ (0.5,1,1.5,2)$ & $-1.0037916$ & $-1.0037899$\\
        \hline
        $3$ & LB: $(0.5,1,1.5)$, UB: $(0.5,1)$ & $-1.011078$ & $-1.011029$\\
        \hline
        $4$ & UB: $(0.5)$, $(0.5,1)$ & $-1.01947$ & $-1.01934$\\
        \hline
    \end{tabular}
\end{center}
\caption{\label{tab: overflow_const_bounds} Conjectured upper and lower bounds of $b^{(M)}_\textnormal{over}(1,\dots,1)$ for $M=2,3,4$ intervals of equal length and spacing.}
\end{table}

\section{Conclusion}\label{sec:conclusion}
 
We have shown that a free particle in $1$-dimensional quantum mechanics with non-negative momenta can exhibit probability flow in the opposite direction to its momentum over multiple disjoint time intervals. In particular, given a positive integer $M$ number of time intervals we have found that the total amount of backflow can grow as $M^{1/4}$. 

Similarly, we have shown the total probability flow in the same direction as the particle's momentum can exceed unity.
By contrast, we have shown that the total backflow for statistical ensembles of classical particles is bounded within the interval $[-1,0]$. This result follows a host of phenomenon relating quantized observables compared to their classical counterparts. As has been shown, classical observables which have sharp bounds on their values often have quantized counterparts whose values exceed these bounds \cite{MR2119354}. The same is true for the case of total probability flow. As already noted, it could be profitable to study overflow without the restriction to positive momentum states.

Here it is worth making a comment regarding Tsirelson precession, which has already been noted to have a connection with probability backflow in the preprint~\cite{zaw2024all}. Tsirelson showed that if a uniformly precessing coordinate is measured at a randomly chosen angle $\theta \in \{0,2\pi/3,4\pi/3\}$, then the probability $P$ that the measured coordinate is positive is bounded for classical systems by $P_\textnormal{classical} \leq 2/3$~\cite{Tsirelson2006}. However, there exist quantum systems that violate this inequality. In the PhD thesis of Zaw~\cite{Zaw_thesis}, it is shown that for a generalisation of the Tsirelson's protocol to an odd number $2M+1$ of angles, the maximum quantum violation seems to grow like $M^{1/4}$. Further work is being conducted to make the relationship between multiple quantum backflow and this generalisation of Tsirelson's protocol precise. 

We have illustrated our analytical results with numerical calculations of backflow and overflow. In particular this shows that the backflow constants $c_\textnormal{back}^{(M)}$ are strictly greater than $c^{(1)}=c_\textnormal{BM}$ for $M=2,3,4$, and provides a new estimate of the Bracken--Melloy constant $c_\textnormal{BM}\approx 0.0384506$ to 6 significant figures. This estimate relies on numerical acceleration techniques and appears robust. An independent calculation is in progress and will provide a cross-check.

Quantum backflow over a single time interval can be attributed to the ability of the quantum mechanical current of a quantum state to attain arbitrarily negative values. The existence of backflow over multiple disjoint time regions indicates that, ignoring effects of measurement on the state, this phenomena can repeat. Namely, there exist states $\psi$ with non-negative momentum for which a time of arrival detector would repeatedly detect the the state crossing $x=0$. Furthermore, the calculation in Section~3 shows that, as the number $M$ of disjoint time intervals grows, the maximal backflow grows at least as fast as $\mathcal{O}(M^{1/4})$.  This larger effect potentially opens a new avenue to experimentally verify quantum backflow.  However, the fact that this requires measurements of probability differences at many different times presents a challenge that would have to be overcome.

\paragraph{Acknowledgments} 
We thank Reinhard Werner, who posed the question of whether multiple backflow was possible following talks by HKK and CJF at the workshop \emph{Quantum Advantages in Mechanical Systems}, IQOQI, Vienna, September 2023. 
We also thank Lin Htoo Zaw for useful conversations regarding the relation with Tsirelson protocols, and Simon Eveson for comments on parts of the manuscript. Numerical computations were partly performed using the Viking cluster, a high performance compute facility provided by the University of York. We are grateful for computational support from the University of York IT Services and Research IT team. The work of HKK is supported by an EPSRC Studentship.

For the purpose of open access, the authors have applied a creative commons attribution (CC BY) licence to any author accepted manuscript version arising.

\bibliography{biblog}

\pagebreak


\begin{center}
{\Huge Supplemental Material}
\end{center}

\appendix
\section{The Bracken--Melloy operator}

\label{appendix:single_backflow}
Let $L^2(\mathbb{R},dx)$ be the usual Hilbert space of states for the particle on a line and
$L^2(\mathbb{R}^+,dk)$ be the Hilbert space of states with nonnegative momentum, in the momentum representation. The reverse Fourier transform
determines an isometry $\mathcal{R}:L^2(\mathbb{R}^+,dk)\to L^2(\mathbb{R},dx)$ by
\begin{equation}\label{eq:rev_Fourier}
(\mathcal{R}\phi)(x) = \frac{1}{\sqrt{2\pi}} \int_0^\infty dk \, e^{ixk} \phi(k)
\end{equation}
for $\phi\in L^2(\mathbb{R}^+,dk)$. 

For times $t_1<t_2$ let 
\begin{equation}
    \Delta_{\langle t_1,t_2\rangle}(\psi) = \textnormal{Prob}_\psi(X<0|t=t_2)-\textnormal{Prob}_\psi(X<0|t=t_1),
\end{equation}
where $\psi$ evolves according to the free Schr\"odinger evolution, $\psi_t=e^{-iHt}\psi$, and $H\psi=-\psi''$ on its usual domain. We will be interested in nonnegative momentum states of the form $\psi=\mathcal{R}\phi$ for $\phi\in L^2(\mathbb{R}^+,dk)$, for which we have $e^{-iHt}\psi= \mathcal{R}S_t\phi$, where $(S_t\phi)(k) = e^{-ik^2t}\phi(k)$ defines a unitary operator $S_t\in\mathcal{B}(L^2(\mathbb{R}^+,dk))$ for any $t\in\mathbb{R}$. Then, if $\|\psi\|=1$, one has
\begin{equation}
\Delta_{\langle t_1,t_2\rangle}(\psi) = \ip{\hat{\psi}}{B_{\langle t_1,t_2\rangle}\hat{\psi}}
\end{equation}
where
\begin{equation}
    B_{\langle t_1,t_2\rangle} = S_{t_2}^* \mathcal{R}^* \Pi^- \mathcal{R} S_{t_2} - 
    S_{t_1}^* \mathcal{R}^* \Pi^- \mathcal{R} S_{t_1}
\end{equation}
and $\Pi^-$ is the projection onto the subspace $L^2(\mathbb{R}^-,dx)\subset L^2(\mathbb{R},dx)$
of states supported on the negative half-line in position space.
Evidently $B_{\langle t_1,t_2\rangle}$ is the difference of two positive semidefinite self-adjoint bounded operators, each of which is bounded above by the identity. Accordingly, $B_{\langle t_1,t_2\rangle}$ is a self-adjoint bounded operator with spectrum $\sigma(B_{\langle t_1,t_2\rangle})\subseteq [-1,1]$.

To obtain expressions for $B_{\langle t_1,t_2\rangle}$ as an integral operator, we proceed as in~\cite{MR2198971}. First introduce $\Sigma=\Pi^+-\Pi^-=1 -2\Pi^-$ and note that
\begin{equation}
     B_{\langle t_1,t_2\rangle} =-\frac{1}{2} \left(S_{t_2}^* \mathcal{R}^* \Sigma\mathcal{R} S_{t_2} - 
    S_{t_1}^* \mathcal{R}^* \Sigma \mathcal{R} S_{t_1}\right)
\end{equation}
Next, observe that $i\mathcal{R}^*\Sigma \mathcal{R}$ is the restriction to $L^2(\mathbb{R}^+,dk)$ of the Hilbert transform
\begin{equation}
    (\mathcal{H} \phi)(k)=  \frac{1}{\pi}\textnormal{PV}\int_{-\infty}^\infty dl \frac{\phi(l)}{k-l}
\end{equation}
on $L^2(\mathbb{R},dk)$. Here the PV denotes a principal value integral $\lim_{\epsilon\to 0+}\left(\int_{-\infty}^{k-\epsilon}dl+\int_{k+\epsilon}^\infty dl\right)$, which converges in $L^2$, and  pointwise almost everywhere, for $\phi\in L^2(\mathbb{R})$. See Theorem 4.1.12 in~\cite{Grafakos:2014} or Chapter II.3 of \cite{MR0290095}.  
Thus
\begin{equation}
    B_{\langle t_1,t_2\rangle} =-\frac{1}{2i}\left(S_{t_2}^* \mathcal{H} S_{t_2} - 
    S_{t_1}^* \mathcal{H} S_{t_1}\right),
\end{equation}
where we understand the restriction of $\mathcal{H}$ to $L^2(\mathbb{R}^+,dk)$ on the right-hand side.
The integral representation
\begin{equation}
    (B_{\langle t_1,t_2\rangle} \phi)(k) = -\frac{1}{2\pi i}\int_{0}^\infty dl \frac{
    e^{it_2(k^2-l^2)}-e^{it_1(k^2-l^2)}}{k-l}\phi(l)
\end{equation}
then holds pointwise almost everywhere for any $\phi\in L^2(\mathbb{R}^+,dk)$; note that the principal value is no longer needed because the fraction appearing in the integrand is regular. A particular special case is
\begin{equation}
    (B_{\langle -\tau,\tau\rangle} \phi)(k) = -\frac{1}{\pi}\int_{0}^\infty dl \frac{
    \sin(k^2-l^2)\tau}{k-l}\phi(l)
\end{equation}
for any $\tau>0$.

Note also that $t\mapsto S_t^*\mathcal{H}S_t$ is strongly continuous, because $S_t$ and $S_t^*$ are strongly continuous and uniformly bounded. It follows that $(t_1,t_2)\mapsto B_{\langle t_1,t_2\rangle}$ is strongly continuous on 
the set $\mathcal{T}_2=\{(t_1,t_2)\in\mathbb{R}^2: t_1<t_2\}$.

We now turn to some unitary equivalences between operators of the form $B_{\langle t_1,t_2\rangle}$ and related operators. First, one has the elementary covariance relation
\begin{equation}
  S_\tau^* B_{\langle t_1,t_2\rangle} S_\tau= B_{\langle t_1+\tau,t_2+\tau\rangle} 
\end{equation}
for any $\tau\in\mathbb{R}$. Second, for any $\alpha>0$, let $D_\alpha\in \mathcal{B}(L^2(\mathbb{R}^+,dk))$ be the unitary operator $(D_\alpha\phi)(k)=\sqrt{\alpha}\phi(\alpha k)$. It is readily checked that
$S_\tau D_\alpha = D_\alpha S_{\tau/\alpha^2}$ and also that $D_\alpha^*\mathcal{R}^*\Pi^-\mathcal{R}D_\alpha=\mathcal{R}^*\Pi^-\mathcal{R}$. Consequently, we have 
\begin{equation}
    D_\alpha^* B_{\langle t_1,t_2\rangle} D_\alpha= B_{\langle t_1/\alpha^2,t_2/\alpha^2\rangle}.
\end{equation}
Combining, for any $\tau>0$, we have
\begin{equation}
    B_{\langle t_1,t_2\rangle}=S^*_{(t_1+t_2)/2}D^*_{\sqrt{2\tau/(t_2-t_1)}}B_{\langle -\tau,\tau\rangle}
    D_{\sqrt{2\tau/(t_2-t_1)}}S_{(t_1+t_2)/2}
\end{equation}
which shows that all the operators $B_{\langle t_1,t_2\rangle}$ for $t_1<t_2$ are unitarily equivalent and therefore have the same spectrum. 
Third, for any $\tau>0$ we may define a unitary $U_\tau:L^2(\mathbb{R}^+,dk)\to L^2(\mathbb{R}^+,dq)$ by
\begin{equation}
    (U_\tau \phi)(q) = \frac{\phi(\sqrt{q/\tau})}{\sqrt{2}(q\tau)^{1/4}},
\end{equation}
where $q$ is a dimensionless variable (representing action in units of $\hbar$).
Changing variables, one finds that
\begin{equation}
    (U_\tau B_{\langle t_1,t_2\rangle} U_\tau^* \varphi)(p)= 
    -\frac{1}{4\pi i}\int_{0}^\infty dq \frac{
    (e^{it_2(p-q)/\tau}-e^{it_1(p-q)/\tau})}{p-q}\Big[\Big(\frac{p}{q}\Big)^{1/4}+\Big(\frac{p}{q}\Big)^{-1/4}\Big]\varphi(q)
\end{equation} 
for $\varphi\in L^2(\mathbb{R}^+,dq)$
Accordingly, every single backflow operator $B_{\langle t_1,t_2\rangle}$ is unitarily equivalent to
the bounded operator $C=U_\tau B_{\langle -\tau,\tau\rangle} U_\tau^*\in \mathcal{B}(L^2(\mathbb{R}^+,dq))$ given by
\begin{equation}
    (C\varphi)(p)=-\frac{1}{2\pi }\int_{0}^\infty dq \frac{
    \sin(p-q)}{p-q}\Big[\Big(\frac{p}{q}\Big)^{1/4}+\Big(\frac{p}{q}\Big)^{-1/4}\Big]\varphi(q).
\end{equation}
By definition, $c_{\textnormal{BM}}=\max \sigma (C)$, and as $\| C\|\le 1$ it follows that
$\sigma(C)\subset [-1,c_{\textnormal{BM}}]$. All the assertions made in Theorem~\ref{thm:B_facts} have been proved.

Finally, the following result is used in Section~\ref{sec:limiting cases}.
\begin{lemma}\label{lem:cutoff_norm_continuity}
Let $\iota_\Lambda:L^2([0,\Lambda],dk)\to L^2(\mathbb{R}^+,dk)$ be the 
subspace inclusion. Then $(s,t)\mapsto \iota_\Lambda^* B_{\langle s,t\rangle}\iota_\Lambda\in \mathcal{B}(L^2([0,\Lambda],dk))$ is norm continuous 
on $\mathcal{T}_2$.
\end{lemma}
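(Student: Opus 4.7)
The plan is to recognise that, after compression to $L^2([0,\Lambda],dk)$, the operator becomes an integral operator with a kernel that is bounded on the compact square $[0,\Lambda]^2$, and then to bound the operator norm of the difference by the Hilbert--Schmidt norm, which will be Lipschitz in $(s,t)$.

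More concretely, using the integral representation of $B_{\langle s,t\rangle}$ established earlier in Appendix~\ref{appendix:single_backflow}, the compressed operator $\iota_\Lambda^* B_{\langle s,t\rangle}\iota_\Lambda$ is an integral operator on $L^2([0,\Lambda],dk)$ with kernel
\begin{equation}
K_{s,t}(k,l)=-\frac{1}{2\pi i}\frac{e^{it(k^2-l^2)}-e^{is(k^2-l^2)}}{k-l},\qquad k,l\in[0,\Lambda].
\end{equation}
The apparent singularity at $k=l$ is removable, so $K_{s,t}$ is a bounded (indeed continuous) function on the compact set $[0,\Lambda]^2$. Hence $\iota_\Lambda^* B_{\langle s,t\rangle}\iota_\Lambda$ is a Hilbert--Schmidt operator whose Hilbert--Schmidt norm is $\|K_{s,t}\|_{L^2([0,\Lambda]^2)}$.

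The main step is a pointwise estimate on $K_{s,t}-K_{s',t'}$. Using the elementary inequality $|e^{i\alpha}-e^{i\beta}|\le |\alpha-\beta|$ and $|k^2-l^2|=|k+l||k-l|$, I would bound
\begin{equation}
\bigl|K_{s,t}(k,l)-K_{s',t'}(k,l)\bigr|\le\frac{1}{2\pi}\bigl(|t-t'|+|s-s'|\bigr)(k+l)\le\frac{\Lambda}{\pi}\bigl(|t-t'|+|s-s'|\bigr)
\end{equation}
uniformly on $[0,\Lambda]^2$. Integrating the square over $[0,\Lambda]^2$ gives a Hilbert--Schmidt bound of the order $\Lambda^2(|t-t'|+|s-s'|)/\pi$, and since the operator norm is dominated by the Hilbert--Schmidt norm, the same bound applies to $\|\iota_\Lambda^*(B_{\langle s,t\rangle}-B_{\langle s',t'\rangle})\iota_\Lambda\|_\Lambda$. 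This yields Lipschitz continuity (hence continuity) on $\mathcal{T}_2$.

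The only subtlety is to justify treating $K_{s,t}$ as a genuine function rather than a principal-value kernel at $k=l$, but this is already taken care of in Appendix~\ref{appendix:single_backflow}, where it is noted that the fraction appearing in the integrand is regular because the numerator vanishes at $k=l$. With that in hand, the argument reduces to the elementary estimates above, so no serious obstacle is anticipated.
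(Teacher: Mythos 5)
Your proposal is correct, but it follows a different route from the paper. You work directly with the integral kernel: after compression to $L^2([0,\Lambda],dk)$ the kernel $K_{s,t}(k,l)$ is bounded on $[0,\Lambda]^2$ (the singularity at $k=l$ being removable, as the appendix already notes), and the elementary bound $|e^{i\alpha}-e^{i\beta}|\le|\alpha-\beta|$ together with $|k^2-l^2|=|k+l|\,|k-l|$ gives a uniform pointwise Lipschitz estimate on $K_{s,t}-K_{s',t'}$, hence a Hilbert--Schmidt (and thus operator-norm) bound of order $\Lambda^2\bigl(|s-s'|+|t-t'|\bigr)$. The paper instead never touches the kernel: it writes $B_{\langle s,t\rangle}$ in the factorised form $S_t^*\mathcal{R}^*\Pi^-\mathcal{R}S_t - S_s^*\mathcal{R}^*\Pi^-\mathcal{R}S_s$, shows that the cutoff evolution $S_t\iota_\Lambda$ is norm Lipschitz in $t$ (via $\sin^2((t-s)p^2/2)\le \Lambda^4|t-s|^2/4$ on $[0,\Lambda]$), and then invokes the general inequality $\|A^*CA-B^*CB\|\le\|C\|(\|A\|+\|B\|)\|A-B\|$. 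Both arguments yield Lipschitz constants of order $\Lambda^2$. Your kernel argument has the bonus of exhibiting the compressed operators as Hilbert--Schmidt (hence compact), which is extra information; the paper's argument is more structural, avoids any appeal to the explicit kernel, and would transfer unchanged to situations where a convenient closed-form kernel is not available. The only point you should make explicit is that the compression of $B_{\langle s,t\rangle}$ really is the integral operator with kernel $K_{s,t}$ restricted to $[0,\Lambda]^2$; this follows from the pointwise almost-everywhere integral representation established in the appendix, so there is no gap.
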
 
\begin{proof}
    
For $\phi \in L^2([0,\Lambda])$ we have (almost everywhere)
\begin{equation}
    |(S_{s} \phi- S_{t}\phi)(p)|^2  =\sin^2\Big(\frac{(t -s)p^2}{2}\Big) |\phi(p)|^2 \leq \frac{\Lambda^4}{4}|s -t|^2 |\phi(p)|^2
\end{equation}
and consequently, in the operator norm on $\mathcal{B}(L^2([0,\Lambda],dk),L^2(\mathbb{R}^+,dk))$ 
\begin{equation}
    \left\|(S_{s}-S_{t})\iota_\Lambda\right\| \leq \frac{\Lambda^2|s-t|}{2}.
\end{equation}
Thus the restricted Schr\"{o}dinger time evolution operators $S_t\iota_\Lambda:L^2([0,\Lambda],dk)\to L^2(\mathbb{R}^+,dk)$ are norm continuous in $t$. Using the general inequality $\| A^*C A - B^*C B\|
= \| A^*C(A-B)+(A-B)^*CB\| 
\le  \|C\|(\|A\|+\|B\|)\|A-B\|$, for bounded Hilbert space operators $A,B\in\mathcal{B}(\mathscr{H},\mathscr{K})$ and $C\in\mathcal{B}(\mathscr{K})$ it follows that 
$\iota_\Lambda^* S_t^\ast \mathcal{R}^\ast \Pi^- \mathcal{R} S_t\iota_\Lambda$
is also norm continuous in $t$ and the result follows.
\end{proof}

\section{Proof of Theorem~\ref{thm: spectral_asymptotic_unboundedness}}
\label{appendix: Lemma3.3Proof}

The following Lemma is the key calculation in the proof of Theorem~\ref{thm: spectral_asymptotic_unboundedness}. 
\begin{lemma}
    \label{lemma: lower_bound_supremum}
        For even $M\in\mathbb{N}$ and $\epsilon\in(0,\pi/6]$,
        \begin{equation}
        \label{eq: unbounded_sup}
            \left\langle \psi_M | C^{(M)} \psi_M \right\rangle \geq \frac{2^{7/4}\epsilon^{3/4}}{3\pi^{11/4}}M^{1/4}S\left(\frac{2\epsilon}{\pi M}\right),
        \end{equation}
        where
        \begin{equation}
         S(\eta)=1-\frac{31\pi^2}{80}\eta^{1/4}-\frac{5}{8}\eta-\frac{9\pi^2}{176}\eta^{5/4}.
        \end{equation}
    \end{lemma}
    Clearly $S(\eta)>0$ for sufficiently small $\eta\ge 0$. Fixing $\epsilon\in (0,\pi/6]$ sufficiently small that $S>0$ on $[0,\epsilon/\pi]$, and using continuity of $S$, let
    \begin{equation}
    k = \min_{\eta\in [0,\epsilon/\pi]} \frac{2^{7/4}\epsilon^{3/4}}{3\pi^{11/4}}S(\eta).
    \end{equation}
    Then~\eqref{eq:supbound} follows because $2\epsilon/(\pi M)\in [0,\epsilon/\pi]$ for all even $M\ge 2$. (The value of $k$ can be optimised by varying $\epsilon$.) To establish the first result in~\eqref{eq:spectral_asymptotics}, we need some information about $\sup \sigma(C^{(M)})$ for odd $M\ge 1$.
    Writing
    \begin{equation}
        B^{(M+1)}_{\langle \boldsymbol{t}_{M+1}\rangle} =B^{(M)}_{\langle -T\frac{2M+1}{2},...,-\frac{T}{2},\frac{T}{2},...,T\frac{2M-3}{2}\rangle}
        +B^{(1)}_{\langle T\frac{2M-1}{2}, T\frac{2M+1}{2}\rangle},
    \end{equation}
    the left-hand side is unitarily equivalent to $C^{(M+1)}$, while
    by making time translations and dilations, the first and second terms on the right-hand side are unitarily equivalent to 
    $C^{(M)}$ and $C^{(1)}$ respectively. Consequently
    $C^{(M+1)}$ is a sum of operators unitarily equivalent to $C^{(M)}$ and $C^{(1)}$. As the supremum of the spectrum is a sub-additive function on bounded self-adjoint operators, and spectra are invariant under
    unitary equivalence,
    it follows that
    \begin{align}
        \sup \sigma\left(C^{(M+1)}\right) &\leq \sup \sigma\left(C^{(M)}\right)+ \sup\sigma\left(C^{(1)}\right)
        = c_\textnormal{BM} + \sup \sigma\left(C^{(M)}\right),
    \end{align}
    and therefore $\sup \sigma\left(C^{(M)}\right)\ge k (M+1)^{1/4}-c_\textnormal{BM}$ for all odd $M\ge 1$, using~\eqref{eq:supbound}.
    Combining with the bound~\eqref{eq:supbound} for even $M$, 
    one has $\sup \sigma\left(C^{(M)}\right)\ge k M^{1/4}-c_\textnormal{BM}$ for all $M\in\mathbb{N}$ and
    the first part of~\eqref{eq:spectral_asymptotics} follows immediately.

    Finally, for any $M\ge 1$ $B^{(M+1)}_{\langle \boldsymbol{t}_{M+1}\rangle}$ can be rewritten as 
\begin{equation}
B^{(M+1)}_{\langle \boldsymbol{t}_{M+1}\rangle}=
B^{(1)}_{\langle -T\frac{2M+1}{2},T\frac{2M+1}{2}\rangle}  - B^{(M)}_{\langle \boldsymbol{t}_{M}\rangle }
\end{equation}
which, because all single-backflow operators are unitarily equivalent, gives
\begin{equation}
    C^{(M+1)}= \widetilde{C}^{(1)}-C^{(M)}
\end{equation}
where $\widetilde{C}^{(1)}$ is unitarily equivalent to $C^{(1)}$. Using invariance of the spectrum under unitary equivalence and the fact that $\sup \sigma(-A) = -\inf \sigma(A)$ for any bounded self-adjoint operator $A$, we find
    \begin{equation}
        \sup \sigma\left(C^{(M+1)}\right) \leq \sup \sigma\left( C^{(1)}  \right) - \inf \sigma \left(C^{(M)}\right)
    \end{equation}
and consequently
\begin{equation}
        \inf \sigma \left(C^{(M)}\right) \leq c_\textnormal{BM} - \sup \sigma \left(C^{(M+1)}\right).
    \end{equation}
Thus $\limsup_{M\to\infty}M^{-1/4}\inf \sigma \left(C^{(M)}\right)\leq -\liminf_{M\to\infty} M^{-1/4}\sup \sigma \left(C^{(M+1)}\right)\leq -k$.   
This completes the proof of Theorem~\ref{thm: spectral_asymptotic_unboundedness}, once Lemma~\ref{lemma: lower_bound_supremum} is established below.

We remark that the first positive zero of $S(\eta)$ occurs at $\eta_0=0.0046$ to two significant figures and $S$ is decreasing on the interval $[0,\eta_0]$. Then the constant $k$ in Theorem~\ref{thm: spectral_asymptotic_unboundedness}
may be taken as
\begin{equation}
  k = \max_{\epsilon\in [0,\pi\eta_0]}
  \frac{2^{7/4}\epsilon^{3/4}}{3\pi^{11/4}}S(\epsilon/\pi)=2.1\times 10^{-4},
\end{equation}
computing the maximum numerically to $2$ significant figures. By considering more complicated trial functions, it seems likely that we can increase the coefficient; at any rate, the main point is to establish the growth of at least $\mathcal{O}(M^{1/4})$. For comparison, the numerical results of Section~\ref{section: numerical_calculation} show that
$\sup \sigma(C^{(4)})\gtrsim 0.095$, which far exceeds the lower bound
$4^{1/4}k = 0.0003$ for this case.

\begin{proof}[Proof of Lemma~\ref{lemma: lower_bound_supremum}]
The expectation value of $C^{(M)}$ in the state $\psi_M$ (see~\eqref{eq:psiM}) can be written in the form
\begin{equation}
\begin{aligned}
\label{mat_element_decomp}
    \ip{\psi_M}{C^{(M)} \psi_M } = -\frac{M^2}{4\pi\epsilon}\Big[ I_{00} + 2I_{01} + I_{11}\Big]
\end{aligned}
\end{equation}
where, for $i,j\in\{0,1\}$, 
\begin{equation}
    I_{ij} = \int_{\Ic_i \times \Ic_j} dp \: dq \: g_M(p-q) \Big[\Big(\frac{p}{q}\Big)^{1/4} + \Big(\frac{p}{q}\Big)^{-1/4}\Big]
\end{equation}
with  
\begin{equation}
        \Ic_0 = \left[0, \frac{\epsilon}{M}\right], \qquad \Ic_1 = \left[\frac{\pi}{2}-\frac{\epsilon}{2M}, \frac{\pi}{2}+\frac{\epsilon}{2M}\right]
\end{equation}
and $g_M \in \mathcal{C}^\infty(\mathbb{R})$ is given by
\begin{equation}
    g_M(x) = \frac{\sin(2Mx)}{2Mx\cos(x)}
\end{equation}
except at the isolated zeros of the denominator, where one extends by continuity. 

We first show that $|g_M(x)|\leq 1$ for all $x\in\mathbb{R}$. 
Recall that $\sinc(x)=\sin(x)/x$ for $x\neq 0$ and $\sinc(0)=1$.
By insertion of the factor $\sin(x)/\sin(x)$ in the definition of $g_M$ and using the double angle formula for the $\sin$ function, one finds
\begin{align}\label{g_M_initial_bound}
    g_M(x) &=  \frac{\sin(2Mx)}{2M \sin(x)\cos(x)}\cdot \frac{\sin(x)}{x}
    =   \frac{\sin(2Mx)}{M\sin(2x)}\sinc (x)\\
    &=\frac{1}{M}U_{M-1}(\cos(2x)) \sinc (x),
\end{align}
where $U_n$ are the Chebyshev polynomials of the second kind defined by
\begin{equation}\label{eq:Un_def}
    U_n(\cos(x)) = \frac{\sin((n+1)x)}{\sin(x)}
\end{equation}
and, as shown in Lemma~\ref{U_bound_lemma} of Appendix~\ref{Chebyshev_appendix}, obey $|U_n(\cos(\theta))| \leq n+1$ 
for all $\theta\in\mathbb{R}$. Thus $|g_M(x)|\le |\sinc(x)|\le 1$ as claimed. 

Using $|g_M(x)| \leq 1$ on $\Ic_0$, one can bound $I_{00}$ by
\begin{align}
    I_{00} &\leq \int_{\Ic_0 \times \Ic_0} dp \: dq \: \left[\Big(\frac{p}{q}\Big)^{1/4} + \Big(\frac{p}{q}\Big)^{-1/4}\right]
    = \frac{32\epsilon^2}{15M^2}.
\end{align}
Similarly, 
\begin{equation}\label{I_11_estimate}
    I_{11} \leq \int_{\Ic_1\times\Ic_1} dp \: dq \left[\Big(\frac{p}{q}\Big)^{1/4} + \Big(\frac{p}{q}\Big)^{-1/4}\right] \leq
    \frac{2\epsilon^2}{M^2}\Bigg(\frac{1+\epsilon/(\pi M)}{1-\epsilon/(\pi M)}\Bigg)^{1/4},
\end{equation}
where we use the fact that 
\begin{equation}
    \frac{p}{q} \leq \frac{1+\epsilon/(\pi M)}{1-\epsilon/(\pi M)}
\end{equation}
for $p,q \in \Ic_1$.
One can write
\begin{equation}
    \Bigg(\frac{1+\epsilon/(\pi M)}{1-\epsilon/(\pi M)}\Bigg)^{1/4} = \Big(1+\frac{2\epsilon/(\pi M)}{1-\epsilon/(\pi M)}\Big)^{1/4}
    \leq 1 + \frac{\epsilon/(\pi M)}{2(1-\epsilon/(\pi M))}
\end{equation}
by Bernoulli's inequality. 
Clearly, for $M \geq 2$ and $\epsilon \leq \tfrac{\pi}{6}$ we have $\tfrac{\epsilon}{\pi M} \leq \tfrac{1}{12}$ and hence
\begin{equation}
    \Bigg(\frac{1+\epsilon/(\pi M)}{1-\epsilon/(\pi M)}\Bigg)^{1/4}  \leq 1+\frac{\epsilon/(\pi M)}{2(1-\epsilon/(\pi M))}\leq 1+\frac{6\epsilon}{11\pi M}.
\end{equation}
Combining this inequality with \eqref{I_11_estimate} we find
\begin{equation}
    I_{11} \leq \frac{2\epsilon^2}{M^2}\Big(1+\frac{6\epsilon}{11\pi M}\Big).
\end{equation}
Finally, consider $I_{01}$, the integral over $\Ic_0 \times \Ic_1$.
We claim that 
\begin{equation}\label{eq:gMbound_I01}
g_M(p-q)\leq -\frac{2\sinc(3\epsilon)}{\pi}\left(1-\frac{\epsilon}{\pi M}\right)
\end{equation}
holds for $(p,q) \in \Ic_0 \times \Ic_1$, for which one also has 
\begin{equation}\label{eq:qminusp_range}
    \frac{\pi}{2}-\frac{3\epsilon}{2M} \leq q-p \leq \frac{\pi}{2}+\frac{\epsilon}{2M}.
\end{equation} 
To establish the claim, we first rewrite the even function $g_M(x)$ for $x\neq0$ as 
\begin{equation}
    g_M(x) = \frac{\sin(2Mx)}{2M x\cos(x)}=-\frac{\sin(2M(x-\pi/2))}{2M x\sin(x-\pi/2)}=-\frac{\sinc(2M(x-\pi/2))}{x\:\sinc(x-\pi/2)}
\end{equation}
after recalling that $M$ is even, so  
\begin{align}
    g_M\left(\frac{\pi}{2}+y\right)&=-\frac{\sinc(2My)}{\left(\pi/2+y\right)\:\sinc(y)} 
    =-\frac{U_{2M-1}(\cos(y))}{2M\left(\pi/2+y\right)}.
\end{align}
By Lemma~\ref{U_extreme_values_lemma} in Appendix~\ref{Chebyshev_appendix}, the even expression $U_{2M-1}(\cos(y))$ is nonnegative and nonincreasing on $[0,\tfrac{\pi}{2M}]$. It therefore follows that for $0<\epsilon \leq \tfrac{\pi}{6}$ and $y \in [-\tfrac{3\epsilon}{2M}\tfrac{\epsilon}{2M}]$, we have
\begin{equation}
    g_M\left(\frac{\pi}{2}+y\right)\leq - \frac{\sinc(3\epsilon)}{(\pi/2+y)\sinc(3\epsilon/(2M))} \leq -\frac{2\sinc(3\epsilon)}{\pi}\Big(1-\frac{2y}{\pi}\Big),
\end{equation} 
where the last step uses the elementary inequalities $\sinc(x)\leq 1$ and $-1/(1+x)\leq -(1-x)$ for all $x\in\mathbb{R}$. 
Since $y \in [-\tfrac{3\epsilon}{2M},\tfrac{\epsilon}{2M}]$, we deduce
\begin{equation}
    g_M\left(\frac{\pi}{2}+y\right)\leq -\frac{2\sinc(3\epsilon)}{\pi}\left(1-\frac{\epsilon}{\pi M}\right),
\end{equation}
which establishes the claim~\eqref{eq:gMbound_I01}.
Now we can use~\eqref{eq:gMbound_I01} to obtain an upper bound for $I_{01}$ by
\begin{align}
I_{01} &\leq-\frac{2\sinc(3\epsilon)}{\pi}\Big(1-\frac{\epsilon}{\pi M}\Big)\int_{\Ic_0 \times \Ic_1} dp \: dq \:\Big[\Big(\frac{p}{q}\Big)^{1/4} + \Big(\frac{p}{q}\Big)^{-1/4}\Big]\\
\label{I01_bound}
&< -\frac{32\sinc(3\epsilon)\epsilon^{3/4}}{15\pi M^{3/4}}\Big(1-\frac{\epsilon}{\pi M}\Big)\Big[\Big(\frac{\pi}{2}+\frac{\epsilon}{2M}\Big)^{5/4}-\Big(\frac{\pi}{2}-\frac{\epsilon}{2M}\Big)^{5/4}\Big],
\end{align}
where we have dropped the contribution from the first term in the integrand in the second line.
By the mean value theorem, we have 
\begin{equation}
\label{MVT_application}
    0<\begin{cases}
        \alpha y^{\alpha-1} & 0<\alpha<1\\
        \alpha x^{\alpha-1} & \alpha>1
    \end{cases}
    <\frac{y^\alpha-x^\alpha}{y-x}
    <\begin{cases}
        \alpha x^{\alpha-1} & 0<\alpha<1\\
        \alpha y^{\alpha-1} & \alpha>1
    \end{cases}.
\end{equation}
Applying \eqref{MVT_application} with $\alpha=5/4$, we find
\begin{equation}
\label{eq: 5/4_MVT_bound}
    \Big(\frac{\pi}{2}+\frac{\epsilon}{2M}\Big)^{5/4}-\Big(\frac{\pi}{2}-\frac{\epsilon}{2M}\Big)^{5/4} \geq \frac{5\epsilon}{4M}\Big(\frac{\pi}{2}\Big)^{1/4}\Big(1-\frac{\epsilon}{\pi M}\Big)^{1/4},
\end{equation}
and substituting back into \eqref{I01_bound} gives
\begin{equation}
    I_{01} \leq - 
    \frac{8\sinc(3\epsilon)\epsilon^{7/4}}{3 \cdot 2^{1/4}\pi^{3/4} M^{7/4}}
    \left(1-\frac{\epsilon}{\pi M}\right)^{5/4}.
\end{equation}
An application of Taylor's theorem yields $(1-x)^{5/4} \geq 1-\tfrac{5x}{4}$ for all $x \in (0, 1)$. Recalling that $0<\epsilon \leq \tfrac{\pi}{6}$ so that $\sinc(3\epsilon)\geq \sinc(\pi/2)= \tfrac{2}{\pi}$ we obtain
the final bound
\begin{equation}
    I_{01} \leq -
    \frac{4}{3}\left(\frac{2\epsilon}{M\pi}\right)^{7/4} 
    \left(1-\frac{5\epsilon}{4\pi M}\right).
\end{equation}

Recombining the bounds for $I_{00}, I_{01} \textnormal{ and } I_{11}$, we find
\begin{equation}
\label{supremum_bounds}
    \sup \sigma\left(J^{(M)}\right) \geq \frac{2^{11/4}\epsilon^{3/4}}{3\pi^{11/4}}M^{1/4}S\left(\frac{2\epsilon}{\pi M}\right)
\end{equation}
where
\begin{equation}
    S(\eta)=1-\frac{31\pi^2}{80}\eta^{1/4}-\frac{5}{8}\eta-\frac{9\pi^2}{176}\eta^{5/4}.
\end{equation}
thus completing the proof of Lemma~\ref{lemma: lower_bound_supremum}.
\end{proof}

\section{Chebyshev Polynomials}
\label{Chebyshev_appendix}

This appendix concerns the value and distribution of the extrema of 
Chebyshev polynomials of the second kind as defined in~\eqref{eq:Un_def}. The results are presumably known, though we did not find a reference. The following Lemma bounds the absolute value of the extrema of $U_n(x)$. 
\begin{lemma}
\label{U_bound_lemma}
    For $n\in\mathbb{N}$, the Chebyshev polynomials of the second kind are bounded as
    \begin{equation}
        |U_{n-1}(x)|\leq n
    \end{equation}
    for all $x\in[-1,1]$ and furthermore achieve this extremum for $x=\pm 1$.
\end{lemma}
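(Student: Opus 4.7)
The plan is to reduce the bound to a geometric sum of exponentials and apply the triangle inequality, with the endpoints handled by continuity (or explicit evaluation of the limiting form).

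First I would parametrise $x\in[-1,1]$ as $x=\cos\theta$ with $\theta\in[0,\pi]$, so that by the defining relation \eqref{eq:Un_def},
\begin{equation}
U_{n-1}(\cos\theta)=\frac{\sin(n\theta)}{\sin\theta}
\end{equation}
for $\theta\in(0,\pi)$. The key algebraic identity is the geometric sum
\begin{equation}
e^{in\theta}-e^{-in\theta}=(e^{i\theta}-e^{-i\theta})\sum_{k=0}^{n-1}e^{i(n-1-2k)\theta},
\end{equation}
which, upon dividing by $e^{i\theta}-e^{-i\theta}=2i\sin\theta$ and taking the (automatically) real part, yields
\begin{equation}
\frac{\sin(n\theta)}{\sin\theta}=\sum_{k=0}^{n-1}\cos\bigl((n-1-2k)\theta\bigr).
\end{equation}
The triangle inequality then gives $|U_{n-1}(\cos\theta)|\leq n$ for $\theta\in(0,\pi)$, which is the desired bound on the open interval $(-1,1)$.

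For the endpoints $x=\pm 1$ I would invoke continuity of the polynomial $U_{n-1}$ and compute the limits directly. At $\theta=0$ (i.e.\ $x=1$), every term in the cosine sum equals $1$, so $U_{n-1}(1)=n$; at $\theta=\pi$ (i.e.\ $x=-1$), each term equals $(-1)^{n-1-2k}=(-1)^{n-1}$, giving $U_{n-1}(-1)=(-1)^{n-1}n$. Hence $|U_{n-1}(\pm 1)|=n$, so the bound is attained at the endpoints. There is no real obstacle here, the only mild subtlety being the need to observe that the sum-of-cosines formula, initially derived on $(0,\pi)$, extends to $[0,\pi]$ by continuity of both sides as polynomials in $\cos\theta$, which justifies evaluating at $\theta=0,\pi$ without a separate limiting argument.
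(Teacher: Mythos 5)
Your proof is correct, and it takes a genuinely different route from the paper. You establish the bound purely algebraically: the geometric-sum identity gives the standard representation $U_{n-1}(\cos\theta)=\sum_{k=0}^{n-1}\cos\bigl((n-1-2k)\theta\bigr)$, after which the triangle inequality immediately yields $|U_{n-1}(x)|\le n$, with equality at $x=\pm1$ read off term by term; the continuity argument at the endpoints is sound since both sides are polynomials in $\cos\theta$. The paper instead argues by calculus: it evaluates $U_{n-1}(\pm1)=(\pm1)^{n-1}n$ directly, computes $U_{n-1}'$ on $(-1,1)$, characterises the interior stationary points by $\tan(n\theta)=n\tan\theta$, and shows that at any such point $U_{n-1}(x_0)^2=n^2/\bigl(n^2+(1-n^2)x_0^2\bigr)<n^2$, so the maximum modulus can only occur at the endpoints. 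Your argument is shorter and more elementary for this lemma alone; the paper's stationary-point machinery has the side benefit of feeding directly into the companion result (Lemma~\ref{U_extreme_values_lemma}) on the location of the extremum nearest $\theta=0$, which your identity does not address, but that is a separate statement and your proof of the present lemma is complete as it stands.
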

\begin{proof}
    As $U_0\equiv 1$, we may restrict to $n\ge 2$ and $x\in (-1,1)$, noting that 
    $U_{n-1}(\pm 1) = (\pm 1)^{n-1} n$. The derivative of $U_{n-1}(x)$ is given for $x\in (-1,1)$ by
    \begin{equation}
    \label{U_derivative}
    U_{n-1}'(\cos(\theta))=-\frac{n\sin(\theta)\cos(n\theta)-\cos(\theta)\sin(n\theta)}{\sin^3(\theta)},
    \end{equation}
    so the stationary points of $U_{n-1}(x)$ with $x\in (-1,1)\setminus\{0\}$ occur when $x=\cos(\theta)$ with
    \begin{equation}
    \label{stationary_U}
        \tan(n\theta)=n\tan(\theta).
    \end{equation}
    We can rewrite $U_{n-1}(x)^2$ as
    \begin{equation}
        U_{n-1}(x)^2=\frac{\tan^2(n\theta)}{\tan^2(\theta)} \cdot \frac{1+\tan^2(\theta)}{1+\tan^2(n\theta)}
    \end{equation}
    Suppose $x_0=\cos(\theta_0)$ satisfies \eqref{stationary_U}, then  
    \begin{equation}
    U_{n-1}(x_0)^2=n^2\frac{1+\tan^2(\theta_0)}{1+n^2\tan^2(\theta_0)}=
    \frac{n^2}{n^2+(1-n^2)x_0^2}\in (1,n^2),
    \end{equation}
    using trigonometric identities and 
    recallling that $x_0^2\in(0,1)$. Therefore $|U_{n-1}(x_0)|< n$ at any extremum $x_0 \in(-1,1) \setminus \{0\}$.
    \end{proof}

    As well as an understanding of the best upper bound for $U_n(x)$, we would further like to bound the position of the nearest extremum of $U_n(\cos(\theta))$ to $\theta=0$. 
\begin{lemma}
\label{U_extreme_values_lemma}
    For fixed $n\in\mathbb{N}$ with $n\ge 2$ the function $U_{n-1}(\cos(\theta))$ has no extrema for $\theta\in\left(0,\tfrac{\pi}{n}\right)$ and $U_{n-1}(\cos(\theta))$ is 
    nonnegative and nonincreasing on $\left[0, \tfrac{\pi}{n}\right]$.
\end{lemma}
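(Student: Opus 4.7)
The plan is to avoid working with $U_{n-1}$ as a polynomial on $[-1,1]$ and instead exploit the representation $f(\theta):=U_{n-1}(\cos\theta)=\sin(n\theta)/\sin\theta$ directly. The nonnegativity of $f$ on $[0,\pi/n]$ is then immediate: for $\theta\in(0,\pi/n)$, both $n\theta$ and $\theta$ lie in $(0,\pi)$, so numerator and denominator are strictly positive, and at the endpoints one checks $f(0)=n$ and $f(\pi/n)=0$ by continuous extension.

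For the absence of extrema on $(0,\pi/n)$, I would differentiate to get
\begin{equation}
f'(\theta)=\frac{n\cos(n\theta)\sin\theta-\cos\theta\sin(n\theta)}{\sin^2\theta},
\end{equation}
so any critical point with $\cos\theta\cos(n\theta)\neq 0$ satisfies $\tan(n\theta)=n\tan\theta$. The interval $(0,\pi/n)$ splits naturally at $\theta=\pi/(2n)$. On the upper half $(\pi/(2n),\pi/n)$ one has $n\theta\in(\pi/2,\pi)$, so $\tan(n\theta)<0<n\tan\theta$, excluding solutions; the boundary point $\theta=\pi/(2n)$ (where the rearrangement fails) is handled by direct substitution, giving $f'(\pi/(2n))=-\cos(\pi/(2n))/\sin^2(\pi/(2n))<0$, which is nonzero and also pins down the sign of $f'$. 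On the lower half $(0,\pi/(2n))$ I would set $g(\theta)=\tan(n\theta)-n\tan\theta$, note $g(0)=0$, and use $g'(\theta)=n\bigl(\sec^2(n\theta)-\sec^2\theta\bigr)>0$, which holds because $\sec^2$ is strictly increasing on $[0,\pi/2)$ and $n\theta>\theta$ both lie in this range. Hence $g>0$ on $(0,\pi/(2n))$ and no critical point exists there either.

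Combining these observations, $f'$ is continuous and nowhere zero on $(0,\pi/n)$, so it has constant sign; the evaluation at $\pi/(2n)$ forces this sign to be negative, and $f$ is therefore strictly decreasing, hence nonincreasing, on $[0,\pi/n]$. I do not anticipate any real obstacle; the only mild subtlety is that the critical-point condition $\tan(n\theta)=n\tan\theta$ degenerates at $\theta=\pi/(2n)$, which is why I would treat that point separately via direct substitution into $f'$ rather than via the rearranged equation.
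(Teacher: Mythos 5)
Your proof is correct, and while it shares the paper's core mechanism—reducing interior critical points of $f(\theta)=\sin(n\theta)/\sin\theta$ to the equation $\tan(n\theta)=n\tan\theta$ and using the strict monotonicity of $\sec^2$ to rule out solutions on $(0,\pi/(2n))$—it diverges on the rest of the argument in ways that simplify it. On the upper half $(\pi/(2n),\pi/n)$ the paper goes through a more elaborate route: it rewrites $t_n'(\theta)=n\sin((n+1)\theta)\sin((n-1)\theta)/(\cos^2\theta\cos^2(n\theta))$, locates the unique extremum of $t_n=\tan(n\theta)-n\tan\theta$ at $\theta=\pi/(n+1)$, evaluates $t_n$ there, and separately bounds $t_n$ on $(\pi/(n+1),\pi/n)$ to conclude $t_n<0$; your observation that $\tan(n\theta)<0<n\tan\theta$ there because $n\theta\in(\pi/2,\pi)$ while $\theta\in(0,\pi/2)$ disposes of this half-interval in one line. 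You also handle the degenerate point $\theta=\pi/(2n)$ (where the rearranged equation is undefined) explicitly by computing $f'(\pi/(2n))=-\cos(\pi/(2n))/\sin^2(\pi/(2n))<0$, a point the paper passes over somewhat implicitly via the one-sided divergences of $t_n$. Finally, you obtain nonnegativity directly from the positivity of numerator and denominator on $(0,\pi/n)$ and monotonicity from the constant (negative) sign of the nowhere-vanishing continuous $f'$, whereas the paper infers both from the absence of interior extrema combined with the endpoint values $f(0)=n$ and $f(\pi/n)=0$; both closings are sound, but yours yields strict decrease and is arguably tidier.
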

\begin{proof}
Note that $U_{n-1}(\cos(\theta))\to n>0$ as $\theta\to 0$. 
There are no zeros of $\sin$ on $(0,\pi/n)$ so the extrema of $U_{n-1}(\cos(\theta))$ in this interval satisfy 
\begin{equation}
\label{Chebyshev_extrema}
  t_n(\theta):=  \tan(n\theta)-n\tan(\theta)=0.
\end{equation}
We will show that $t_n>0$ on $(0,\pi/(2n))$ and $t_n<0$ on $(\pi/(2n),\pi/n)$, and it is clear that $t_n(\theta)\to\pm\infty$ as $\theta\to\ \pi/(2n)\mp$. Hence there are no zeros of $t_n$ in $(0,\pi/n)$
and therefore no extrema of $U_{n-1}(\cos(\theta))$. Since $U_{n-1}(\cos(\theta))$ is smooth, achieves its global maximum of $n$ at $\theta=0$, and satisfies $U_{n-1}(\cos(\pi/n))=0$, with no extrema in $(0,\pi/n)$, 
it follows that $U_{n-1}(\cos(\theta))$ is nonnegative and nonincreasing on $\left[0, \pi/n\right]$.

It remains to prove the statements made about $t_n$.  
First, for $\theta\in\left(0, \pi/n\right)\setminus\{\pi/(2n)\}$, the derivative of $t_n$ is given by
\begin{equation}
    t_n'(\theta)=n\Big(\sec^2(n\theta)-\sec^2(\theta)\Big),
\end{equation}
which is positive on $\left(0,  \pi/(2n)\right)$ because 
$\sec^2(\theta)$ is a strictly increasing function on the range $\left(0, \tfrac{\pi}{2}\right)$. As $t_n(0)=0$, this implies $t_n>0$ on $\left(0,  \pi/(2n)\right)$. Second, by trigonometric identities, we may write $t_n'$ equivalently as 
\begin{equation}
    t_n'(\theta)=n \frac{\sin([n+1]\theta)\sin([n-1]\theta)}{\cos^2(\theta)\cos^2(n\theta)},
\end{equation}
from which it is clear that the only zero $t_n'$ has on $\left(0, \pi/n\right)$ is at $\theta=\tfrac{\pi}{n+1}$. At this point, one has
\begin{equation}
    t_n\Big(\frac{\pi}{n+1}\Big) = \tan\Big(\frac{n\pi}{n+1}\Big)-n\tan\Big(\frac{\pi}{n+1}\Big)=-(n+1)\tan\Big(\frac{\pi}{n+1}\Big)<0
\end{equation}
using $\tan(\pi-\theta)=-\tan(\theta)$. Now $t_n(\theta)\to -\infty$ as $\theta\to \pi/(2n)+$ and also 
\begin{equation}
    t_n(\theta) = \tan(n\theta)-n\tan(\theta) \leq \tan(\pi)-n\tan\Big(\frac{\pi}{n+1}\Big)=-n\tan\Big(\frac{\pi}{n+1}\Big)<0
\end{equation}
for $\theta \in \left(\tfrac{\pi}{n+1}, \tfrac{\pi}{n}\right)$
by considering an upper bound for each term individually, using the fact that the tangent function is increasing on  $\left[0,\tfrac{\pi}{2} \right)$ and $\left(\tfrac{\pi}{2},\pi\right]$. As $t_n$ is negative at its unique extremum in $(\pi/(2n),\pi/n)$ it follows that 
$t_n<0$ on this interval, completing the proof.
\end{proof}

\section{Proof of Theorems~\ref{thm:blims} and~\ref{thm:PKlimits}}
\label{blims_appendix}

\begin{manualtheorem}{4.1}\label{thm:blims_app}
    Let $M\in\mathbb{N}$ and $j,k\in\mathbb{N}_0$ so that $j+k=2L$ is even and positive. 
    Consider any $u\in\mathbb{R}^{(2M-2)}_>$ and any
    sequences $(v_n)_{n\in\mathbb{N}}$  in $\mathbb{R}_>^j$ and $(w_n)_{n\in\mathbb{N}}$ in $\mathbb{R}_>^k$ whose last and first components obey $v_{n,j}\to 0$ and $w_{n,1}\to \infty$. (In the case $M=1$ one omits $u$; similarly, in the cases $j=0$ resp., $k=0$ one omits $v$ resp., $w$.)
    
    If $j$ (and hence also $k$) is even, then 
    \begin{equation}\label{eq:blimseven_app}
        \liminf_n b^{(M+L)}_{\textnormal{back}}(v_n,u,w_n) \geq b^{(M)}_{\textnormal{back}}(u), \qquad
        \limsup_n b^{(M+L)}_{\textnormal{over}}(v_n,u,w_n) \leq b^{(M)}_{\textnormal{over}}(u)
    \end{equation}
    where we use the shorthand notation
    $(v,u,w)=(v_1,\ldots,v_j,u_1,\ldots, u_{2M-2},w_1,\ldots,w_k)$.
     
    On the other hand, if $j$ (and hence also $k$) is odd, then 
    \begin{equation}\label{eq:blimsodd_app}
        \liminf_n b^{(M+L)}_{\textnormal{back}}(v_n,u,w_n) \geq -1- b^{(M)}_{\textnormal{over}}(u), \qquad
        \limsup_n b^{(M+L)}_{\textnormal{over}}(v_n,u,w_n) \leq -1-b^{(M)}_{\textnormal{back}}(u).
    \end{equation}
    Consequently, the sequence of backflow constants $c^{(M)}_{\textnormal{back}}$ 
    is nondecreasing, and the sequence of overflow constants $c^{(M)}_{\textnormal{over}}$ is nonincreasing; moreover,
    $c^{(M+L)}_\textnormal{back}\geq -1 - c^{(M)}_\textnormal{over}$ and
    $c^{(M+L)}_\textnormal{over}\leq -1 - c^{(M)}_\textnormal{back}$ for all $L\in\mathbb{N}$.    
\end{manualtheorem}

\begin{proof}
Choose any $\boldvec[M]{t}\in\mathcal{T}_{M}$ so that $u=\srsd \boldvec[M]{t}$ (if $M=1$, any element of $\mathcal{T}_{1}$ can be chosen). With this choice, for each $v\in(\mathbb{R}^+_>)^j$ and $w\in(\mathbb{R}^+_>)^k$
there are unique $r\in \mathbb{R}^j$ and $s\in\mathbb{R}^k$ so that
$\langle\boldsymbol{\hat{t}}_{M+L}\rangle=\langle r_1,\ldots,r_j,t_1,\ldots,t_{2M},s_1,\ldots,s_k\rangle\in\mathcal{T}_{M+L}$ satisfies
$\srsd \langle\boldsymbol{\hat{t}}_{M+L}\rangle=(v,u,w)$. Specifically, the components are defined recursively by
\begin{equation}\label{eq:rec_def}
    r_{a-1}=r_{a}+ \frac{r_{a}-r_{a+1}}{v_{a-1}}, \qquad s_{a+1}=s_a+(s_{a}-s_{a-1})w_{a+1}
\end{equation}
with $s_{0}=t_{2M}$, $s_{-1}=t_{2M-1}$, $r_{j+1}=t_1$, $r_{j+2}=t_2$.
It is easily seen from~\eqref{eq:rec_def} that $v_j\to 0+$ implies $r_j\to -\infty$ (and therefore also $r_a\to-\infty$ for each $1\leq a\leq j$), while if $w_1\to+\infty$ then
$s_1\to+\infty$ (and therefore also $s_a\to+\infty$ for each $1\leq a\leq k$).

 For any $\epsilon>0$, we may choose $\psi\in\mathscr{H}_+$
    so that 
    \begin{equation}
    \Delta^{(M)}_{\langle \boldsymbol{t}_M\rangle}(\psi)> \sup_{\varphi\in\mathscr{H}_+}\Delta^{(M)}_{\langle \boldsymbol{t}_M\rangle}(\varphi)- \epsilon ,
    \end{equation}
    whereupon Dollard's lemma~\cite{dollard1969scattering} gives
    \begin{align}
    \label{eq: MultipleDollardLimit}
        \Delta^{(M+L)}_{\langle\boldsymbol{\hat{t}}_{M+L}\rangle}(\psi) &=
        \sum_{a=1}^j (-1)^a \Prob_\psi(X<0|t=r_a) + (-1)^j \Delta^{(M)}_{\langle \boldsymbol{t}_M\rangle}(\psi) + 
        \sum_{a=1}^k (-1)^{k-a} \Prob_\psi(X<0|t=s_a)\nonumber\\
        &\longrightarrow
        \begin{cases} -1 - \Delta^{(M)}_{\langle \boldsymbol{t}_M\rangle}(\psi) & \textnormal{$j$ odd}\\
        \Delta^{(M)}_{\langle \boldsymbol{t}_M\rangle}(\psi) & 
        \textnormal{$j$ even.}
        \end{cases} 
    \end{align}
    as $v_j\to 0+$ and $w_1\to+\infty$.
    As the limit is approached, in the case where $j$ is even,
    it eventually holds that  $\Delta^{(M+L)}_{\langle\boldsymbol{\hat{t}}_{M+L}\rangle}(\psi)>\Delta^{(M)}_{\langle \boldsymbol{t}_M\rangle}(\psi)-\epsilon$
    and thus the inequalities
    \begin{equation}
    \Delta^{(M+L)}_{\langle\boldsymbol{\hat{t}}_{M+L}\rangle}(\psi)> 
    \sup_{\varphi\in\mathscr{H}_+}\Delta^{(M)}_{\langle \boldsymbol{t}_M\rangle}(\varphi)-2\epsilon
    \end{equation} 
    and 
    \begin{equation}
        b^{(M+L)}_{\textnormal{back}}(v,u,w) = \sup_{\psi\in\mathscr{H}_+} \Delta^{(M+L)}_{\langle\boldsymbol{\hat{t}}_{M+L}\rangle}(\psi)> 
    \sup_{\varphi\in\mathscr{H}_+}\Delta^{(M)}_{\langle \boldsymbol{t}_M\rangle}(\varphi)-2\epsilon = b^{(M)}_{\textnormal{back}}(u)-2\epsilon
    \end{equation}
    are eventually true. Replacing $v$ and $w$ by sequences as in the statement and using the fact that $\epsilon>0$ was arbitrary, we have proved the first statement in~\eqref{eq:blimseven_app}; the second part is proved analogously.

    If $j$ is odd, we argue in a similar way that for any $\epsilon>0$,
    \begin{equation}
    b^{(M+L)}_{\textnormal{back}}(v,u,w)=\sup_{\psi\in\mathscr{H}_+}\Delta^{(M+L)}_{\langle\boldsymbol{\hat{t}}_{M+L}\rangle}(\psi)> 
    -1-
    \inf_{\varphi\in\mathscr{H}_+}\Delta^{(M)}_{\langle \boldsymbol{t}_M\rangle}(\varphi)-2\epsilon = -1-b^{(M+L)}_{\textnormal{over}}(u)-2\epsilon.
    \end{equation}
    is eventually true as the limit is taken, thus giving the first statement in~\eqref{eq:blimsodd_app}; the second is again proved analogously.

    To establish the monotonicity of the backflow constants, for any $M\in\mathbb{N}$, let $\epsilon>0$ be arbitrary and choose $u\in\mathbb{R}^{(2M-2)}_>$ 
    so that $b^{(M)}_\textnormal{back}(u)> c^{(M)}_\textnormal{back}-\epsilon$. Then~\eqref{eq:blimseven_app} with $L=1$ and $j=2$, $k=0$ 
    (or $j=0$, $k=2$) implies that $c^{(M+1)}_\textnormal{back}>c^{(M)}_\textnormal{back}-\epsilon$
    and taking $\epsilon\to 0+$, we deduce that $c^{(M+1)}_\textnormal{back}\ge c^{(M)}_\textnormal{back}$; 
    an analogous argument proves that $c^{(M+1)}_\textnormal{over}\le c^{(M)}_\textnormal{over}$.
    Finally, for any $L\in \mathbb{N}$, we may use~\eqref{eq:blimsodd_app} with e.g.,~$j=1$ and $k=2L-1$ to 
    prove $c^{(M+L)}_\textnormal{back}\ge -1-c^{(M)}_\textnormal{over}$ 
    and $c^{(M+L)}_\textnormal{over}\ge -1-c^{(M)}_\textnormal{back}$ by similar means.
\end{proof}

Recall from the main text (Definition~\ref{defn: Li_defn}) that the Painlev\'{e}-Kuratowski lower closed limit $\Li A_\alpha$ 
of a net of sets $(A_\alpha)_{\alpha \in I}$ in a topological space $X$ 
is the set of points $p\in X$ such that every neighbourhood of $p$ is eventually intersected by the $A_\alpha$. A slight modification of Theorem~\ref{thm:blims_app} gives the following.  
\begin{manualtheorem}{4.2} \label{thm:PKlimits_app}
    For $M\in\mathbb{N}$ and $\boldvec[M]{t} \in \mathcal{T}_M$, one has 
    \begin{equation}
         -1-\mathcal{N}\left(B^{(M)}_{\boldvec[M]{t}}\right) \subseteq \Li_{T_\pm \rightarrow \pm\infty } \mathcal{N}\left(B^{(M+1)}_{\langle T_-,\boldsymbol{t}_M,T_+\rangle}\right)
    \end{equation}
    and 
    \begin{equation}
        \mathcal{N}\left(B^{(M)}_{\boldvec[M]{t}}\right) \subseteq 
        \Li_{\substack{T,T'\rightarrow -\infty\\ T<T'} } \mathcal{N}\left(B^{(M+1)}_{\langle T,T',\boldsymbol{t}_M\rangle}\right).
    \end{equation} 
\end{manualtheorem} 
\begin{proof}
For the first inclusion, fix $\boldvec[M]{t}\in\mathcal{T}_M$ and let $T_-<0<T_+$ so that 
$\langle T_-,\boldsymbol{t}_M,T_+\rangle\in\mathcal{T}_{M+1}$. As in the proof of theorem~\ref{thm:blims_app}, take $j=k=1$ and let $r_1=T_-, s_1=T_+$. By \eqref{eq: MultipleDollardLimit}, 
\begin{equation}
    \Delta_{\langle T_-,\boldsymbol{t}_M,T_+\rangle}^{(M+1)}(\psi) \longrightarrow -1-\Delta_{\boldvec[M]{t}}^{(M)}(\psi) \qquad \text{ for any } \psi \in \mathcal{H}_+
\end{equation}
as $T_\pm\to\pm\infty$.
So for each $\lambda \in \mathcal{N}(B^{(M)}_{\boldvec[M]{t}})$ and $\epsilon>0$, one can find $T_-<0<T_+$ such that 
\begin{equation}
    \left[ -1-\lambda - \epsilon, -1-\lambda+\epsilon \right] \cap \mathcal{N}\left(B^{(M+1)}_{\langle T_-,\boldsymbol{t}_M,T_+\rangle}\right) \neq \emptyset,
\end{equation}
whereupon the first result follows by the definition of the lower closed limit. The second follows in a similar way, taking $j=2,k=0$ and $r_1=$ and $r_1=T,r_2=T'$. 
\end{proof}

\section{Proof of Proposition~\ref{prop: spectral_recipe_app}}
\label{spectral_recipe_appendix}

\begin{prop}{5.1}
\label{prop: spectral_recipe_app}
Let $A$ be a bounded self-adjoint operator on Hilbert space $\mathscr{H}$ and let $(\chi_n)_{n \in \mathbb{N}} \subset \mathscr{H}$ be a sequence of linearly independent vectors with dense span. Define sequences of self-adjoint $N\times N$ matrices 
$(A^{[N]})_{N\in\mathbb{N}}$ and $(P^{[N]})_{N\in\mathbb{N}}$
with matrix elements
\begin{equation}
    A^{[N]}_{mn}=\ip{\chi_m}{A\chi_n} , \qquad P^{[N]}_{mn}=\ip{\chi_m}{\chi_n}
\end{equation} 
for $1\leq m,n\leq N$. Then $\sigma(A^{[N]},P^{[N]})\subseteq \mathcal{N}(A)$, and 
$\max  \sigma(A^{[N]},P^{[N]})$ (resp., $\min  \sigma(A^{[N]},P^{[N]})$) is a bounded nondecreasing (resp., nonincreasing) sequence with
\begin{align}
        \max \sigma(A) &= \lim_{N \rightarrow \infty} \max \sigma(A^{[N]},P^{[N]}) \nonumber\\ 
        \min \sigma(A) &= \lim_{N \rightarrow \infty} \min \sigma(A^{[N]},P^{[N]}).
    \end{align}
For each $N \in \mathbb{N}$, suppose $v^{(N)} \in \mathbb{C}^N$ is a generalised eigenvector obeying
\begin{equation}
    A^{[N]}v^{(N)} = \lambda_N P^{[N]}v^{(N)}, \qquad v^{(N)\dagger} P^{[N]} v^{(N)}=1
\end{equation}
and define $\psi^{(N)} \in \mathcal{H}$ by
\begin{equation}
    \psi^{(N)}=\sum_{n=1}^N v^{(N)}_n \chi_n.
\end{equation}
If $\psi^{(N)}\to \psi \in \mathcal{H}$ in norm then
\begin{enumerate}
        \item \label{st: gen_eigvals_app}the sequence of generalised eigenvalues $(\lambda_N)_{N \in \mathbb{N}}$ converges;
        \item \label{st: gen_eigvecs_app}the sequence of vectors $(\psi^{(N)})_{N \in \mathbb{N}}$ is a Weyl sequence for $\lambda = \lim_{N \to \infty} \lambda_N$, i.e., $\|\psi^{(N)}\|=1$ and $\|(A-\lambda I)\psi^{(N)}\|\to 0$;
        \item \label{st: eigenvector_app}the limiting vector $\psi$ is an eigenvector for $A$ with eigenvalue $\lambda$.
    \end{enumerate}
\end{prop}

\begin{proof} 
If $\lambda\in\sigma(A^{[N]},P^{[N]})$ with $A^{[N]}v=\lambda P^{[N]}v$ 
for nonzero $v$, then 
\begin{equation} 
\lambda= \frac{v^\dagger A^{[N]}v}{v^\dagger P^{[N]}v} = \frac{\ip{\phi}{A\phi}}{\ip{\phi}{\phi}},
\end{equation}
where $\phi = \sum_{r=1}^N v_r \chi_r$, so $\lambda\in\mathcal{N}(A)$, which is a bounded set. Let $S_N=\text{span}(\chi_1,\dots,\chi_N)$, then monotonicity of the sequences of maximum and minimum generalised eigenvalues is guaranteed because $S_N\subset S_{N+1}$. Finally,
as $\bigcup_N S_N$ is dense in $\mathscr{H}$ and $A$ is bounded, we easily compute
\begin{align}
    \max\sigma(A)&= \sup \mathcal{N}(A)=\lim_{N \rightarrow \infty} \sup_{\phi \in S_N\setminus\{0\}} \frac{\ip{\phi}{A \phi}}{\ip{\phi}{\phi}}=\lim_{N \rightarrow \infty} \sup_{v \in \mathbb{C}^N\setminus\{0\}} \frac{v^\dagger A^{[N]} v}{v^\dagger P^{[N]}v}\nonumber \\ 
    &=
    \lim_{N\to\infty}\max \sigma(A^{[N]},P^{[N]}),
\end{align}
where in the final equality we have made use of the (generalised) Courant--Fischer min-max principle -- see, e.g., theorem 2.1 of \cite{MR3445565}. The proof for the minimum of $\sigma(A)$ is analogous. 

For each $N\in\mathbb{N}$, let $Q_N$ be the projection onto $S_N$. First we show that $\psi^{(N)}$ is an eigenvector of $Q_N A$. Since $Q_N A \psi^{(N)} \in S_N$, we must have
\begin{equation}
\label{eq: Q_NApsi_expansion}
    Q_N A \psi^{(N)} = \sum_{k=1}^N w_k^{(N)}\chi_k
\end{equation}
for some $w^{(N)} \in \mathbb{C}^N$. For $1 \leq j \leq N$, the inner product of \eqref{eq: Q_NApsi_expansion} with $\chi_j$ can be written as both
\begin{equation}
    \ip{\chi_j}{Q_N A \psi^{(N)}} = \sum_{k=1}^N \ip{\chi_j}{\chi_k} w^{(N)}_k =\left(P^{[N]} w^{(N)}\right)_j 
\end{equation}
and
\begin{equation}
    \ip{\chi_j}{Q_N A \psi^{(N)}} = \sum_{k=1}^N v^{(N)}_k \ip{\chi_j}{A \chi_k} = \left(A^{[N]} v^{(N)}\right)_j = \lambda_N \left(P^{[N]} v^{(N)}\right)_j.
\end{equation}
By the linear independence of the $\chi_j$, $P^{[N]}$ has trivial kernel and we find $w^{(N)}=\lambda_N v^{(N)}$, from which it follows that
\begin{equation}
    Q_N A \psi^{(N)} = \lambda_N \psi^{(N)}.
\end{equation}
Note that $\|\psi^{(N)}\|=1$ because of the normalisation condition on $v^{(N)}$.
Property~\ref{st: gen_eigvals_app} follows from the fact that
\begin{equation}
    \lambda_N = \ip{Q_N \psi^{(N)}}{A \psi^{(N)}}
\end{equation}
converges to some $\lambda \in \mathbb{R}$ by the norm convergence of the $\psi^{(N)}$ to $\psi$ and boundedness of $A$. To prove property~\ref{st: gen_eigvecs_app}, note that
\begin{equation}
    \|(A-\lambda I)\psi^{(N)}\| \leq \|(A-\lambda_N I)\psi^{(N)}\|+|\lambda_N-\lambda | 
\end{equation}
so since $\lambda_N\to\lambda$, it suffices to note that
\begin{align}
    \|(A-\lambda_N I)\psi^{(N)}\| = \|(A-Q_N A)\psi^{(N)}\| \leq \|(I-Q_N)A\psi\|+\|(I-Q_N)A\| \|\psi^{(N)}-\psi\|\to 0
\end{align}
by the density of $\bigcup_N S_N$ in $\mathcal{H}$ and convergence of $(\psi^{(N)})_{N \in \mathbb{N}}$ proving property~\ref{st: gen_eigvecs_app}. The final property follows from
\begin{equation}
    \|(A-\lambda I)\psi\| \leq \|(A-\lambda I)\psi^{(N)}\| + (\|A\|+|\lambda|) \|\psi^{(N)} - \psi\| \to 0
\end{equation}
and hence $\psi$ is an eigenvector of $A$ with eigenvalue $\lambda$.
\end{proof}

\section{Exact Matrix Element Integrals}\label{appendix:exactmatrixelements}
 
In this appendix we compute closed forms for the matrix elements of an arbitrary backflow operator with respect to the dense sequence $(\psi_{n,a,\delta})_{n \in \mathbb{N}}$, given in \eqref{eq: almost_polynomial_vectors}. Let $C^{(M)}_{\boldvec[M]{t}}=W^\ast B^{(M)}_{\boldvec[M]{t}} W$, where $W$ is the unitary implementing the change of variables $q\mapsto q^{1/2}$, 
with action
\begin{equation}
    \left(C^{(M)}_{\boldvec[M]{t}} \varphi\right)(p)=-\frac{1}{4\pi i}\int_0^\infty dq \sum_{k=1}^M\frac{e^{it_{2k}(p-q)}-e^{it_{2k-1}(p-q)}}{p-q}P(p,q)\varphi(q),
\end{equation}
where $P(p,q)=(p/q)^{1/4}+(p/q)^{-1/4}$. As shown in \eqref{eq: FM_as_sum}, one can write an arbitrary multiple backflow operator as a sum of single backflow operators. Hence it suffices to compute the matrix elements of $C_{\langle s_1,s_2\rangle}^{(1)}$ with respect to $(\psi_{n,a,\delta})_{n=0}^\infty$ for general $s_1 < s_2 $. For $n,m \in \mathbb{N}$ we have
\begin{align}
    \langle \psi_{n,a,\delta} | C^{(1)}_{\langle s_1,s_2\rangle} \psi_{m,a,\delta} \rangle =&-\frac{E_n(a,\delta)E_m(a,\delta)}{4\pi i} \int_0^\infty dp \int_0^\infty dq \: p^{n+\delta} q^{m+\delta}P(p,q)\nonumber\\
        &\qquad\times\frac{e^{is_2(p-q)}-e^{is_1(p-q)}}{p-q} e^{-a(p+q)} \nonumber\\
    =& -\frac{E_n(a,\delta)E_m(a,\delta)}{4\pi}\int_0^\infty dp \int_0^\infty dq \: p^{n+\delta}q^{m+\delta}P(p,q)\nonumber\\
    &\qquad \times\int_{s_1}^{s_2} dt \: e^{it(p-q)}e^{-a(p+q)} \nonumber\\
\label{eq: mat_element_computation}
    =& -\frac{E_n(a,\delta)E_m(a,\delta)}{4\pi}\int_{s_1}^{s_2} dt \: (I_+(n,m, \delta ; t)+I_-(n,m,\delta; t)),
\end{align}
where we have interchanged the order of integration noting that the integral converges absolutely. Here, $E_n(a,\delta)$ is the normalization constant given in \eqref{eq: normalisation_const} and 
\begin{align}
    I_\pm(n,m,\delta;t)&=\int_0^\infty dp \int_0^\infty dq \:p^{n+\delta\pm1/4} q^{m+\delta\mp 1/4}e^{-(a- it)p}e^{-(a+ it)q}\nonumber\\
    &= \Gamma(n+\delta\pm 1/4+1)\Gamma(m+\delta\mp 1/4+1)(a- it)^{-n-\delta\mp 1/4-1}(a+ it)^{-m-\delta\pm 1/4 -1} .
\label{eq: Iplusminus} 
\end{align}
In the definition of $I_\pm$, the branches for noninteger powers are fixed so that $z^\alpha$ is real and positive for $z>0$ and the cut is taken along the negative real axis. Combining \eqref{eq: mat_element_computation} with \eqref{eq: Iplusminus}, we find that
\begin{equation}
\label{single_matrix_element_breakdown2}
   \langle \psi_{n,a,\delta} |C^{(1)}_{\langle s_1,s_2\rangle} \psi_{m,a,\delta}\rangle= -\frac{1}{4\pi}\Big[ D_{nm}(a,\delta)J(s_1,s_2; \alpha_n^+,\alpha_m^-; a)+D_{mn}(a,\delta)J(s_1,s_2;\alpha_n^-,\alpha_m^+; a) \Big],
\end{equation}
where $\alpha^\pm_n=n+\delta\pm\tfrac{1}{4}$ with
\begin{equation}
    D_{nm}(a,\delta)=\pi \sqrt{2}\frac{(2a)^{n+m+2\delta+1}\sqrt{B_\textnormal{diag}(n+\delta+1/2)B_\textnormal{diag}(m+\delta+1/2)}}{B(n+\delta+1/2,3/4)B(m+\delta+1/2,1/4)},
\end{equation}
in which $B$ is the beta function, $B_\textnormal{diag}(x)=B(x,x)$,
and
\begin{equation}\label
{eq: Jdefn}
    J(s_1,s_2; \alpha,\beta; a)=\int_{s_1}^{s_2}dt \: (a - it)^{-\alpha-1}(a+ it)^{-\beta-1}.
\end{equation}

As shown below, the closed form expression for $J$ is given 
for $s_1,s_2\in\mathbb{R}\setminus\{0\}$ by
\begin{equation}
\begin{aligned}
    J(s_1,s_2;\alpha,\beta;a)=\begin{cases}
        J_+(s_1,s_2;\alpha,\beta;a) & s_1 s_2 > 0\\
        J_-(s_1,s_2;\alpha,\beta;a) & s_1<0< s_2 ,
    \end{cases}
\end{aligned}
\end{equation} 
where 
\begin{equation}\label{eq:Jplus} 
    J_+(s_1,s_2; \alpha,\beta; a)=-\frac{i}{(2a)^{\alpha+\beta+1}}\left[e^{-i\pi\text{sgn}(s)(\beta+1)}B\left(\alpha+\beta+1,-\beta; \frac{2a}{a-s i}\right)\right]_{s=s_1}^{s=s_2},
\end{equation}
and
\begin{equation}\label{eq:Jminus}
J_-(s_1,s_2;\alpha,\beta;a) = J_+(s_1,s_2;\alpha,\beta;a) + 
\frac{2\pi}{(2a)^{\alpha+\beta+1}}\frac{\Gamma(\alpha+\beta+1)}{\Gamma(\alpha+1)\Gamma(\beta+1)}.
\end{equation}

To obtain these closed form expressions, first make the transformation of variable $t \mapsto 2at$ in~\eqref{eq: Jdefn} so that 
\begin{equation}
\label{eq: J_integral}
    J(s_1,s_2;\alpha,\beta; a)=(2a)^{-\alpha-\beta-1}\int_{s_1/(2a)}^{s_2/(2a)}dt\: \Big(\frac{1}{2}-it\Big)^{-\alpha-1} \Big(\frac{1}{2}+it\Big)^{-\beta-1}.
\end{equation} 
Next, recall that
the incomplete beta function $B(\mu,\nu;z)$ is given for $\mu>0$, $\nu\in\mathbb{R}$ as a holomorphic function in $z\in\mathbb{C} \setminus ((-\infty, 0] \cup [1, \infty))$ by
\begin{equation}
\label{eq: inc_beta_integral_rep}
    B(\mu,\nu; z)= \int_0^z dv \: v^{\mu-1}(1-v)^{\nu-1},
\end{equation}
where the integral is taken over any contour avoiding $(-\infty,0] \cup [1,\infty)$. Thus, for $t\in \mathbb{R}\setminus\{0\}$,
\begin{align}
    \frac{d}{dt} B\left(\mu,\nu;(1/2-it)^{-1}\right) &= 
    i \left(-\frac{1}{2}-it\right)^{\nu-1}\left(\frac{1}{2}-it\right)^{-\mu-\nu}\nonumber\\
    &= ie^{-i\pi(\nu-1)\sgn(t)}
    \left(\frac{1}{2}+it\right)^{\nu-1}\left(\frac{1}{2}-it\right)^{-\mu-\nu}.
\end{align}
Setting $\mu =\alpha+\beta+1$ and $\nu=-\beta$, we may
evaluate~\eqref{eq: J_integral} in cases where $\alpha+\beta>-1$ and
$s_1,s_2\in\mathbb{R}\setminus\{0\}$ using
the fundamental theorem of calculus to obtain
$J(s_1,s_2;\alpha,\beta;a)=
J_+(s_1,s_2;\alpha,\beta;a)$, where $J_+$ is given by~\eqref{eq:Jplus}.

In the case where $s_1<0$ and $s_2>0$, we can decompose the 
integral $\int_{s_1/(2a)}^{s_2/(2a)}$ in~\eqref{eq: J_integral}
as $\int_{s_1/(2a)}^{s_2/(2a)} = \int_{-\infty}^\infty -\int_{s_2/(2a)}^{\infty}-\int_{-\infty}^{s_1/(2a)}$.
The first integral can be evaluated in closed form (see (5.12.8) in~\cite{NIST:DLMF}) giving~\eqref{eq:fstint} below, while the others are limiting cases of the result just proved. Noting that
\begin{equation}
\lim_{s\to\pm \infty} B\left(\alpha+\beta+1,-\beta; \frac{2a}{a-s i}\right)= 0,
\end{equation}
the overall effect is to add 
\begin{equation}\label{eq:fstint}
\int_{-\infty}^{\infty} dt \: (a - it)^{-\alpha-1}(a+ it)^{-\beta-1}= 
\frac{2\pi}{(2a)^{\alpha+\beta+1}}\frac{\Gamma(\alpha+\beta+1)}{\Gamma(\alpha+1)\Gamma(\beta+1)}
\end{equation}
to $J_+(s_1,s_2; \alpha,\beta; a)$, thus obtaining $J_-(s_1,s_2; \alpha,\beta; a)$ as given in~\eqref{eq:Jminus}.

Combining \eqref{single_matrix_element_breakdown2},\eqref{eq:Jplus}, \eqref{eq:Jminus} with the decomposition of an $M$-fold backflow operator we find
\begin{align}
    \ip{\psi_{n,a,\delta}}{C^{(M)}_{\boldvec[M]{t}}\psi_{m,a,\delta}}&=-\frac{1}{4\pi}\sum_{k=1}^M\left[D_{nm}(a,\delta)J(t_{2k-1},t_{2k},\alpha_n^+,\alpha_m^-;a)\right.\nonumber\\
    &\qquad +\left.D_{mn}(a,\delta)J(t_{2k-1},t_{2k},\alpha_n^-,\alpha_m^+;a)\right]
\end{align}

The incomplete beta function is computed using the hypergeometric representation
\begin{equation}
\label{eq:beta_inc_to_hypergeom}
    B(\mu,\nu;z)=\frac{z^\mu}{\mu}{}_2F_1(\mu,1-\nu;\mu+1;z).
\end{equation}
given in Section 8.17 of \cite{NIST:DLMF}. We also make
use of the identity $J(s_1,s_2; \alpha,\beta; a) = \overline{J(s_1, s_2; \beta, \alpha; a)}$,  readily seen from~\eqref{eq: J_integral}.

\section{Error Analysis}
\label{appendix: error analysis}

The minimum eigenvalue of $P^{[N]}$ tends to zero as $N$ increases, so the generalised eigenvalue problem is increasingly ill-conditioned and requires high-precision calculation to obtain accurate results.
The following corollary shows how the numerical precision of a pair of Hermitian matrices affects the accuracy of the associated generalised eigenvalues.
\begin{corollary}
\label{cor: gen_eig_error} 
    Let $A,B$ be $n$-dimensional Hermitian matrices with $B$ positive definite, and let $\lambda_1\leq\cdot\cdot\cdot\leq\lambda_n$ be the generalised eigenvalues of the matrix pair $(A,B)$. If $n$-dimensional Hermitian perturbations $\Delta A,\Delta B$ obey 
    \begin{equation}\label{eq:Cmax}
        C_\textnormal{max} := n(1+\max(|\lambda_1|,|\lambda_n|)^2)\frac{\|\Delta A\|_\infty^2+\|\Delta B\|_\infty^2}{\min \sigma(B)^2}<\frac{1}{2},
    \end{equation}
    where $\lambda_\textnormal{max}=\max(|\lambda_1|,|\lambda_n|)$ and $\|\cdot \|_\infty$ is the elementwise maximum modulus matrix norm, then the matrix pair $(A+\Delta A,B+\Delta B)$ is definite,
    with generalised eigenvalues $\tilde{\lambda}_1\leq\cdot\cdot\cdot\leq \tilde{\lambda}_n$ satisfying
    \begin{equation}\label{eq:useful_gen_eig_est}
        |\lambda_i-\tilde{\lambda}_i|^2 \leq \frac{C_\textnormal{max}}{1-2C_\textnormal{max}}(1+2\lambda_i^2)
    \end{equation}
    for all $1 \leq i \leq n$.
\end{corollary}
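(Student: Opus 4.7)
The plan is to derive the bound \eqref{eq:useful_gen_eig_est} by invoking Theorem~\ref{thm: gen_eig_error} after replacing $\gamma(A,B)$ and the operator norms $\|\Delta A\|_{\textnormal{op}},\|\Delta B\|_{\textnormal{op}}$ by quantities appearing in $C_{\textnormal{max}}$, and then performing an elementary algebraic manoeuvre to remove the dependence on $\tilde{\lambda}_i$ from the right-hand side of the Stewart--Sun bound.

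First I would lower-bound the denominator $\gamma(A,B)$: since $B$ is positive definite, $x^\dagger B x \geq \min\sigma(B)$ for every unit $x$, and consequently $\gamma(A,B) \geq \min\sigma(B)$. For the numerator I would use the standard comparison between the operator norm and the elementwise supremum norm on $\mathbb{C}^{n \times n}$, which combined with the previous estimate and the definition of $C_{\textnormal{max}}$ yields
\begin{equation*}
R := \frac{\|\Delta A\|_{\textnormal{op}}^2 + \|\Delta B\|_{\textnormal{op}}^2}{\gamma(A,B)^2} \leq \frac{C_{\textnormal{max}}}{1+\lambda_{\textnormal{max}}^2},
\end{equation*}
where $\lambda_{\textnormal{max}} := \max(|\lambda_1|,|\lambda_n|)$. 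Under the hypothesis $C_{\textnormal{max}} < 1/2$ one therefore has $R(1+\lambda_{\textnormal{max}}^2) < 1/2$, so the strict condition~\eqref{eq: definite_condition} of Theorem~\ref{thm: gen_eig_error} is satisfied; this delivers definiteness of $(A+\Delta A,B+\Delta B)$ and, after squaring the Stewart--Sun estimate, the bound
\begin{equation*}
|\lambda_i - \tilde{\lambda}_i|^2 \leq R(1+\lambda_i^2)(1+\tilde{\lambda}_i^2)
\end{equation*}
for each $i$.

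The final step is to eliminate $\tilde{\lambda}_i$ on the right-hand side. Writing $\tilde{\lambda}_i = \lambda_i + (\tilde{\lambda}_i - \lambda_i)$ and applying the elementary inequality $(a+b)^2 \leq 2a^2 + 2b^2$ gives $\tilde{\lambda}_i^2 \leq 2\lambda_i^2 + 2|\lambda_i - \tilde{\lambda}_i|^2$. Setting $D = |\lambda_i - \tilde{\lambda}_i|^2$ and $K = R(1+\lambda_i^2) \leq C_{\textnormal{max}}$, the above estimate becomes the linear inequality $D(1 - 2K) \leq K(1 + 2\lambda_i^2)$, and the monotonicity of $x \mapsto x/(1-2x)$ on $[0,1/2)$ then produces \eqref{eq:useful_gen_eig_est}. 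The main obstacle is the careful bookkeeping of the various factors of $n$ and $1+\lambda_{\textnormal{max}}^2$ so that they combine consistently with the definition of $C_{\textnormal{max}}$; once this is done the algebra is straightforward, and no further analytic input beyond Theorem~\ref{thm: gen_eig_error} is required.
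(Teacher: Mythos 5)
Your argument is correct and is essentially the paper's own proof: you bound $\gamma(A,B)$ below by $\min\sigma(B)$, pass from operator norms to the elementwise norm with the same $\sqrt{n}$ comparison factor to verify condition~\eqref{eq: definite_condition}, square the Stewart--Sun bound, eliminate $\tilde{\lambda}_i$ via $\tilde{\lambda}_i^2 \le 2\lambda_i^2 + 2|\lambda_i-\tilde{\lambda}_i|^2$, and rearrange using the monotonicity of $x\mapsto x/(1-2x)$ on $[0,1/2)$. The only cosmetic difference is that you package the quantities as $R$ and $K=R(1+\lambda_i^2)\le C_\textnormal{max}$ where the paper introduces $C_i$ and sets $C_\textnormal{max}=\max_i C_i$.
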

\begin{proof}
First note that $C_\textnormal{max}<1/2$ implies that 
$\|\Delta A\|_\infty^2+\|\Delta B\|_\infty^2<\min \sigma(B)^2\le \gamma(A,B)^2$, using the trivial bound
\begin{equation}
        \gamma(A,B) \geq \min_{\|x\|_2=1} x^\dagger B x = \min \sigma(B).
    \end{equation} 
Then by Corollary VI.3.3 in \cite{MR1061154}$(A+\Delta A,B+\Delta B)$ is definite and 
\begin{equation}
\label{eq: stw_sun_bound}
    \frac{|\lambda_i-\tilde{\lambda}_i|}{\sqrt{(1+\lambda_i^2)(1+\tilde{\lambda}_i^2)}} \leq \frac{\sqrt{\|\Delta A\|_\textnormal{op}^2+\|\Delta B\|_\textnormal{op}^2}}{\gamma(A,B)},
\end{equation} 
where $\lambda_1\leq\cdots\leq\lambda_n$ and $\tilde{\lambda}_1\leq\cdots\leq\tilde{\lambda}_n$ are the generalised eigenvalues of $(A,B)$ and $(A+\Delta A,B+\Delta B)$ respectively. We proceed to obtain~\eqref{eq:useful_gen_eig_est} by estimating and rearranging \eqref{eq: stw_sun_bound}.

Applying the inequality
\begin{equation}
    n^{-1/2}\|M\|_\textnormal{op}\le \|M\|_\infty :=\max_{1\leq i,j\leq n}|M_{ij}|,
\end{equation}
for $n$-dimensional square matrices (see~\cite{HornJohnson:vol1}, pp. 313-314) and \eqref{eq: stw_sun_bound}
\begin{equation}
\label{eq: lambda_diff_bound}
    |\lambda_i-\tilde{\lambda}_i|^2 \leq (1+\tilde{\lambda}_i^2)C_i,
\end{equation}
where
\begin{equation}
    C_i = n(1+\lambda_i^2)\frac{\|\Delta A\|_\infty^2 + \|\Delta B\|_\infty^2}{\min \sigma(B)^2}.
\end{equation}
Now note that $\tilde{\lambda}_i^2  \leq (|\lambda_i| +|\lambda_i-\tilde{\lambda}_i|)^2  \leq 2(\lambda_i^2+|\lambda_i-\tilde{\lambda}_i|^2)$ (using $(a+b)^2\leq 2(a^2+b^2)$). Combining with \eqref{eq: lambda_diff_bound}, we find
\begin{equation}
    |\lambda_i-\tilde{\lambda}_i|^2 \leq C_i (1+2\lambda_i^2+2|\lambda_i-\tilde{\lambda}_i|^2).
    \end{equation}
Rearranging the above, we obtain
\begin{equation}
    |\lambda_i - \tilde{\lambda}_i|^2 \leq \frac{C_i (1+2\lambda_i^2)}{1-2C_i} \leq \frac{C_\textnormal{max}(1+2\lambda_i^2)}{1-2C_\textnormal{max}},
\end{equation}
by making use of the fact that $x \mapsto 2x/(1-2x)$ is monotonically increasing on $[0,1/2]$ and where $C_\textnormal{max} = \max_i C_i$ has the closed form~\eqref{eq:Cmax} with $1-2C_\textnormal{max}>0$. 
\end{proof}

In our application, using~\eqref{eq:Pdef}, numerical experiments showed that $\min \sigma(P^{[N]}(a,\delta))\approx 10^{2-0.94N}$ independently of $\delta$. Further, any generalised eigenvalue $\lambda \in \sigma(C^{(M)[N]},P^{[N]})\subseteq \mathcal{N}(C^{(M)})$ obeys $|\lambda| \leq 1+(M-1)c_\textnormal{BM}\leq M$ by Theorem~\ref{thm: spectral_outer_bounds}. Letting $A=C^{(M)[N]}(a,\delta)$ and $B=P(a,\delta)^{[N]}$ in corollary~\ref{cor: gen_eig_error}, we find that condition~\eqref{eq:Cmax} is satisfied if
the maximum error $\epsilon$ in the matrix elements obeys
\begin{equation}
\epsilon \lessapprox \frac{10^{2-0.94N}}{2\sqrt{(N+1)(1+M^2)}}.
\end{equation}
As we study $1\le M\le 4$, this constraint is satisfied, with some room to spare, by 
computing the matrix elements to $N+30$ digits of precision, whereupon the maximum error $\Delta \lambda_N$ on the elements of $\sigma(C^{(M)[N]},P^{[N]})$ is bounded by
\begin{align}
    |\Delta \lambda_N| &\lessapprox 10^{-0.06N-32} \sqrt{2(N+1)}\sqrt{(1+M^2)(1+2M^2)} 
\end{align}
For the $N=500$, $1 \leq M \leq 4$ calculations we find it is possible -- in theory -- to calculate the generalised eigenvalues to $|\Delta \lambda|\leq 7.5\times 10^{-60}$ or around $59$ digits. 

To calculate the matrix elements of $C^{(M)}(a,\delta)$ and $P(a,\delta)$ to high precision, we make use of the Python 3.11 library python-flint (a wrapper for the FLINT package~\cite{FLINT}) whose function {\tt good} automatically increases working precision to produce results accurate to the required precision.
In particular, python-flint was used to calculate the incomplete beta function in terms of hypergeometric functions, implemented as described in~\cite{MR4000439}.  The generalised eigenvalue spectrum $\sigma(C^{(M)[N]}(a,\delta),P^{[N]}(a,\delta))$ was computed numerically using Maple\textsuperscript{TM} 2023, using $N+20$ digits of precision. Although Corollary~\ref{cor: gen_eig_error} gives a bound on the theoretical error for the generalised eigenvalues, it does not take account of numerical error accumulated in the generalised eigenvalue solver. In practice, we found that, when calculating the first $N_\textnormal{max}$ generalised eigenvalues, the $N^\text{th}$ generalised eigenvalue is accurate to around $N_\textnormal{max}+3-0.91N$ digits. This means for $N_\textnormal{max}=500$, the final eigenvalue is accurate to $48$ digits, rather than the $59$ digits estimated above, but for $N<N_{\textnormal{max}}$, the $N$'th generalised eigenvalue is accurate to higher precision, with over $90$ digits expected for $N\leq 450$.

\section{Richardson Accelerator}
\label{appendix: Richardson Accelerator}
 
Let $l^\textnormal{conv}$ be the vector space of convergent real-valued sequences in $\ell^\infty(\mathbb{N})$. 
\begin{defn}
    A map $A: l^\textnormal{conv} \rightarrow l^\textnormal{conv}$ is called a \emph{sequence accelerator} for $x \in l^\textnormal{conv}$ if $Ax$ and $x$ have the same limit and
    \begin{equation}
       \frac{(A x)_n - x_\infty}{x_n - x_\infty} \longrightarrow 0.
    \end{equation}
    as $n\to\infty$, where $x_\infty=\lim_{n\to\infty}x_n$.
\end{defn}
Richardson accelerators provide a particularly useful class of sequence accelerators; see chapter 1 of~\cite{MR2604135} for a modern overview and~\cite{Sidi} for detailed theory and generalisations.
\begin{defn}
    The \emph{Richardson accelerator} of order $\gamma$ is the linear
    map $R_\gamma$ on $l^\textnormal{conv}$ with action 
    \begin{equation}
        (R_\gamma x)_n = \frac{2^\gamma x_{2n}-x_n}{2^\gamma - 1}.
    \end{equation}
\end{defn}
It is clear that $R_\gamma x$ is convergent with the same limit as $x$. 
Moreover, every sequence $(n^{-\alpha})_{n\in\mathbb{N}}$ is an eigenvector of $R_\gamma$, with eigenvalue $(2^{\gamma-\alpha}-1)/(2^\gamma-1)$, while for any
sequence $x$ with $|x_n|\le n^{-\alpha}$ we have a bound
\begin{equation}
     |(R_\gamma x)_n|  \le   \frac{2^{\gamma-\alpha}+1}{2^\gamma-1} n^{-\alpha}.
\end{equation} 
As $R_\gamma$ is linear,   
if 
\begin{equation}
    x_n = a + \frac{b}{n^\gamma}  + \epsilon_n, \qquad |\epsilon_n|\le \frac{C}{n^\alpha}
\end{equation}
for constants $a$, $b$, $C$, and $\alpha>\gamma$, then
\begin{equation}
R_\gamma x_n = a + (R_\gamma\epsilon)_n, \qquad |(R_\gamma\epsilon)_n|\le 
 \frac{2^{\gamma-\alpha}+1}{2^\gamma-1} \frac{C}{n^{\alpha}}. 
\end{equation}
and it is clear that the rate of convergence has been improved from $O(n^{-\gamma})$ to $O(n^{-\alpha})$; moreover, the error terms are suppressed if
\begin{equation}
    \gamma>1-\log_2 (1-2^{-\alpha}).
\end{equation}
The Richardson acceleration can be iterated, but a significant disadvantage is that the number of the available terms in the resulting sequence is approximately halved on each iteration.
Alternatively, one can generalise the Richardson method as follows -- see~\cite{Sidi} for yet further generalisations and discussion. 
Suppose, for some integer $k\ge 2$, that 
\begin{equation}\label{eq:xn}
    x_n =  \sum_{j=1}^k\frac{a_j}{n^{\gamma_j}}  + \epsilon_n, \qquad |\epsilon_n|\le \frac{C}{n^{\gamma_{k+1}}}
\end{equation}
where $0=\gamma_1<\gamma_2<\cdots <\gamma_k<\gamma_{k+1}$, $a_1,\ldots,a_k$ and $C$ are constant real numbers,
so $x_n\to a_1$ as $n\to\infty$. Let
$1=r_1<r_2<\ldots <r_k$ be integers and write $\boldsymbol{\gamma}=(\gamma_1,\ldots,\gamma_k)$, $\boldsymbol{r}=(r_1,\ldots,r_k)$. Let $v\in\mathbb{R}^k$ be the unique solution (the relevant generalised Vandermonde determinant is nonvanishing by the remark at the end of~\cite{RobbinSalamon:2000}) to 
\begin{equation}
\sum_{j=1}^k \frac{v_j}{(r_j)^{\gamma_i}} = \begin{cases}
1 & i=1 \\ 0 & i\neq 0
\end{cases}. 
\end{equation}
Then
\begin{equation}
(R_{\boldsymbol{\gamma},\boldsymbol{r}}x)_n := \sum_{j=1}^k v_j x_{nr_j}
= a_1 + (R_{\boldsymbol{\gamma},\boldsymbol{r}}\epsilon)_n,  \qquad 
    |(R_{\boldsymbol{\gamma},\boldsymbol{r}}\epsilon)_n|\le \frac{C}{n^{\gamma_{k+1}}}\sum_{j=1}^k \frac{|v_j|}{r_j^{\gamma_{k+1}}}.
\end{equation}
The  accelerator $R_{\boldsymbol{\gamma},\boldsymbol{r}}$ removes $k-1$ power-law terms, while reducing the 
number of available terms by a factor of approximately $r_k$. Provided that $r_k<2^{k-1}$ this represents an advantage over the iterated Richardson accelerator
$R_{\gamma_k}\circ\cdots\circ R_{\gamma_2}$. 
The simplest usage is to set $r_j=j$ for $1\le j\le k$, dropping the vector $\boldsymbol{r}$ from the notation,  which is what we used in our analysis.

\section{Ansatz Analysis and Oscillatory Damping}
\label{appendix: ansatz analysis and damping}

To dampen the oscillations in the sequences $\lambda_\textnormal{back/over}^{(M)}$ for $1 \leq M \leq 4$, Kolmogorov--Zurbenko (KZ) low-pass filters~\cite{MR860209,YangZurbenko:2010} are applied before Richardson acceleration. The KZ filter $\textnormal{KZ}_{m,k}$ can be regarded as $k\in\mathbb{N}$ iterations of a moving average over an odd number of points $m\in\mathbb{N}$, and preserves limits of convergent sequences. We chose $k=5$ for each filter application and then for each $M$ chose the parameters $m_M$ such that the $3^\text{rd}$ successive differences have significantly diminished oscillations. Table~\ref{tab:KZ_parameters} shows the choice of filtering for each sequence of approximate eigenvalues. 
\begin{table}
\begin{center}
    \begin{tabular}{ | m{3em} | m{3em} | m{6em} | m{6em} | m{6em} |}
        \hline
        $M$ & $1$ & $2$ & $3$ & $4$\\
        \hline
        Filter & $\textnormal{KZ}_{7,5}$ & $\textnormal{KZ}_{13,5}\circ \textnormal{KZ}_{7,5}$ & $\textnormal{KZ}_{24,5}\circ \textnormal{KZ}_{7,5}$ & $\textnormal{KZ}_{38,5}\circ \textnormal{KZ}_{7,5}$\\
        \hline
    \end{tabular}
\end{center}
\caption{\label{tab:KZ_parameters} KZ filters applied for $1\leq M \leq 4$.}
\label{tab: KZ_filters}
\end{table} 
With the oscillations significantly dampened, we begin analyzing the non-decreasing backflow sequences by assuming that each sequence $\lambda^{(M)}_\textnormal{back}$ has successive differences of the form 
\begin{equation}
\label{eq: max_sd_ansatz}
    \lambda(N+1)-\lambda(N) = \frac{\beta \gamma}{N^{\gamma+1}}+\mathcal{O}\left(\frac{1}{N^{\gamma+1+\delta}}\right)
\end{equation}
for some $\beta,\gamma,\delta>0$ to be determined. Resumming, this implies that each $\lambda^{(M)}_\textnormal{back}$ has the form 
\begin{equation}\label{eq:lambdaNresummed}
    \lambda(N) = \alpha - \frac{\beta}{N^{\gamma}} + \mathcal{O}\left(\frac{1}{N^{\gamma+\delta}}\right)
\end{equation}
for some constant $\alpha\in \mathbb{R}$, which is the limit of $\lambda(N)$ as $N\to\infty$ and
estimates the supremum of the spectrum of the relevant backflow operator. We will write $\beta^{(M)}_\textnormal{back},\gamma^{(M)}_\textnormal{back}$ to indicate the leading order parameters of the sequences $\lambda^{(M)}_\textnormal{back}$. 

To find the leading order parameters of $\lambda^{(M)}_\textnormal{back}$, we make use of two tools: a $\log\text{-}\log$ plot and the Pearson correlation coefficient. 
Taking the logarithm of \eqref{eq: max_sd_ansatz}, we find that 
\begin{equation}
    \log(\lambda_{N+1}-\lambda_N)=\log(\beta\gamma)-(\gamma+1)\log(N)+\mathcal{O}\left(\frac{1}{N^{\delta}}\right)
\end{equation}
and so one can estimate the parameters $\beta,\gamma$ by the estimating the intercept and gradient of the plot of $\log (\lambda_{N+1}-\lambda_N)$ against $\log N$.

The second tool in our arsenal is the Pearson correlation coefficient. Given two sequences $X,Y$, their Pearson correlation coefficient $\rho(X,Y)\in [-1,1]$ is given by
\begin{equation}
\label{eq: pearson_defn}
    \rho(X,Y)=\frac{\Cov(X,Y)}{\sqrt{\Var(X)\Var(Y)}},
\end{equation}
which measures the extent to which $Y$ is linearly dependent on $X$, with values of $\rho(X,Y)=\pm 1$ for the case of perfect linear correlation with positive/negative coefficient, and values closer to $0$ indicating poor linear correlation. 
For a given $\gamma>0$ and sequence $\lambda$, consider the sequence 
\begin{equation}
\mu(\gamma; \lambda)_N=N^{\gamma+2}(\lambda_{N+1}-\lambda_N).
\end{equation}
Comparing with~\eqref{eq: max_sd_ansatz}, we will choose the value of $\gamma$ so as to maximise the Pearson coefficient.

Using the log-log plots, we obtained Table~\ref{tab:N500maxparameters} of approximate values of $\beta^{(M)}_\textnormal{back},\gamma^{(M)}_\textnormal{back}$ to $3$ significant figures. 

\begin{table}
\renewcommand{\arraystretch}{1.2}
\begin{center}
    \begin{tabular}{ | m{1em} | m{8em} | m{8em} |}
        \hline
        & & \\[-1em]
        $M$ & $\beta^{(M)}_\textnormal{back}$ & $\gamma^{(M)}_\textnormal{back}$\\
        \hline
        $1$ & $0.0329$ & $0.494$\\
        \hline
        $2$ & $0.0739$ & $0.495$\\
        \hline
        $3$ & $0.118$ & $0.502$\\
        \hline
        $4$ & $0.0168$ & $0.514$\\
        \hline
    \end{tabular}
\end{center}
\caption{\label{tab:N500maxparameters}Values of $\beta^{(M)}_\textnormal{back},\gamma^{(M)}_\textnormal{back}$ for $1\ \leq M\leq 4$.}
\end{table}

Meanwhile, computation of the Pearson coefficients all give peaks within $0.02$ of $\gamma=0.5$. 
On natural grounds, we argue that $\gamma^{(M)}_\textnormal{back}=0.5$ for each $M$, which justifies applying the accelerator $R_{0.5}$ to each sequence.
We can then repeat the process of trying to estimate the next term in a putative asymptotic expansion for $\lambda^{(M)}_{\textnormal{back}}$. In the case $M=1$, we obtained good evidence for a power $\gamma=1$, and after applying the combined accelerator $R_{(0.5,1)}$ to the original sequence, we found evidence that the next power is below $2$. This suggests a conjecture that $\lambda_\textnormal{back}^{(1)}$ admits an asymptotic series of the form
\begin{equation}
\label{eq: asymptotic_expansion}
    \lambda_\textnormal{back}^{(1)}(N) \sim c_\textnormal{BM} - \sum_{k \geq 1} \frac{\beta_k}{N^{k/2}}
\end{equation}
for some coefficients $\beta_k$ with $\beta_1 > 0$. The numerical evidence for an asymptotic expansion of $\lambda^{(M)}_\textnormal{back}$ for $M=2,3,4$ beyond the initial $N^{-1/2}$ term is rather weaker, but we conjecture that they have a similar asymptotic expansion to the case $M=1$. However, the results of successive acceleration are limited by the extra smoothing required to limit the impact of the more complicated oscillations in the sequences.

\section{Position Representation and Probability Flux Plots}
\label{sec:pos_plots}

This supplementary section contains plots of the time evolution of the backflow and overflow states as well as their respective probability flux for $M=1,\dots,4$. As discussed in section~\ref{sec:numerical_methodology_results}, we obtained approximate eigenvectors $\psi_\textnormal{back/over}^{(M)}(a_M,-1/4;N;\cdot)$ in the transformed coordinates $k=\sqrt{2p}$ where $p$ is the momentum. For $U^\ast: L^2(\mathbb{R}^+,dp)\to L^2(\mathbb{R}^+,dk)$, we find the position space representations for $N=100$ using
\begin{equation}
    \chi^{(M)}_\textnormal{back/over}=\mathcal{R} U^\ast \psi_\textnormal{back/over}^{(M)}(a_M,-1/4;100;\cdot).
\end{equation} 
Note that the unitaries $\mathcal{R}$ and $U^\ast$ are applied to $\psi_\textnormal{back/over}(a_M,-1/4;100;\cdot)$ term wise, a calculation that can be done exactly by e.g., Maple\textsuperscript{TM} 2024. The time evolved versions are computed as described in the main text. Figures~\ref{fig:backvecM1heatmap_app}--\ref{fig:overvecM4heatmap_app} show heatmaps of the time evolution of $\chi^{(M)}_\textnormal{back/over}$ for $-1.5\leq t\leq 1.5$. The $M=2$ time evolution plots are shown here for completeness despite the fact that they appear in the main document as figures~\ref{fig:backvecM2heatmap} and \ref{fig:overvecM2heatmap}.

\begin{figure}
    \centering
    \includegraphics[width=0.8\textwidth]{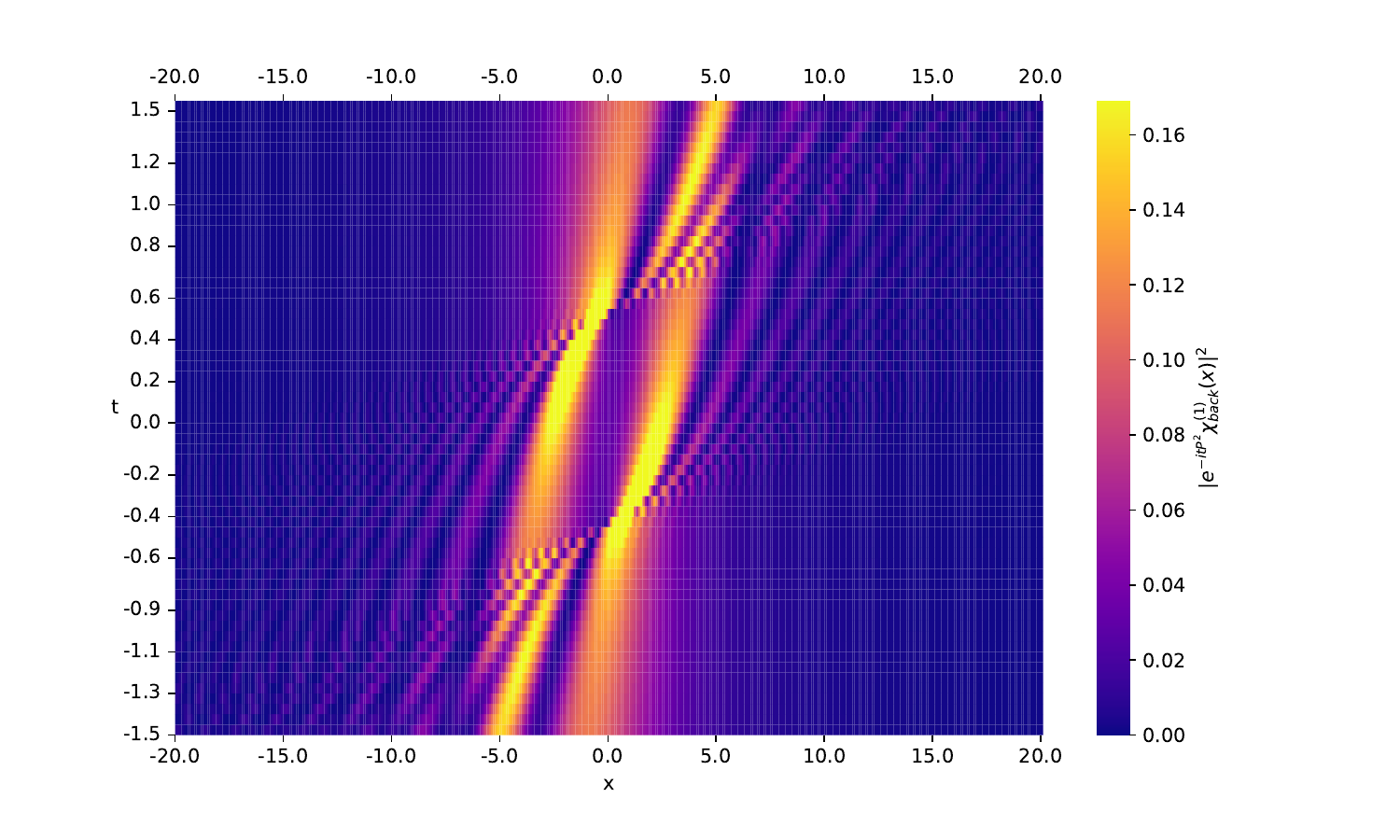}
    \caption{Time evolution of the position probability density for the approximate $M=1$ backflow maximizing state $\chi^{(1)}_\textnormal{back}$.}
    \label{fig:backvecM1heatmap_app}
\end{figure}

\begin{figure}
    \centering
    \includegraphics[width=0.8\textwidth]{backvecM2N100heatmap}
    \caption{Time evolution of the position probability density for the approximate $M=2$ backflow maximizing state $\chi^{(2)}_\textnormal{back}$.}
    \label{fig:backvecM2heatmap_app}
\end{figure}

\begin{figure}
    \centering
    \includegraphics[width=0.8\textwidth]{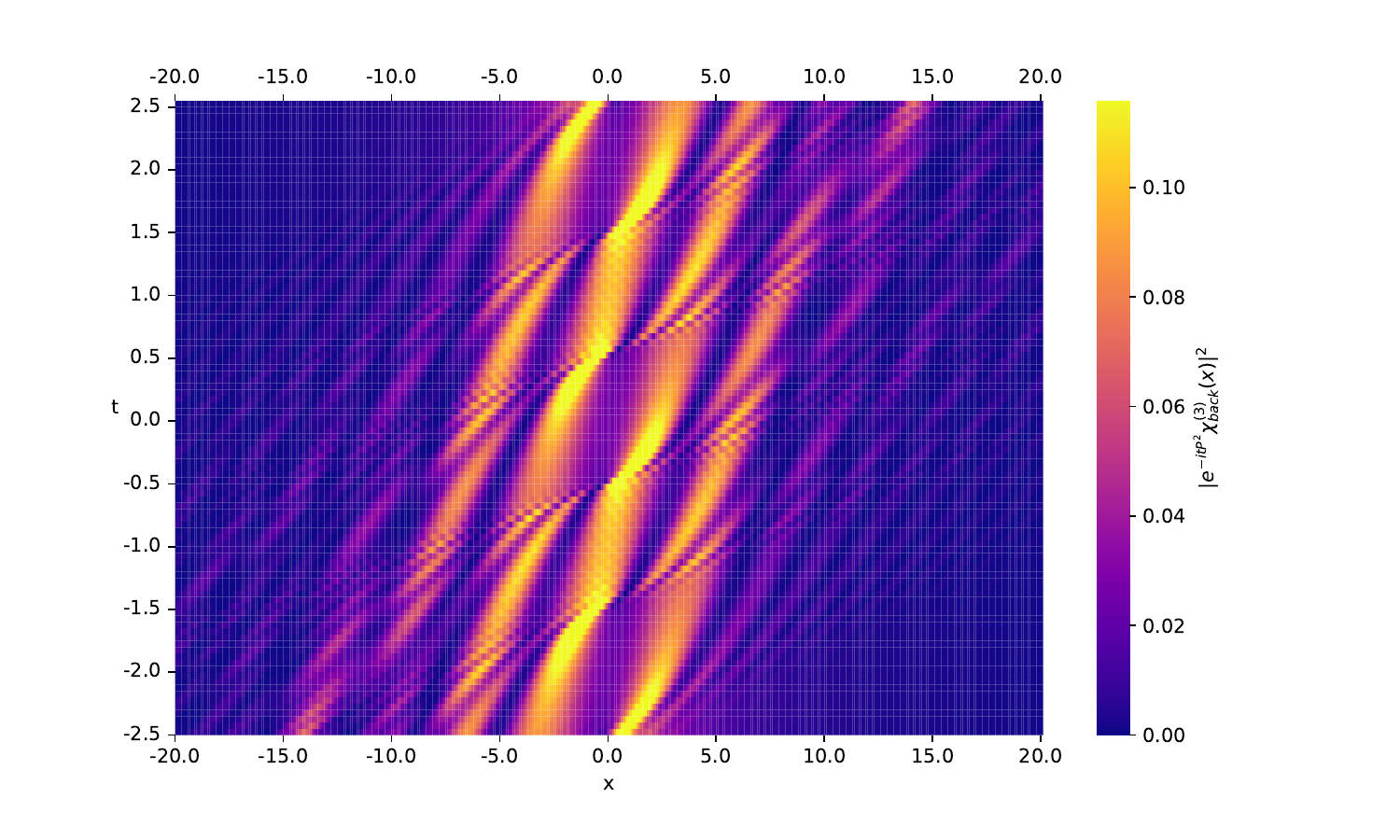}
    \caption{Time evolution of the position probability density for the approximate $M=3$ backflow maximizing state $\chi^{(3)}_\textnormal{back}$.}
    \label{fig:backvecM3heatmap_app}
\end{figure}

\begin{figure}
    \centering
    \includegraphics[width=0.8\textwidth]{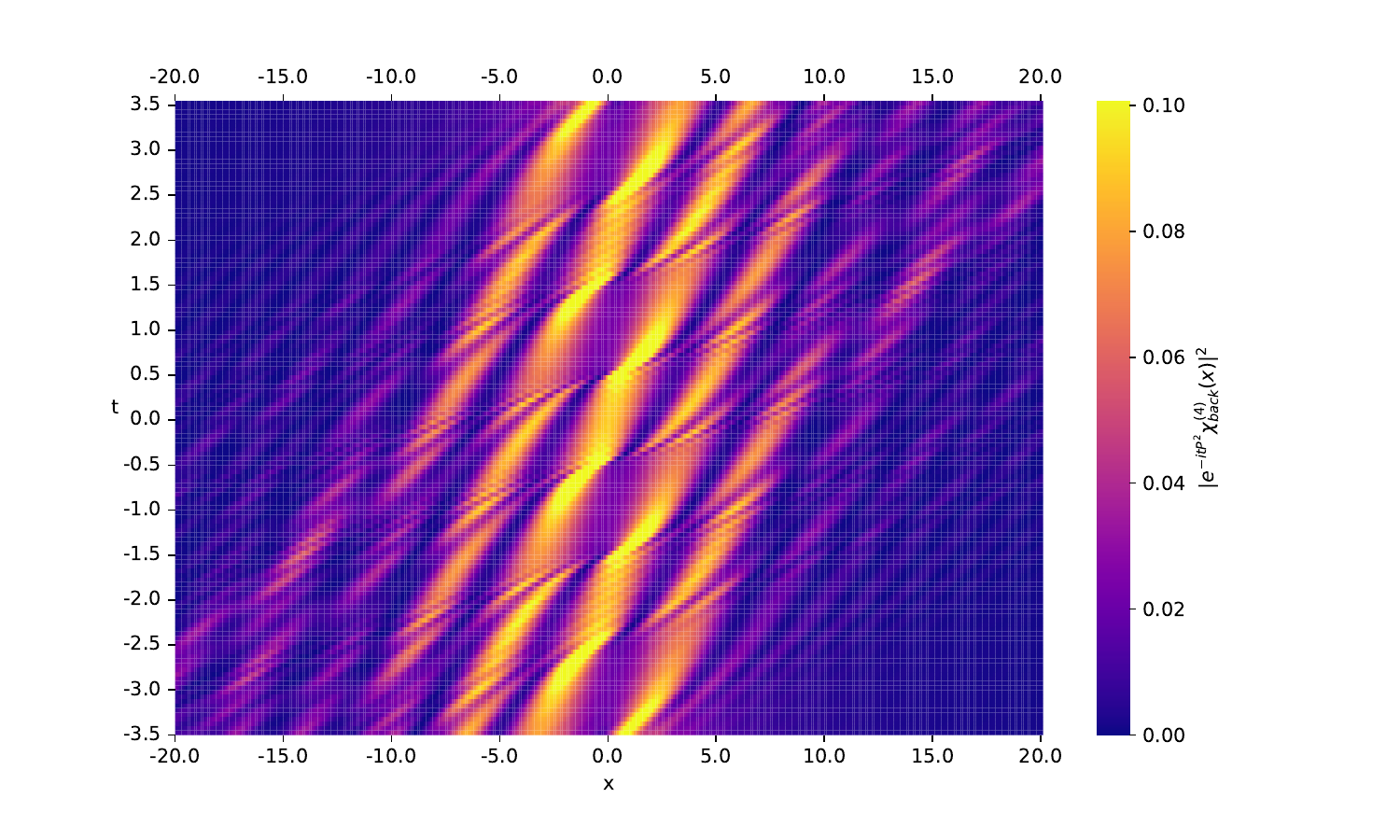}
    \caption{Time evolution of the position probability density for the approximate $M=4$ backflow maximizing state $\chi^{(4)}_\textnormal{back}$.}
    \label{fig:backvecM4heatmap_app}
\end{figure}

\begin{figure}
    \centering
    \includegraphics[width=0.8\textwidth]{overvecM2N100heatmap}
    \caption{Time evolution of the position probability density for the approximate $M=2$ overflow maximizing state $\chi^{(2)}_\textnormal{over}$.}
    \label{fig:overvecM2heatmap_app}
\end{figure}

\begin{figure}
    \centering
    \includegraphics[width=0.8\textwidth]{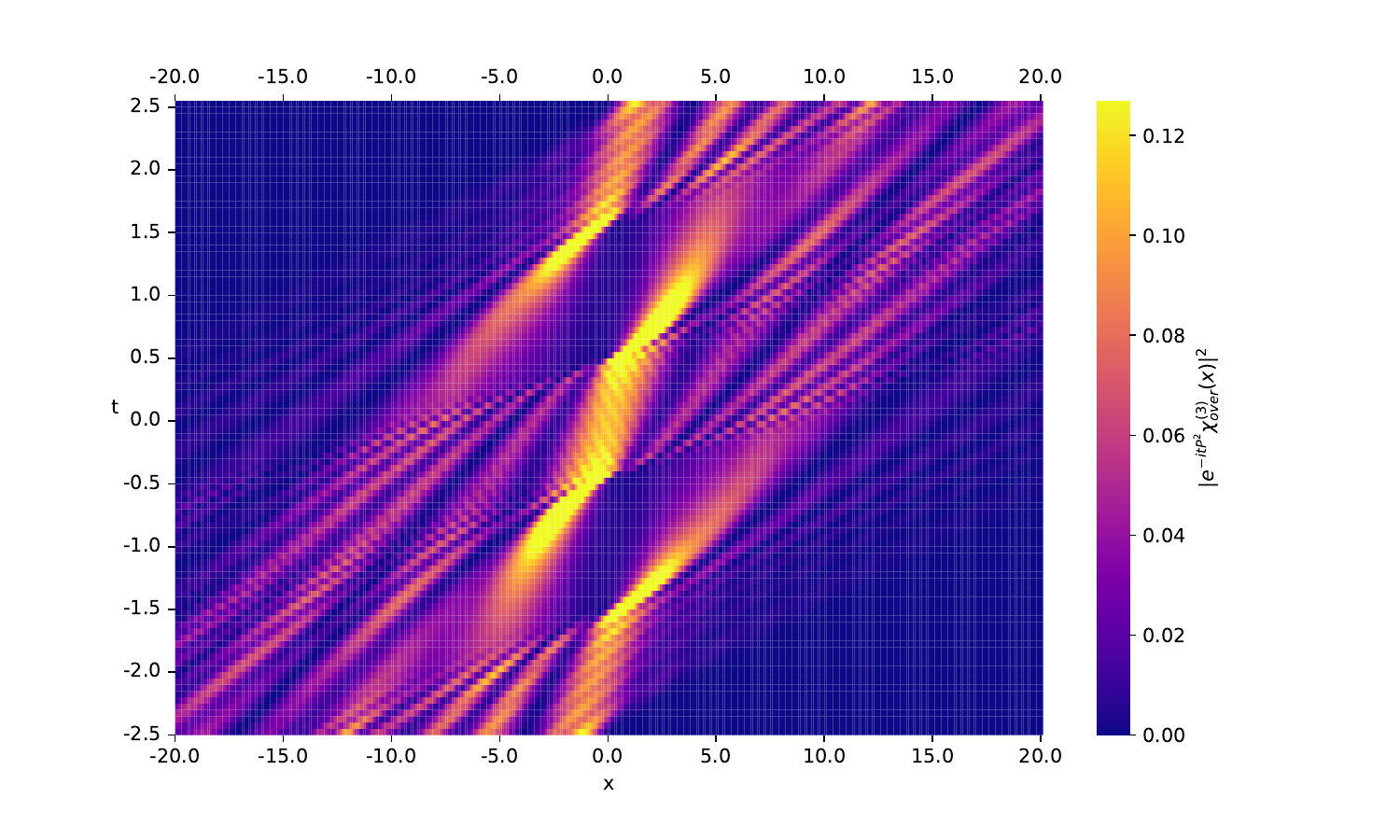}
    \caption{Time evolution of the position probability density for the approximate $M=3$ overflow maximizing state $\chi^{(3)}_\textnormal{over}$.}
    \label{fig:overvecM3heatmap_app}
\end{figure}

\begin{figure}
    \centering
    \includegraphics[width=0.8\textwidth]{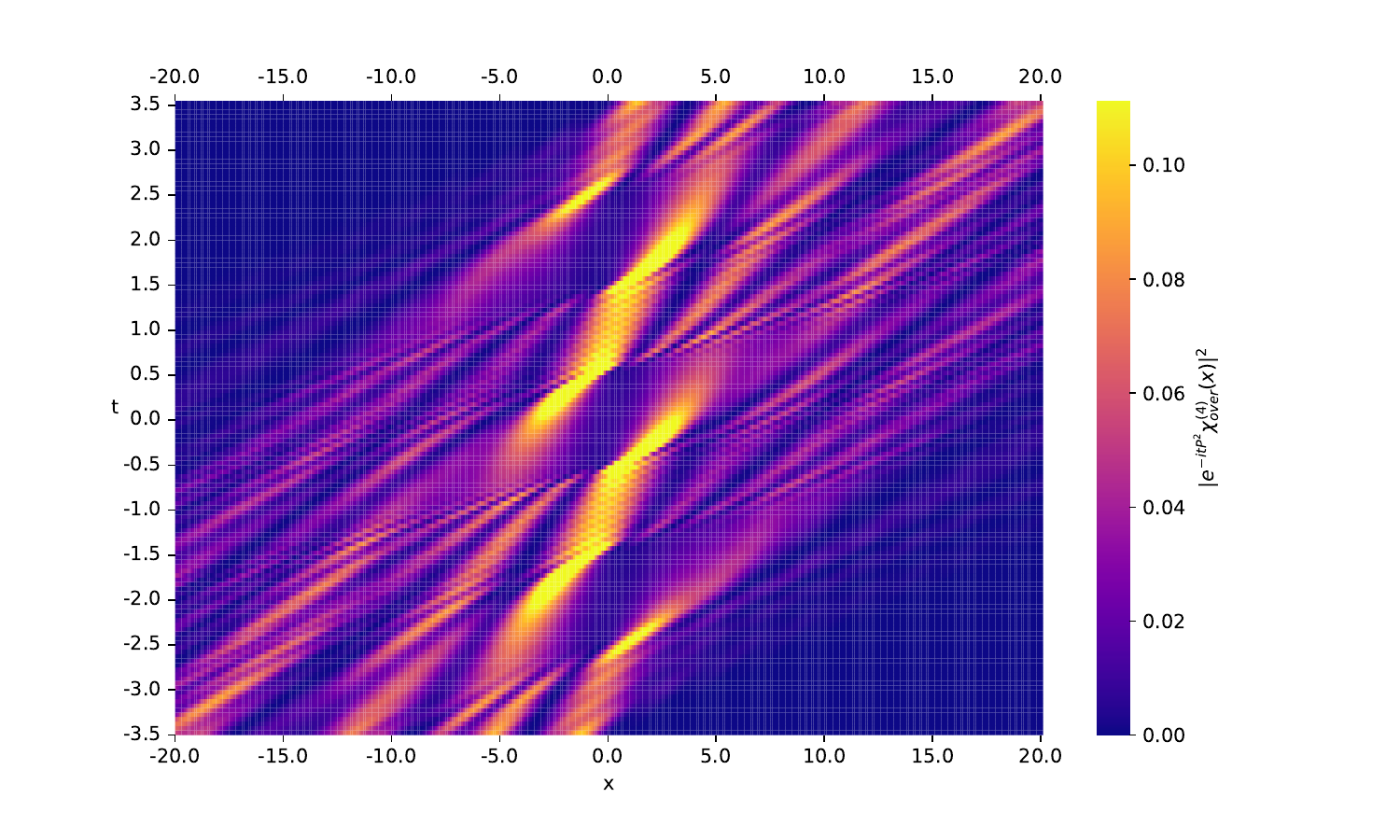}
    \caption{Time evolution of the position probability density for the approximate $M=4$ overflow maximizing state $\chi^{(4)}_\textnormal{over}$.}
    \label{fig:overvecM4heatmap_app}
\end{figure}

From the $\chi^{(M)}_\textnormal{back/over}$ we obtain the time-zero probability flux according to the standard formula
\begin{equation}
    j_\chi(x) = 2\Im \chi^\ast (x)\chi'(x) 
\end{equation}
where the factor of $2$ is a result of the choice of mass units.  Figures~\ref{fig:fluxbackflowM1_4J} and~\ref{fig:fluxoverflowM2_4J} show the probability flux for the backflow/overflow vectors associated with $M$ intervals of backflow/overflow. Specifically the intervals are given by
\begin{gather}
    [-0.5,0.5]\\
    [-1.5,-0.5],[0.5,1.5]\\
    [-2.5,-1.5],[-0.5,0.5],[1.5,2.5]\\
    [-3.5,-2.5],[-1.5,-0.5],[0.5,1.5],[2.5,3.5]
\end{gather}
respectively, for $M=1,2,3,4$. 

\begin{figure}
    \centering
    \includegraphics[width=\textwidth]{BackvecsM1_4J.pdf}
    \caption{Probability flux at $x=0$ of the backflow states for $M=1,\dots,4$}
    \label{fig:fluxbackflowM1_4J}
\end{figure}

\begin{figure}[h]
    \centering
    \includegraphics[width=\textwidth]{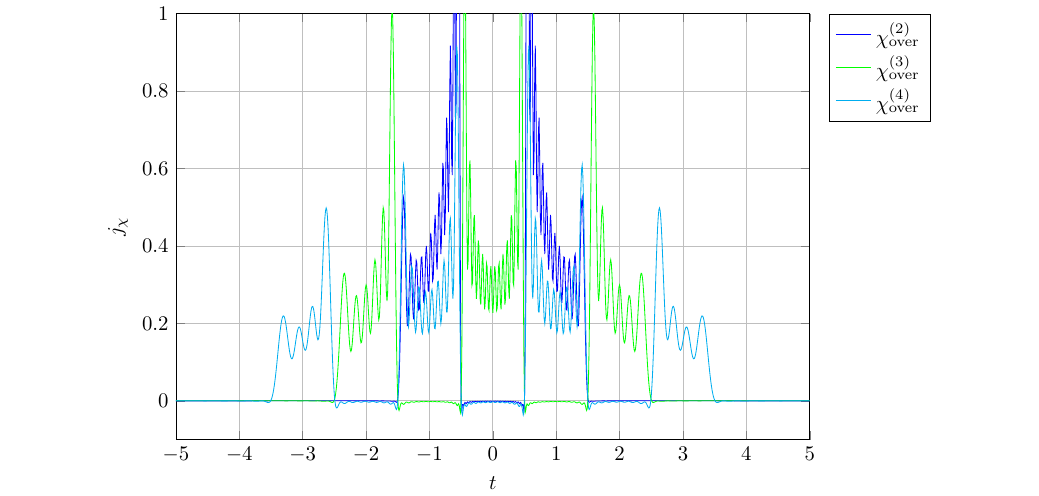}
    \caption{Probability flux at $x=0$ of the overflow states for $M=1,\dots,4$}
    \label{fig:fluxoverflowM2_4J}
\end{figure}
\end{document}